\documentclass[11pt,english]{article}

\usepackage[margin=1in]{geometry}
\usepackage{bbm}
\usepackage{graphicx,color}
\usepackage{amsmath,amssymb,amsthm}
\usepackage{enumerate}
\usepackage{enumitem}
\usepackage{fullpage}
\usepackage{algorithm}
\usepackage{mathrsfs}
\usepackage[noend]{algcompatible}
\usepackage[noblocks]{authblk}
\usepackage{varwidth}
\usepackage{mathrsfs}
\usepackage{babel}
\usepackage{accents}
\usepackage{mathtools}
\usepackage{tipa}
\usepackage[most]{tcolorbox}
\usepackage{wrapfig}
\usepackage{blindtext}
\usepackage{blindtext}
\usepackage{thm-restate}
\usepackage{longtable}
\usepackage[backref=page, pagebackref=true]{hyperref}
\usepackage[capitalize]{cleveref} 
\Crefname{enumtry}{try1}{}   

\usepackage{booktabs} 

\newtheorem{question}{Question}

%
%
%
%
%

%
%

\def\denseformat{
\setlength{\textheight}{9in}
\setlength{\textwidth}{6.9in}
\setlength{\evensidemargin}{-0.2in}
\setlength{\oddsidemargin}{-0.2in}
\setlength{\headsep}{10pt}
\setlength{\topmargin}{-0.3in}
\setlength{\columnsep}{0.375in}
\setlength{\itemsep}{0pt}
}





%
%
%
%
%
%
\newtheorem{theorem}{Theorem}[section]

\newtheorem{definition}[theorem]{Definition}
\newtheorem{claim}[theorem]{Claim}
\newtheorem{lemma}[theorem]{Lemma}

\newtheorem{observation}[theorem]{Observation}


%
%

\def\boldhead#1:{\par\vskip 7pt\noindent{\bf #1:}\hskip 10pt}
\def\ithead#1:{\par\vskip 7pt\noindent{\it #1:}\hskip 10pt}

\def\inline#1:{\par\vskip 7pt\noindent{\bf #1:}\hskip 10pt}
\def\midinline#1:{\par\noindent{\bf #1:}\hskip 10pt}
\def\dnsinline#1:{\par\vskip -7pt\noindent{\bf #1:}\hskip 10pt}
\def\ddnsinline#1:{\newline{\bf #1:}\hskip 10pt}
\def\largeinline#1:{\par\vskip 7pt\noindent{\large\bf #1:}\hskip 10pt}
%

\long\def\commhide #1\commhideend{}
\long\def\commfull #1\commend{#1}
\long\def\commabs #1\commenda{}
\long\def\commtim #1\commendt{#1}
\long\def\commb #1\commbend{}
%
%
\long\def\commedit #1\commeditend{} 

\long\def\commB #1\commBend{}       

\long\def\commex #1\commexend{}     

\long\def\commsiena #1\commsienaend{}  

\long\def\commBI #1\commBIend{}  


\long\def\CProof #1\CQED{}
\def\qed{\mbox{}\hfill $\Box$\\}
\def\blackslug{\hbox{\hskip 1pt \vrule width 4pt height 8pt
    depth 1.5pt \hskip 1pt}}
\def\QED{\quad\blackslug\lower 8.5pt\null\par}


\long\def\PPP#1{\noindent{\bf Proof:}{ #1}{\quad\blackslug\lower 8.5pt\null}}

\long\def\denspar #1\densend
{#1}

%
%


%
%

\setlength{\marginparwidth}{1in}
\setlength{\marginparpush}{-5ex}
\newif\ifnotesw\noteswtrue
   {\ifnotesw\marginpar[\hfill\(\top\)]{\(\top\)}\fi}%
   {\ifnotesw\marginpar[\hfill\(\bot\)]{\(\bot\)}\fi}

\newcommand{\mnote}[1]%
    {\ifnotesw\marginpar%
        [{\scriptsize\it\begin{minipage}[t]{\marginparwidth}
        \raggedleft#1%
                        \end{minipage}}]%
        {\scriptsize\it\begin{minipage}[t]{\marginparwidth}
        \raggedright#1%
                        \end{minipage}}%
    \fi}

%
%






%
%
%
%
\def\MathF{\hbox{\rm I\kern-2pt F}}
\def\MathP{\hbox{\rm I\kern-2pt P}}
\def\MathR{\hbox{\rm I\kern-2pt R}}
\def\MathZ{\hbox{\sf Z\kern-4pt Z}}
\def\MathN{\hbox{\rm I\kern-2pt I\kern-3.1pt N}}
\def\MathC{\hbox{\rm \kern0.7pt\raise0.8pt\hbox{\footnotesize I}
\kern-4.2pt C}}
\def\MathQ{\hbox{\rm I\kern-6pt Q}}




%
%


\newsavebox{\ttop}\newsavebox{\bbot}

\def\mod{\pmod}

\def\eps{\epsilon}
\def\epsi{\varepsilon}

\def\nin{{~\not \in~}}

%
%




\newcommand{\mst}{\texttt{MST}}
\newcommand{\msttilde}{\widetilde{\mst}}

\newcommand{\poly}{\mathsf{poly}}

\newcommand{\Ftilde}{\widetilde{F}}
\newcommand{\Ttilde}{\widetilde{T}}
\newcommand{\Ptilde}{\widetilde{P}}
\newcommand{\Qtilde}{\widetilde{Q}}

\newcommand{\wrdram}{\textsf{Transdichotomous~}}
\newcommand{\rdrm}{\textsf{Word RAM~}}
\newcommand{\djset}{\textsc{Union-Find}}
\newcommand{\union}{\textsc{Union}}
\newcommand{\find}{\textsc{Find}}
\newcommand{\link}{\textsc{Link}}

\newcommand{\dm}{\mathsf{Dm}}
\newcommand{\calmst}{\mathcal{MST}}

\newcommand{\light}{\text{light}}
\newcommand{\heavy}{\text{heavy}}
\newcommand{\prune}{\mathsf{pruned}}
\newcommand{\adm}{\mathsf{Adm}}
\newcommand{\high}{\mathsf{high}}
\newcommand{\lowp}{\mathsf{low}^+}
\newcommand{\lowm}{\mathsf{low}^-}
\newcommand{\aug}{\mathsf{aug}}
\newcommand{\srt}{\mathsf{SORT}}
\newcommand{\internal}{\mathsf{intrnl}}
\newcommand{\prefix}{\mathsf{pref}}

\newcommand{\ma}{\mathcal{A}}
\newcommand{\mx}{\mathcal{X}}
\newcommand{\my}{\mathcal{Y}}
\newcommand{\mz}{\mathcal{Z}}
\newcommand{\mf}{\mathcal{F}}
\newcommand{\mv}{\mathcal{V}}
\newcommand{\mU}{\mathcal{U}}
\newcommand{\me}{\mathcal{E}}
\newcommand{\mc}{\mathcal{C}}
\newcommand{\mg}{\mathcal{G}}
\newcommand{\mbe}{\mathbf{e}}
\def\eps{\epsilon}

\DeclareMathAlphabet{\mathpzc}{OT1}{pzc}{m}{it}

\newcommand {\ignore} [1] {}

\denseformat

\DeclareMathOperator{\defi}{\overset{\mathrm{def.}}{=} }

\date{}

\title{Near-Optimal Spanners for General Graphs in (Nearly) Linear Time}
\author{Hung Le}
\affil{University of Massachusetts Amherst}
\author{Shay Solomon}
\affil{Tel Aviv University}

\begin{document}
\pagenumbering{gobble}
\maketitle
\begin{abstract}
Let $G = (V,E,w)$ be a weighted undirected graph on $|V| = n$ vertices and $|E| = m$ edges, let $k \ge 1$ be any integer, and let
$\eps < 1$ be any parameter.
We present the following results on fast constructions of spanners with {near-optimal} sparsity and lightness,\footnote{The sparsity (respectively, lightness) is a normalized notion of size (resp., weight), where we divide the size (resp., weight) 
by the size $n-1$ of a spanning tree (resp., the weight $w(\mst)$ of a minimum spanning tree $\mst$).}
which culminate a long line of work in this area.
(By {\em near-optimal} we mean optimal under Erdos' girth conjecture and disregarding the $\eps$-dependencies.)
\begin{itemize}
\item  There are (deterministic) algorithms for constructing $(2k-1)(1+\eps)$-spanners for $G$ with a near-optimal sparsity of $O(n^{1/k} \cdot \log(1/\eps)/\eps))$.
The first algorithm can be implemented in the pointer-machine model within time $O(m\alpha(m,n) \cdot \log(1/\eps)/\eps) + \srt(m))$,
where $\alpha(\cdot,\cdot)$ is the two-parameter inverse-Ackermann function and $\srt(m)$ is the time needed to sort $m$ integers. The second algorithm can be implemented  in the \rdrm model  within time $O(m \log(1/\eps)/\eps))$. 
\item There is a (deterministic)  algorithm for constructing a $(2k-1)(1+\eps)$-spanner for $G$ that achieves a near-optimal bound of 
$O(n^{1/k} \cdot \poly(1/\eps))$
on both sparsity and lightness.
This algorithm can be implemented in the pointer-machine model within time $O(m\alpha(m,n) \cdot \poly(1/\eps) + \srt(m))$
and in the \rdrm model within time $O(m \alpha(m,n) \cdot \poly(1/\eps))$.
\end{itemize}
The previous fastest constructions of $(2k-1)(1+\eps)$-spanners with near-optimal sparsity incur a runtime of
is $O(\min\{m(n^{1+1/k}) + n\log n,k \cdot n^{2+1/k}\})$, even regardless of the lightness. 
Importantly, the {\em greedy spanner} for stretch $2k-1$ has sparsity $O(n^{1/k})$ --- with no $\eps$-dependence whatsoever,
but its runtime is $O(m(n^{1+1/k} + n\log n))$. Moreover, the   state-of-the-art lightness bound of any $(2k-1)$-spanner (including the greedy spanner) is poor, even regardless of the sparsity and runtime.
\end{abstract}

\pagebreak

\tableofcontents

\pagenumbering{arabic}

\clearpage
\section{Introduction}

Let $G = (V,E,w)$ be a weighted undirected graph on $|V| = n$ vertices and $|E| = m$ edges.
We say that $H$ is a $t$-spanner for $G$, for a parameter $t \ge 1$, if $H$ preserves all pairwise distances of $G$ to within a factor of $t$;
the parameter $t$ is called the \emph{stretch} of the spanner.
(A more detailed definition appears in~\Cref{sec:prelim}.) 
The most basic requirement from a low-stretch spanner is to be {\em sparse}, i.e., of small size;
the normalized notion of size, {\em sparsity}, is the ratio of the spanner size to the size $n-1$ of a spanning tree.
A generalized requirement is to have a small {\em weight}; the {\em weight} of a spanner is the sum of its edge weights, and the normalized notion of weight, {\em lightness}, 
is the ratio of the spanner weight to the weight $w(\mst(G))$ of a minimum spanning tree $\mst(G)$ for $G$.

Sparse and light spanners have been studied extensively over the years, and have found a wide variety of applications across different areas, from distributed computing and motion planning to  computational biology and machine learning.
As prime examples, they have been used in achieving efficient broadcast protocols~\cite{ABP90,ABP91}, for synchronizing networks and computing global functions \cite{Awerbuch85,PU89,Peleg00}, in gathering and disseminating data~\cite{BKRCV02,VWFME03,KV01}, and to routing~\cite{WCT02,PU89b,DBLP:conf/stoc/AwerbuchBLP89,TZ01}.

The holy grail is to achieve optimal tradeoffs between stretch and sparsity and between stretch and lightness, within a small running time.
For unweighted graphs, this goal has been achieved already in the mid 90s, via a simple yet clever clustering approach due to Halperin and Zwick~\cite{HZ96}: A linear-time construction of $(2k-1)$-spanners with the optimal (under Erdos' girth conjecture \cite{Erdos64}) {sparsity} of $O(n^{1/k})$;
we note that, for unweighted graphs, the sparsity and lightness parameters coincide.

The fundamental question underlying this work is whether one can achieve this goal in general weighted graphs. Chechik and Wulff-Nilsen \cite{CW16} gave a poly-time construction of $(2k-1)(1+\eps)$-spanners with 
a {\em near-optimal} bound of $O(n^{1/k} \cdot \poly(1/\eps))$ on both sparsity and lightness;
by {\em near-optimal} we mean optimal under Erd\H{o}s' girth conjecture and disregarding the $\eps$-dependencies. Although the runtime of the construction of \cite{CW16} is polynomial, it is far from linear.
Is it possible to achieve a fast --- ideally linear time --- spanner construction with the same guarantees?
This question is open even disregarding the lightness: All known spanner constructions with near-optimal sparsity incur a rather high runtime.

Next, we survey the main results on spanners for general graphs, starting with sparse spanners and proceeding to light spanners.
Subsequently, we present our contribution. 

\paragraph{Sparse spanners.~}
Graph spanners were introduced in the late 80s \cite{PS89,PU89}; initially, the focus was on the stretch-sparsity tradeoff.
For unweighted graphs, the aforementioned construction of \cite{HZ96} gives an optimal result.
We shall henceforth consider general $n$-vertex $m$-edge weighted graphs. 
The ``greedy spanner'' is perhaps the most basic spanner construction, introduced in the seminal work of
Alth\"{o}fer et al.~\cite{ADDJS93}.  
For any integer parameter $k \ge 1$, it provides a $(2k-1)$-spanner with sparsity $O(n^{1/k})$. On the negative side, the running time of the greedy spanner is rather high, namely $O(m(n^{1+1/k} + n \log n))$. 

The celebrated paper of Baswana and Sen \cite{DBLP:conf/icalp/BaswanaS03} presents a randomized algorithm for constructing $(2k-1)$-spanners with  sparsity $O(n^{1/k} \cdot k)$, within time $O(m \cdot k)$.
Roditty, Thorup and Zwick \cite{roditty2005deterministic} derandomized the Baswana-Sen \cite{DBLP:conf/icalp/BaswanaS03} algorithm, without any loss in parameters.
This result is optimal except for an extra factor of $k$ that appears in both the spanner size and the runtime bound.

Building on Miller et al.~\cite{MPVX15}, Elkin and Neiman \cite{EN18} gave a randomized algorithm for constructing $(2k-1)(1+\eps)$-spanners with sparsity $O(n^{1/k} \cdot \log k \cdot \log(1/\eps)/\eps)$, within time $O(m)$, for any $\eps < 1$; 
in fact, their runtime analysis overlooks the time consumed by a certain bucketing procedure, which, at least naively, requires $\Omega(\srt(m))$ time, where $\srt(m)$ is the time needed to sort $m$ integers. 
Alstrup et al. \cite{ADFSW19} achieved a deterministic algorithm with the same guarantees; we note that time $\Omega(\srt(m))$ is also needed by the construction of \cite{ADFSW19} for the same reason. 
These results demonstrate that by incurring an arbitrarily small multiplicative error of $1+\eps$ to the stretch bound, one can achieve, within linear time (modulo the overlooked time needed for integer sorting), a near-optimal sparsity bound, except for an extra $\log k$ factor. Additional results are summarized in \Cref{tab:spanners}.
 
\begin{table}[h]
	\small
	\centering
	\mbox{
		\begin{tabular}{c|c|c|c|c}
			\toprule
			\bf Stretch & \bf Sparsity & \bf Lightness & \bf Construction Time & \bf Ref\\
			\midrule
			$(2k-1)$&   $O \left(n^{1/k} \right)$ & $O\left(n/k\right)$ & $O \left(mn^{1+1/k} + n\log n\right)$ & \cite{ADDJS93}\\
			$(2k-1)(1+\eps)$&   $O \left(n^{1/k} \right)$ & $O \left( k
			n^{1/k}\right)$ & $O \left(mn^{1+1/k}  + n\log n \right)$ & \cite{CDNS92}\\
			$(2k-1)$ & $O(n^{1/k})$ & --- & $O \left( kn^{2+1/k} \right)$ & \cite{RZ11}\\	

			$(2k-1)$&   $O \left(k \cdot n^{1/k} \right)$ & --- & $O \left(kmn^{1/k}\right)$ & \cite{TZ01b}{$^{R}$}\\			
			$(2k-1)(1+\eps)$&   $O \left(n^{1/k} \right)$ & $O \left(kn^{1/k}\right)$ & $O \left( kn^{2+1/k} \right)$ & \cite{ES16}
			\\								
			$(2k-1)(1+\eps)$&   $O \left(n^{1/k} \right)$ & $O \left(n^{1/k}
			\cdot k/\log k \right)$ & $O \left(mn^{1+1/k}  + n\log n \right)$ & \cite{DBLP:journals/siamdm/ElkinNS15}$$\\
			$(2k-1)(1+\eps)$&   $O \left(n^{1/k} \right)$ & $O\left(n^{1/k}\right)$  & $n^{\Theta(1)}$ & \cite{CW18}\\
			$(2k-1)(1+\eps)$&   $O \left(n^{1/k} \right)$ & $O \left(n^{1/k}\right)$ & $O \left(mn^{1+1/k}  + n\log n\right)$ & \cite{FS20}\\
            $(2k-1)(1+\eps)$ & $O\left(n^{1/k}\right)$ &
            $O\left(n^{1/k}\right)$ &
            $O(n^{2+1/k+\eps'})$ & \cite{ADFSW19}\\
			\midrule
			$(2k-1)$&   $O \left(k \cdot n^{1/k} \right)$ & --- & $O \left(km\right)$ & \cite{baswana2007simple}{$^{R}$} \cite{RTZ05}\\     
			$(2k-1)(1+\eps)$&   $O \left(k \cdot n^{1/k} \right)$ & $O \left(kn^{1/k}\right)$ & $O \left(\srt(m)+km + n \log n \right)$ & \cite{ES16}\\	
			$(2k-1)(1+\eps)$ &  $O(\log k \cdot n^{1/k})$ &  $O \left(k \cdot n^{1+1/k} \right)$ & $O(\srt(m)+n \cdot \log n)$ & \cite{EN18}{$^{R}$}\\				  			 
			$(2k-1)(1+\eps)$ & $O \left(\log k\cdot n^{1/k} \right)$ & --- & $O(\srt(m))$ & \cite{EN18,ADFSW19}\\			
			$(2k-1)(1+\eps)$ & $O \left(\log k \cdot n^{1/k} \right)$ & $O \left( \log k \cdot n^{1/k}  \right)$ & $O(\srt(m) + n \cdot\log n)$ & \cite{ADFSW19}\\

			\midrule
			\midrule
$(2k-1)(1+\eps)$& $O(n^{1/k})$& --- & $O(m\alpha(m, n)+ \srt(m))$ & \Cref{thm:1}$\text{ }^{P}$\\
$(2k-1)(1+\eps)$ & $O(n^{1/k})$ & --- &  $O(m)$ & \Cref{thm:1}$\text{ }^{W}$\\
$(2k-1)(1+\eps)$& $O(n^{1/k})$ &$O(n^{1/k})$ & $O(m\alpha(m, n) + \srt(m))$& \Cref{thm:2}$\text{ }^{P}$\\
$(2k-1)(1+\eps)$& $O(n^{1/k})$ & $O(n^{1/k})$& $O(m\alpha(m, n))$& \Cref{thm:2}$\text{ }^{W}$\\
\bottomrule
		\end{tabular}
	}
\caption{Table summarizing known and new spanner constructions for general weighted graphs, for stretch values of $2k-1$ and $(2k-1)(1+\eps)$. In the top and middle parts of the table we list rather slow and fast known spanner constructions, respectively.
Our new results appear at the bottom.  Results marked with { $^{R}$} correspond to randomized constructions. We use the superscript marks $\text{}^{P}$  and $\text{}^{W}$ to distinguish between the new algorithms 
that apply to the pointer-machine   versus the \rdrm models, respectively.}
\label{tab:spanners}
\end{table}

As shown in \Cref{tab:spanners}, the previous state-of-the-art runtime for constructing $(2k-1)(1+\eps)$-spanners with near-optimal sparsity
is $O(\min\{m(n^{1+1/k}) + n \log n,k \cdot n^{2+1/k}\})$, even regardless of the lightness.

\begin{question} \label{q1}
Can one achieve a (nearly) linear time spanner construction with near-optimal sparsity?  
\end{question}

We answer~\Cref{q1} in the affirmative by presenting two algorithms for constructing spanners with near-optimal sparsity in near-linear time. Specifically, we prove the following result. (Refer to
 \Cref{tab:spanners} for a detailed comparison between our and previous results.)
\begin{theorem}\label{thm:1}
For any weighted undirected $n$-vertex $m$-edge graph $G$, any integer $k \ge 1$ and any $\eps < 1$, one can deterministically construct  $(2k-1)(1+\eps)$-spanners with a near-optimal sparsity of $O(n^{1/k} \cdot \log(1/\eps)/\eps)$. This construction can be implemented: 
\begin{itemize}
\item In the pointer-machine model within time $O(m\alpha(m,n) \cdot \log(1/\eps)/\eps + \srt(m))$.\footnote{In the \emph{pointer machine model}, one can perform binary comparisons between data, arithmetic operations on data, dereferencing of pointers, and equality tests on pointers. The model does not permit pointer arithmetic or tests other than equality on pointers and thus is less powerful than the RAM model \cite{DBLP:journals/jcss/Tarjan79}.}

\item In the \rdrm model 
within time $O(m \log(1/\eps)/\eps)$.\footnote{The \rdrm model is similar to the classic unit-cost RAM model, except that (1) For a word length $w \ge 1$ the contents of all memory cells are integers up to $2^w$. (2) Some additional instructions are available; in particular, the available unit-time operations are those from the \emph{restricted instruction set}: addition and subtraction, (noncyclic) bit shifts by an arbitrary number of positions, and bitwise boolean operations, but not multiplication. (3) It is also assumed that $w \ge \log{n}$.
We note that if the running time of the algorithm depends on the input size but not on the word size,
then the model is further called \wrdram model; the running time of our algorithms do not depend on the word size.} 
\end{itemize}
\end{theorem}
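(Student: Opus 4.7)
The plan is to adapt the Baswana--Sen hierarchical clustering framework, exploiting the $(1+\eps)$ slack in the stretch via a weight-bucketing scheme, so as to eliminate both the extra $k$ factor of BS and the extra $\log k$ factor of the Elkin--Neiman and Alstrup et al.\ variants. The key conceptual observation is that within a single \emph{weight class} — the set of edges whose weights lie in an interval $[w,(1+\eps)w)$ — all edges behave as if unweighted up to a $(1+\eps)$ factor. A Halperin--Zwick-style clustering applied to each class therefore adds only $O(n^{1/k})$ spanner edges per class, by the standard girth-based Moore bound, and summing over the $O(\log(1/\eps)/\eps)$ weight classes that lie in the ``relevant range'' produces the target sparsity $O(n^{1/k}\cdot\log(1/\eps)/\eps)$.

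Concretely, the algorithm would proceed in three stages. First, sort the $m$ edges by weight — paying $\srt(m)$ in the pointer-machine model and $O(m)$ in the \rdrm model via radix sort on the integer word representations — and partition the edges into geometric weight classes of ratio $1+\eps$, with classes at different scales handled in a unified pass. Second, for each class run a derandomized BS-style clustering of $k$ levels, maintaining cluster identifiers using a $\djset$ data structure in the pointer-machine model (hence the $\alpha(m,n)$ factor) and using bit-packed arrays in the \rdrm model (which removes the $\alpha$ factor). Third, argue that the resulting edge set is a $(2k-1)(1+\eps)$-spanner: for every non-spanner edge $(u,v)$ of class $i$, exhibit a $(2k-1)$-hop BS path inside the cluster hierarchy of class $i$, each hop of weight $\le (1+\eps)\cdot w(u,v)$, yielding total stretch $(2k-1)(1+\eps)$.

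The main technical obstacle is ensuring that the per-class work aggregates to near-linear total time across all weight classes: each class's construction must be implementable in time (nearly) linear in that class's own edge count, so that the total runtime comes out to $O(m\alpha(m,n)\cdot\log(1/\eps)/\eps + \srt(m))$ in the pointer-machine model and $O(m\cdot\log(1/\eps)/\eps)$ in the \rdrm model. This forces a \emph{local} implementation of the BS-like clustering, in which per-vertex work is bounded by the vertex's within-class degree rather than its total degree, and a corresponding derandomization — in the spirit of Roditty--Thorup--Zwick — that preserves this locality. A subtler point is that the $(2k-1)$-hop witness path used in the stretch analysis must lie entirely within a single weight class; hence the cluster adjacency structure in each class must be defined purely in terms of class-$i$ edges (a ``per-class residual graph'' view), which is what enables the hop-by-hop $(1+\eps)$ bound and thus the claimed near-optimal sparsity.
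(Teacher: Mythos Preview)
Your sparsity accounting has a fundamental gap. Partitioning by $(1+\eps)$-ratio weight classes produces $\Theta(\log_{1+\eps} W) = \Theta(\log(W)/\eps)$ classes, where $W$ is the aspect ratio of the edge weights --- not $O(\log(1/\eps)/\eps)$. Your phrase ``weight classes that lie in the relevant range'' is never defined, and no such range exists: every class may contain edges that need a spanner witness. Running an independent Halperin--Zwick / BS construction on each class and summing $O(n^{1+1/k})$ over all classes therefore yields $O(n^{1+1/k}\cdot \log(W)/\eps)$ edges, which is unbounded in $W$. The prior constructions (MPVX, Elkin--Neiman, Alstrup et al.) do not process weight classes independently for exactly this reason; the $\log k$ they pay arises from a genuinely cross-scale mechanism, not from a naive sum over classes.

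What the paper actually does is a \emph{two-tier} partition: first into $\mu_\eps = O(\log(1/\eps)/\eps)$ groups $E^\sigma$ (by residue class of $\lfloor \log_{1+\eps} w(e)\rfloor$ modulo $\log_{1+\eps}(1/\eps)$), so that inside each $E^\sigma$ the weight bands $E^\sigma_i$ are separated by a factor $1/\eps$. Then, for a fixed $\sigma$, a single hierarchical clustering is maintained \emph{across all bands simultaneously}: at band $i$ one runs Halperin--Zwick on the cluster graph $R_i$, and crucially the level-$(i{+}1)$ clusters are obtained by merging level-$i$ clusters so that $|V(R_i)| \le 2(|\mathcal{C}_i|-|\mathcal{C}_{i+1}|)$. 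The edge count then telescopes: $\sum_i |V(R_i)|^{1+1/k} \le O(n^{1/k})\sum_i (|\mathcal{C}_i|-|\mathcal{C}_{i+1}|) \le O(n^{1+1/k})$. This telescoping against the cluster count --- not any bound on the number of weight classes --- is what delivers $O(n^{1+1/k})$ edges per $E^\sigma$; your proposal has no analogue of it. Relatedly, your claim that the $(2k-1)$-hop witness ``must lie entirely within a single weight class'' is not how the stretch argument works: the witness path alternates between the $\le 2k-1$ source edges from band $i$ and intra-cluster subpaths of diameter $\le g L_{i-1} = g\eps L_i$ built from \emph{lower} bands, which is precisely why the $1/\eps$ inter-band separation is needed.

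For the \rdrm bound, ``bit-packed arrays'' is not the mechanism and would not by itself bypass the $\alpha(m,n)$ lower bound for generic \djset. The paper's idea is structural: force every cluster to be a subtree of a fixed $\mst$ of $G$, so that every \textsc{Union} is along an $\mst$ edge. Since the $\mst$ is computed up front (Fredman--Willard, linear time in \rdrm), the union tree is known in advance, and one may invoke Gabow--Tarjan's \djset\ with $O(1)$ amortized cost per operation. Proving that $\mst$-restricted clusters still have the required diameter is nontrivial and hinges on the cycle property of the $\mst$; this entire ingredient is absent from your plan.
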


We remark that $\alpha(m,n) = O(1)$ when $m = \Omega(n\log^{*}n)$. In fact, $\alpha(m,n) = O(1)$ even when $m = \Omega(n\log^{*(c)}n)$ for any constant $c$, where $\log^{*(\ell)}(.)$ denotes the iterated log-star function with $\ell$ stars; that is, $O(m\alpha(m,n))$ is bounded by $O(m  + n\log^{*(c)}{n})$ for any constant $c$.
Thus the running time in the first item of~\Cref{thm:1} is linear in $m$ in almost the entire regime of graph densities, i.e., except for very sparse graphs.
Moreover, even when $\alpha(m,n)$ is super-linear, it can still be viewed as constant for most practical purposes. 
However, there is a significant qualitative difference between truly linear-time and nearly linear-time algorithms,
and shaving this factor for the entire regime of graph densities, as provided by the second item of~\Cref{thm:1}, is of fundamental theoretical importance.

The previous linear-time algorithms for constructing sparse spanners in general weighted graphs 
\cite{MPVX15,EN18,ADFSW19} achieve a sub-optimal sparsity bound of $O(n^{1/k} \cdot \log k \cdot \log(1/\eps)/\eps)$, and, as mentioned, their runtime is actually $O(\srt(m))$.
Moreover, these constructions, as well as all other spanner constructions with runtime $o(k m)$ (including ours), use a hierarchical clustering approach that involves constructing a so-called {\em cluster graph} in each level of the hierarchy. Importantly, the cluster graph is a simple graph (without self loops and parallel edges), and all the previous works either overlooked the time needed to guarantee that the cluster graph is simple or they included an extra factor of $\alpha(m,n)$ in the runtime bound --- due to the usage of the classic 
\djset~data structure~\cite{Tarjan75}.  
We demonstrate that this factor can be shaved via a novel clustering approach, 
which we name {\em MST-clustering}; refer to \Cref{tech} for a discussion on the technical details.

\paragraph{Light spanners.~}
Like sparsity, the lightness of spanners has been extremely well-studied. Alth\"{o}fer et al.~\cite{ADDJS93} showed that the lightness of the greedy $(2k-1)$-spanner is $O(n/k)$. 
Despite extensive research, the state-of-the-art lightness bound of any known $(2k-1)$-spanner construction (including the greedy spanner) remains poor, even regardless of the sparsity and runtime. It is thus only natural to explore the lightness bound for a slightly increased stretch of $(2k-1)(1+\eps)$, where $\eps < 1$ is an arbitrarily small parameter of our choice.
Chandra et al.~\cite{CDNS92} showed that the greedy $(2k-1)(1+\eps)$-spanner has lightness $O(k \cdot n^{1/{k}} \cdot (1/\eps)^{2})$.
There was a sequence of works from recent years on light spanners \cite{ES16,ENS14,CW16,FS20,EN18,ADFSW19,LS21}. 
In particular, a construction of $(2k-1)(1+\eps)$-spanners with a near-optimal lightness of $O(n^{1/k} \cdot \poly(1/\eps))$ within a runtime of $O(m \alpha(m,n)$ was presented recently \cite{LS21}, where $\alpha(\cdot,\cdot)$ is the inverse-Ackermann function;
on the negative side, the sparsity of the construction of \cite{LS21} is unbounded.
As mentioned, the construction of \cite{CW16} achieves a {near-optimal} bound of $O(n^{1/k} \cdot \poly(1/\eps))$ on both sparsity and lightness, but its runtime is far from linear. The result of Filtser and Solomon~\cite{FS20} implies that the greedy spanner achieves the same bounds
as the construction of \cite{CW16}, but the runtime $O(m(n^{1+1/k} + n \log n))$ of the greedy spanner  is also rather high. 

\begin{question} \label{q2}
Can one achieve a (nearly) linear time spanner construction with a near-optimal bound on both the sparsity and lightness?
\end{question}

We answer~\Cref{q2} in the affirmative by presenting an algorithm for constructing $(2k-1)(1+\eps)$-spanners with near-optimal sparsity and lightness in near-linear time, which culminates a long line of work in this area. Specifically, we prove the following result.

\begin{theorem}\label{thm:2}
For any weighted undirected $n$-vertex $m$-edge graph $G$, any integer $k \ge 1$ and any $\eps < 1$, one can deterministically construct  $(2k-1)(1+\eps)$-spanners with a near-optimal bound of $O(n^{1/k} \cdot \poly(1/\eps))$ on both sparsity and lightness. This construction can be implemented: 
\begin{itemize}
\item In the pointer-machine model within time $O(m\alpha(m,n) \cdot \poly(1/\eps) + \srt(m))$.
\item In the \rdrm model within time $O(m \alpha(m,n) \cdot \poly(1/\eps))$.
\end{itemize}
\end{theorem}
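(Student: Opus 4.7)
The plan is to combine the near-linear-time sparse spanner construction of \Cref{thm:1} with the weight-bucketing and MST-credit machinery of \cite{LS21} (which itself builds on \cite{CDNS92,CW16,FS20}) into a single algorithm whose sparsity is inherited from \Cref{thm:1} and whose lightness is inherited from \cite{LS21}. The template, going back to Chandra et al.~\cite{CDNS92}, is to process edges in order of increasing weight in geometric weight classes $E_i = \{e \in E : w(e) \in ((1+\eps)^{i-1},(1+\eps)^{i}]\}$, build a sparse spanner at each level on a suitably \emph{contracted} cluster graph, and charge the weight of spanner edges to credits distributed along the MST.

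Concretely, I would first compute an MST $T$ of $G$ and sort the edges by weight (paying $O(\srt(m))$ once in the pointer-machine model, or truly linear time in the \rdrm model). At each level $i$, I would contract every connected component of the spanner-so-far whose MST-diameter is below an $\eps \cdot (1+\eps)^i / k$ threshold into a single super-node, obtaining a multigraph; after removing parallel and self-loop edges this yields a simple cluster graph $\bar G_i$. I would then invoke the construction of \Cref{thm:1} on $\bar G_i$ to obtain a $(2k-1)(1+\eps)$-spanner with sparsity $O(|V(\bar G_i)|^{1/k} \cdot \poly(1/\eps))$, and add its edges to the output spanner $H$. Stretch follows because any level-$i$ edge omitted from $H$ is either spanned within a single super-node (by the small MST-diameter inside the cluster) or spanned across clusters by the level-$i$ spanner, with the $(1+\eps)$ slack absorbing the geometric bucketing error. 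Sparsity follows by summing $\sum_i |V(\bar G_i)|^{1/k}$ and exploiting that the number of clusters shrinks geometrically across levels, as in \cite{LS21,CW16}. Lightness follows from the standard credit argument of \cite{LS21}: each cluster inherits credit proportional to the weight of its internal MST edges, which suffices to pay for the spanner edges added at its level, and a telescoping argument over levels yields an $O(\poly(1/\eps))$ overhead per MST edge, hence lightness $O(n^{1/k} \cdot \poly(1/\eps))$.

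The main obstacle will be maintaining the cluster graph $\bar G_i$ as a \emph{simple} graph across all levels without incurring more than one factor of $\alpha(m,n)$ in the overall runtime; naive implementations either overlook this cost (as noted after \Cref{thm:1}) or incur multiple $\alpha(m,n)$ factors through repeated use of \djset. Here I would use the \emph{MST-clustering} technique underlying \Cref{thm:1}, which drives all contractions along the precomputed MST $T$: since every cluster is an MST-subtree, duplicate cross-cluster edges can be detected by sorting endpoints against a Euler tour of $T$ once globally, contributing $O(\srt(m))$ in the pointer machine and truly linear cost in the \rdrm model. The remaining union-find operations come from the cross-level cluster re-identification inherent in the light-spanner machinery of \cite{LS21} and contribute exactly one $\alpha(m,n)$ factor. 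The remaining work — verifying that the credit constants in \cite{LS21} compose correctly with the $(1+\eps)$-bucketing slack and with the clustering of \Cref{thm:1} — is a re-analysis of the $\poly(1/\eps)$ terms and does not require new ideas. Combining these ingredients yields the stated runtimes $O(m\alpha(m,n) \cdot \poly(1/\eps) + \srt(m))$ in the pointer machine and $O(m\alpha(m,n) \cdot \poly(1/\eps))$ in the \rdrm model, completing the proof.
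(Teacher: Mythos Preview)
Your proposal has a genuine gap at the central tension between sparsity and lightness, which is exactly the obstacle the paper works hard to overcome.

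You propose to run \Cref{thm:1} as a black box on the cluster graph $\bar G_i$ at each level and argue sparsity via ``$\sum_i |V(\bar G_i)|^{1/k}$ plus geometric shrinking of clusters.'' But the lightness machinery you are invoking from \cite{LS21} subdivides $\mst$ edges with \emph{virtual vertices}, so that the level-$1$ cluster set has size $|\mathcal C_1| = O(|\tilde V|) = O(m)$, not $O(n)$. Geometric shrinking across levels then only gives $\sum_i |\mathcal C_i| = O(m)$, and the edge count becomes $O_\eps(n^{1/k})\cdot m$, which is vacuous. The paper states this obstruction explicitly (see~\Cref{eq:HsigmaLooseBound} and the sentence following it). Conversely, if you drop the virtual-vertex subdivision to keep $|\bar G_i|\le n$, the credit/potential argument of \cite{LS21} that you appeal to for lightness no longer applies as stated, and you fall back to an $O(k\,n^{1/k})$-type lightness bound rather than $O(n^{1/k})$.

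The paper's proof does \emph{not} treat \cite{LS21} as a black box. It opens up the five-step clustering of \cite{LS21}, introduces the virtual/non-virtual distinction on clusters, and proves the new Item~(4) of \Cref{lm:NewConstruction}: $|\mathcal N_i| - |\mathcal N_{i+1}| \ge |\mathcal Y_i|/2$, i.e., the number of \emph{non-virtual} clusters drops by at least half the number of non-isolated clusters. Since $|\mathcal N_1|\le n$, this telescopes to $\sum_i |\mathcal Y_i|\le 2n$ and yields the sparsity bound. Establishing Item~(4) is where the real work is: Steps~2 and~5B of the LS clustering must be redesigned so that every subgraph containing a non-isolated node contains at least two non-virtual nodes (the ``good'' property). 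This is where the MST-cycle-property transfer (\Cref{lm:cycle-Prop}) for the node-and-edge-weighted tree $\msttilde_i$ enters, and it is not a routine re-analysis of constants. Your proposal does not identify this mechanism, and the ``main obstacle'' you name --- keeping the cluster graph simple without extra $\alpha(m,n)$ factors --- is a secondary implementation issue, not the conceptual bottleneck.
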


We obtain the result of~\Cref{thm:2} by strengthening the framework of~\cite{LS21} for fast constructions of light spanners
to achieve a near-optimal bound on the sparsity as well.
To this end, we plug the ideas used in the proof of~\Cref{thm:1}, in conjunction with numerous new insights, on top of the framework of~\cite{LS21} in a highly nontrivial way. Our MST-clustering approach plays a key role not just in the proof of~\Cref{thm:1},
but also in the proof of~\Cref{thm:2};
refer to \Cref{tech} for more details.  

\subsection{Technical Highlights} \label{tech}

Our spanner construction is inspired by the constructions of~\cite{MPVX15},~\cite{EN18} and~\cite{ADFSW19}, which we briefly review next. 
All these constructions achieve a runtime of $O(m)$, modulo the time needed for sorting the edge weights; we shall elaborate on this point later. 
The construction of~\cite{MPVX15} achieves stretch $O(k)$ with sparsity $O(n^{1/k}\log(k)))$, while the two other constructions achieve the same sparsity but with a stretch of $(2k-1)(1+\eps)$. (For clarity, we shall ignore the dependency on $\eps$ in the sparsity bounds.) 

The construction of~\cite{MPVX15}\footnote{The algorithm used in~\cite{MPVX15} is parallel, and our interpretation of it is in the standard sequential model.} starts by dividing the edge set into $\mu_k \defi O(\log k)$ sets $\{E^1,E^2,\ldots, E^{\mu_k}\}$, such that for each set $E^{\sigma}$, $\sigma \in [1,\mu_k]$, any two edge weights are either within a factor of $2$ from each other or they are separated by at least a factor of $k^{c}$ for some constant $c$. The algorithm then focuses on  constructing a spanner $H^{\sigma}$ for each edge set $E^{\sigma}$ separately; the sparsity of $H^{\sigma}$  is  $O(n^{1/k})$,  which ultimately leads to a sparsity bound of $O(\mu_k \cdot n^{1/k}) = O(\log(k)\cdot n^{1/k})$  of the final spanner $H$. In the construction of $H^{\sigma}$, the edge set $E^{\sigma}$ is further divided into smaller subsets $\{E^{\sigma}_1,E^{\sigma}_2,\ldots\}$, where edges in the same set $E^{\sigma}_i$ have the same weights up to a factor of 2, and the weights of edges in $E^{\sigma}_i$  are at least $k^{c}$ times greater than the weights of edges in  $E^{\sigma}_{i-1}$, for each $i$.  
The construction of~\cite{MPVX15} uses a \emph{hierarchy of clusters} and an \emph{unweighted} {\em cluster graph} $R_i$  for each level $i$ of the hierarchy. The vertex set of $R_i$ corresponds to a \emph{subset} of level-$i$ clusters that are incident to at least one edge in $E_i^{\sigma}$, and the edge set of $R_i$ corresponds to a subset of edges in $E^{\sigma}_{i}$  
 interconnecting level-$i$ clusters. A preprocessing step is applied to the construction of $R_i$ to remove parallel edges, which are edges in $E^{\sigma}_{i}$ connecting the same two level-$i$ clusters, and self-loops, which are edges in $E^{\sigma}_{i}$  whose both endpoints are in the same  level-$i$ cluster. The construction of~\cite{MPVX15} then
builds an $O(k)$-spanner for the (unweighted) graph $R_i$ to obtain a subset of edges $S_i$ of $E^{\sigma}_{i}$ to add to $H^{\sigma}$. Next, vertices in $R_i$ are grouped into a set $\mathcal{U}$ of subgraphs of (unweighted) diameter $\Theta(k)$; each subgraph in $\mathcal{U}$ is then transformed into a level-$(i+1)$ cluster. The construction then continues to level $i+1$, then to level $i+2$, etc., until all the edges in the graph have been considered. The construction of the (unweighted) $O(k)$-spanner of $R_i$ and the set of subgraphs $\mathcal{U}$ is randomized and based on sampling from an exponential distribution.   

The construction of~\cite{EN18} builds on that of~\cite{MPVX15}. First, it partitions the edge set into $\mu_{k,\eps} = O(\log(k)/\eps)$ sets of edges instead of $O(\log k)$ sets as in~\cite{MPVX15}; the idea is that for each set $E^{\sigma}$, $\sigma \in [1,\mu_{k,\eps}]$, any two edge weights are either within a factor of $1+\eps$ from each other, or are separated by at least a factor of $k^{c}$ for some constant $c$. Next, the construction of~\cite{EN18} uses the same idea of~\cite{MPVX15} to construct the spanner of $R_i$  and the set of subgraphs $\mathcal{U}$. However, the stretch of the spanner is improved to $(2k-1)$, which readily implies a stretch of $(4k-2)(1+\eps)$  for the final spanner. We note that the stretch is  $(4k-2)(1+\eps)$  instead of $(2k-1)(1+\eps)$,  due to a subtlety involving randomness in~\cite{MPVX15}. With a more sophisticated analysis, \cite{EN18} resolves this subtlety and reduces the stretch to $(2k-1)(1+\eps)$. The sparsity of the final spanner is $O(\mu_{k,\eps} \cdot n^k) = O(\log(k) \cdot n^{1/k})$, ignoring the dependence on $\eps$. 

Unlike the constructions of \cite{MPVX15,EN18}, the construction of~\cite{ADFSW19} is \emph{deterministic}. A central idea in the construction of~\cite{ADFSW19}, inspired by an earlier work~\cite{ES16}, is to use a modified version of the Halperin-Zwick algorithm \cite{HZ96} in the construction of the spanner of $R_i$. The spanner of $R_i$ has stretch $(2k-1)$, which implies the final stretch of $(2k-1)(1+\eps)$. The sparsity of the spanner remains $O(\mu_{k,\eps} \cdot n^k) = O(\log(k) \cdot n^{1/k})$, as in \cite{MPVX15,EN18}.

We note the following points regarding the aforementioned constructions. 
\begin{enumerate}
\item First, the sparsity incurs an extra factor of $O(\log k)$,
i.e., it is $O(\log(k) \cdot n^{1/k})$ rather than $O(n^{1/k})$. This is inevitable, since subgraphs in $\mathcal{U}$ of $R_i$ have a diameter of $\Theta(k)$, hence the weights of  edges in $E^{\sigma}_{i+1}$ and $E^{\sigma}_{i}$ must be at least a factor of $k^c$ apart from each other,  which ultimately leads to a factor $O(\log k)$ in the number of sets that the edge set $E$ is partitioned to. 
\item Second, each set $E^{\sigma}$ is partitioned into  $O(\log U)$ sets $\{E_1^{\sigma}, E_2^{\sigma},\ldots\}$,  where $U$ is the maximum edge weight. Thus, at least naively, the partition of  $E^{\sigma}$ can be constructed in time $O(m + \log U)$ rather than $O(m)$, where $U$ could be  unbounded. One way to avoid the dependency on $U$ is to sort all edge weights of $E^{\sigma}$, which requires time $\srt(m)$. We note that the computation of the partition of $E^{\sigma}$ into subsets has been overlooked in 
the aforementioned constructions ~\cite{MPVX15,EN18,ADFSW19}. In the \rdrm model, we use the simple observation that $O(\log U)$ is roughly the word size to guarantee that such a partition can be computed within $O(m)$ time.  
\item Third, the aforementioned constructions involve constructing a cluster graph $R_i$ associated with each level $i$ of the hierarchy. 
While the details of maintaining $R_i$ are not precisely described in these constructions, we observe that $R_i$ can be  efficiently maintained using  the \djset~data structure. However, the total runtime would be $O(m\alpha(m,n))$ rather than $O(m)$. 
We next show that the non-optimal sparsity bound of $O(n^{1/k}\log k)$ achieved by the previous works can be used to remove the factor $\alpha(m,n)$. Observe that $m \alpha(m,n) = O(m)$ when $m = \Omega(n \log\log(n))$. If $m = O(n^{1+1/k}\log k)$, we can simply return the whole graph as the output spanner. Otherwise, $m = \Omega(n^{1+1/k}\log k) = \Omega(n \log\log n)$ for every $k\geq 2$, in which case the total   time to construct a spanner of size $O(n^{1+1/k}\log k)$ is $O(m \alpha(m,n)) = O(m)$. However, the same argument fails when aiming for the near-optimal sparsity bound of $O(n^{1/k})$ that we achieve (e.g., $O(n^{1+1/k}) = O(n)$ when $k = \Omega(\log n)$). To construct a spanner with a sparsity of $O(n^{1/k})$ in $O(m)$ time, one must overcome the ``\djset~barrier''. We note that even in the cell-probe model, which is stronger than the \rdrm model, one cannot avoid the factor $\alpha(m,n)$ in the \djset~data structure~\cite{FS89}.  
\end{enumerate}

 Our first construction is in the pointer-machine model; there we overcome the ``(unweighted) diameter barrier'' of $\Theta(k)$ of subgraphs in $\mathcal{U}$ constructed from $R_i$: Subgraphs in our construction have (unweighted) diameters of $O(1)$. As a consequence, we demonstrate that it suffices to partition $E$ into $\mu_{\eps} \defi O(\frac{1}{\eps}\log(1/\eps))$ sets instead of $O(\log k/\eps)$ sets, which ultimately leads to the optimal sparsity of $O(n^{1/k})$, ignoring the dependence on $\eps$. The key idea behind our construction is rather simple --- we prove that it suffices to construct level-$(i+1)$ clusters from level-$i$ clusters such that the total number of clusters is reduced by $\Omega(|V(R_i)|)$. We then use the Halperin-Zwick algorithm~\cite{HZ96} to construct a $(2k-1)$-spanner for $R_i$. Next, we construct the set of subgraphs $\mathcal{U}$ greedily, with each having diameter $O(1)$. By using the \djset~data structure in the construction of $R_i$, the total running time of our algorithm is $O(m\alpha(m,n))$, plus an additive term of $\srt(m)$ needed for computing the partition of $E^{\sigma}$ as discussed above. Note that we cannot use the trick that we provided earlier to remove the $\alpha(m,n)$ factor since our spanner construction does not have any slack on the sparsity.
Our construction is deterministic, it improves the aforementioned constructions~\cite{MPVX15,EN18,ADFSW19} ---
yet is arguably simpler.  

Our linear-time spanner construction in the \rdrm model is based on a novel clustering approach, which we name {\em MST-clustering}. Specifically, we guarantee that the subgraphs induced by clusters are subtrees of a minimum spanning tree (MST) of the graph, denoted by $\mst$, and hence, every \union~operation is performed along the edges of $\mst$. That is, each \union~operation is  of the form $\union(u,v)$, where $(u,v)$ is an edge in $\mst$. As a result, we are able to determine all the \union ~operations even before the cluster construction takes place. This allows us to use a refined \djset~data structure, by Gabow-Tarjan \cite{GT85}, which has $O(1)$ amortized cost per \union/\find~operation.  To the best of our knowledge, this is the first time that the MST serves as the union tree in the Gabow-Tarjan \djset~data structure, other than in applications that {\em directly concern MST}.

The idea of using the MST in the context of clustering in spanner constructions is quite surprising. In many of the known spanner constructions, clusters in the cluster hierarchy need to satisfy a diameter constraint. That is, clusters at level-$i$ should have a diameter of at most $f(L_i)$, for some function $f$, often a linear function, where $L_i$ is an upper bound on the edge weights in $E^\sigma_i$. 
In particular, the approaches of~\cite{MPVX15,EN18,ADFSW19,LS21} utilize the fact that some edges (not in $\mst$) have been added during the construction of clusters at lower levels, and use these edges to construct clusters that satisfy the diameter constraint.  By restricting ourselves to only use $\mst$ for clustering, it seems much more challenging (and perhaps impossible at first) to guarantee the diameter constraint for level-$i$ clusters. Our key insight is that it is still possible to do so, and to this end we rely on the cycle property of MST, both for arguing that clusters have small diameters and for constructing clusters efficiently. 
 
Finally, we show how to construct a spanner with near-optimal sparsity {\em and lightness}. Our construction builds on the fast construction of spanners with near-optimal lightness in~\cite{LS21}. The construction of \cite{LS21} has a preprocessing step and a main construction step. In the preprocessing step, every edge of weight at most $\frac{w(\mst)}{m\eps}$ is added to the spanner. Clearly the number of edges added in this step could be as large as $\Omega(n^2)$ (for dense graphs). Our first observation is that, except for $\mst$ edges, edges added in the preprocessing step  are not involved in the main construction step, and hence we can apply our sparse spanner construction from \Cref{thm:1} to reduce the number of edges added in the preprocessing step to $O(n^{1+1/k})$. The main construction step is based on a cluster hierarchy. However, clusters in~\cite{LS21} are ``equipped'' with a potential function, and the challenge of the cluster construction is to guarantee a sufficient reduction in  the potential values between two consecutive levels of the hierarchy. A cluster graph is also used to select a subset of edges in $E^\sigma_i$ to add to the spanner. Again, the number of edges added in this step could be as large as $\Omega(n^2)$. In order to obtain a spanner with near-optimal guarantees on both sparsity and lightness, we employ the insight that we developed in this paper for the construction of sparse spanners, by constructing clusters in such a way that, between two consecutive levels, there is a sufficient reduction not just in the potential values, but also in the {\em number of clusters}. This, in turn, makes the task of constructing clusters much more challenging; indeed, a-priori, it is unclear that it is possible to achieve both objectives via a single (fast) spanner construction.

The spanner construction of \cite{LS21} constructs level-$(i+1)$ clusters in 5 steps; each level-$(i+1)$ cluster corresponds to a subgraph of a cluster graph $R_i$. We note that the cluster graph $R_i$ in this construction is different from the cluster graph used in the sparse spanner constructions in that its MST, denoted by $\msttilde_{i}$, is derived from the MST of $G$. We observe that among the 5 steps used in \cite{LS21}, there are two steps where the reduction in the number of clusters is not guaranteed. Furthermore, the  clusters  formed in these two steps are subgraphs of $\msttilde_{i}$. Thus, our idea is to apply the insights we developed in the sparse spanner construction in the \rdrm model to this setting. However, there are two subtleties in the construction of \cite{LS21} that we need to address. First, the cluster graph $R_i$ has weights on both edges and vertices. As a result, $\msttilde_{i}$ also has weights on both edges on vertices. Second, clusters in the construction of \cite{LS21} contain \emph{virtual vertices}; these vertices are not in the input graph and are introduced to support the design of the potential function for clusters. We show an analogous version of the cycle property for $\msttilde_{i}$. We use this property, in addition to several other technical ideas, to transfer insights that we developed in the construction of sparse spanners in the \rdrm model to the cluster construction in this setting. As a result, our spanner construction that achieves near-optimal bounds on both sparsity and lightness is much more involved than our two aforementioned constructions 
(which prove \Cref{thm:1}) with near-optimal sparsity but possibly huge lightness.

\section{Preliminaries}\label{sec:prelim}

We denote by $G = (V,E,w)$ a graph $G$ with vertex set $V$, edge set $E$, and weight function $w: E(G)\rightarrow \mathbb{R}^+$ on its edges. We denote by $\mst(G)$ the minimum spanning tree of $G$; there could be MSTs for $G$, but we may assume w.l.o.g.\ that there is only one (e.g., by using lexicographic rules to break ties for edges of the same weight). When the graph is clear from the context, we abbreviate $\mst(G)$ as $\mst$. We denote by $w(G) = \sum_{e\in E}w(e)$ the {\em weight} of $G$, i.e., the sum of all edge weights in $G$. 

We use $d_G(u,v)$ to denote the distance between two vertices $u$ and $v$ in $G$. 
The diameter of $G$ is the maximum pairwise distance in $G$, and is denoted by $\dm(G)$.

For a subset of vertices $X\subseteq V$, we denote by $G[X]$ the subgraph of $G$ induced by $X$. We also define a subgraph of $G$ induced by an \emph{edge set} $F$ by $G[F] = (V, F)$ 

Let $H$ be a spanning subgraph of $G$ (with edge weights inherited from $G$). The {\em stretch} of $H$ is defined as $\max_{u\not\in v \in V}\frac{d_H(u,v)}{d_G(u,v)}$; $H$ is called a {\em $t$-spanner} of $G$ if its stretch is at most $t$. 
The next well-known observation,
which states that the stretch of $H$ is realized by an edge of $G$, 
  follows from the triangle inequality.
\begin{observation} \label{stretch:ob}
	$\max_{u \not= v \in V(G)} \frac{d_H(u,v)}{d_G(u,v)} =  \max_{(u,v) \in E(G)} \frac{d_H(u,v)}{d_G(u,v)}$.
\end{observation}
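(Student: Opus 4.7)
The plan is to prove the equality by establishing both inequalities. One direction is trivial: since $E(G) \subseteq \{(u,v) : u \ne v \in V(G)\}$, the maximum taken over edges is at most the maximum taken over all distinct pairs. The substantive content is the reverse inequality, namely that the worst-case stretch over \emph{all} pairs is already achieved (or at least upper-bounded) by the worst-case stretch over edges of $G$.

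For this direction, let $t := \max_{(x,y) \in E(G)} \frac{d_H(x,y)}{d_G(x,y)}$, and fix any two distinct vertices $u,v \in V(G)$. I would take a shortest path from $u$ to $v$ in $G$, say $u = x_0, x_1, \ldots, x_\ell = v$, so that $d_G(u,v) = \sum_{i=0}^{\ell-1} w(x_i, x_{i+1}) = \sum_{i=0}^{\ell-1} d_G(x_i, x_{i+1})$, where the last equality holds because each $(x_i, x_{i+1})$ lies on a shortest path and is an edge of $G$. Applying the triangle inequality in $H$ gives
\[
d_H(u,v) \;\le\; \sum_{i=0}^{\ell-1} d_H(x_i, x_{i+1}) \;\le\; \sum_{i=0}^{\ell-1} t \cdot d_G(x_i, x_{i+1}) \;=\; t \cdot d_G(u,v),
\]
where the middle inequality uses the definition of $t$ applied to each edge $(x_i, x_{i+1}) \in E(G)$. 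Dividing by $d_G(u,v) > 0$ (since $u \ne v$) yields $\frac{d_H(u,v)}{d_G(u,v)} \le t$, and taking the maximum over $u \ne v$ gives the desired inequality.

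There is no real obstacle here; the only mild subtlety is verifying that along a shortest $u$--$v$ path in $G$ the edge weights sum exactly to $d_G(u,v)$, which is immediate from the fact that subpaths of shortest paths are shortest paths (equivalently, each consecutive pair $x_i, x_{i+1}$ satisfies $d_G(x_i, x_{i+1}) = w(x_i, x_{i+1})$). Combining the two directions completes the proof.
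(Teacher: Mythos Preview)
Your proof is correct and is precisely the triangle-inequality argument the paper alludes to (the paper states the observation ``follows from the triangle inequality'' without spelling out details). There is nothing to add.
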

 We say that $H$ is a spanner for a \emph{subset of edges $X\subseteq E$} if $\max_{(u,v) \in X} \frac{d_H(u,v)}{d_G(u,v)} \leq t$. 
 
Our constructions use the aforementioned linear-time construction of $(2k-1)$-spanners for unweighted graphs by Halperin-Zwick~\cite{HZ96}, which we record in the following theorem for further use.

\begin{theorem}[\cite{HZ96}]\label{thm:HalperinZwick} For any unweighted   $n$-vertex $m$-edge graph $G$ and any integer $k \ge 1$, a $(2k-1)$-spanner of $G$ with $O(n^{1+\frac{1}{k}})$ edges can be constructed deterministically in $O(m + n)$  time.
\end{theorem}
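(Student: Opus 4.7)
The plan is to build the spanner via a hierarchical clustering procedure in the spirit of Awerbuch's original spanner construction, derandomized and tightened for the unweighted setting. The algorithm maintains a partition $\mathcal{C}$ of $V$ into clusters, each equipped with a BFS tree (whose edges go into the output spanner $H$), and runs in $k$ phases. Initially, $\mathcal{C}_0 = \{\{v\} : v \in V\}$ (each vertex is a singleton cluster of radius $0$). In each phase $i$, for every cluster $C \in \mathcal{C}_{i-1}$ we count the number of distinct clusters in $\mathcal{C}_{i-1}$ that are adjacent to $C$ via an edge of $G$. If this count is at most $n^{1/k}$, we \emph{seal} $C$: we pick one connecting edge to each such neighbor, add it to $H$, and remove $C$ from further processing. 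Otherwise, $C$ is \emph{grown} by being merged with one neighbor cluster (adding the connecting edge to $H$); the BFS tree is extended accordingly. After $k$ phases, any cluster still alive is sealed forcibly.

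For the sparsity bound, every edge of $H$ is either (i) a BFS-tree edge of some cluster, of which there are at most $n-1$ in total since the clusters always form a partition, or (ii) a sealing edge. Each cluster is sealed at most once and contributes at most $n^{1/k}$ sealing edges, and the total number of clusters ever created is at most $n$ (each cluster corresponds to a distinct original vertex acting as a center). Hence $|E(H)| = O(n^{1+1/k})$.

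For the stretch, we argue inductively that at the start of phase $i$ every cluster in $\mathcal{C}_{i-1}$ has BFS-radius at most $i-1$ in $H$. For any edge $(u,v) \in E(G)$: if $u$ and $v$ ever lie in the same cluster, the two BFS paths to the cluster center yield $d_H(u,v) \le 2(k-1)+1 = 2k-1$. Otherwise, consider the phase $i$ in which the first of the two clusters, say $C_u$, is sealed; the neighbor cluster $C_v$ of $v$ has radius at most $i-1$, and via the sealing edge chosen from $C_u$ to $C_v$, together with the two BFS paths inside $C_u$ and $C_v$, we obtain $d_H(u,v) \le 2(i-1) + 1 \le 2k-1$.

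The main obstacle is obtaining an $O(m+n)$ runtime \emph{across all $k$ phases}: a naive scan of all $m$ edges per phase would give $O(km)$. The remedy is that once a cluster is sealed, all edges incident to it are finalized and never inspected again. Concretely, I would maintain, for each active cluster, a doubly-linked adjacency list of active neighboring clusters together with, for each (active-cluster, neighbor-cluster) pair, a single ``representative'' incident edge; upon sealing or merging, these lists are scanned and spliced into neighbors' lists in time linear in their length. A charging argument then shows that each original edge of $G$ is touched a constant number of times over the entire execution, yielding total time $O(m+n)$ independent of $k$. Combined with the stretch and size bounds, this gives a deterministic construction matching the theorem's guarantees.
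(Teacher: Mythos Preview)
The paper does not prove this theorem; it is quoted from~\cite{HZ96} and used only as a black box (in the constructions of \Cref{sec:PointerMachine}, \Cref{sec:sparseRAM}, and Step~2 of the proof of \Cref{lm:NewConstruction}). There is therefore no proof in the paper to compare your attempt against.

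On its own merits, your sketch has two substantive gaps. First, the growing rule ``merge $C$ with \emph{one} neighbor cluster'' does not force the number of clusters to drop by a factor $n^{1/k}$ per phase. After $k$ phases many clusters may still be alive, each with far more than $n^{1/k}$ neighbors, and their forced sealing then costs $\omega(n^{1+1/k})$ edges (take $G=K_n$, $k=2$: after two phases you still have $\Theta(n)$ clusters, each with $\Theta(n)$ neighbors). In the actual Halperin--Zwick construction a high-degree cluster absorbs an \emph{entire BFS layer of vertices}, so the cluster count provably shrinks geometrically and is empty after $k$ rounds. Second, your stretch bound $d_H(u,v)\le 2(i-1)+1$ does not follow from what you wrote: the sealing edge from $C_u$ to $C_v$ need not be incident to $u$ or to $v$, so routing through both BFS trees costs up to $2(i-1)+1+2(i-1)=4(i-1)+1$. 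The standard remedy is to seal at the level of \emph{vertices} rather than clusters: when a vertex $u$ drops out at level $i$, it adds one edge from $u$ itself to each adjacent level-$i$ cluster, which yields the path $u \to w \to \text{center}(C_v)\to v$ of length $1+2i\le 2k-1$.
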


\section{An $O(m\alpha(m,n) + \srt(m))$-time Algorithm}\label{sec:PointerMachine}

In this section we prove the first item of~\Cref{thm:1}. By scaling, we assume that the minimum edge weight is $1$. We partition the edge set $E$ into $\mu_\epsilon = \log_{1+\eps} \left(\frac{1}{\eps}\right) = \Theta(\frac{\log(1/\eps)}{\eps})$  sets $\{E^{\sigma}\}_{1\leq \sigma \leq \mu_\epsilon}$ such that each $E^{\sigma}$ can be written as $E^{\sigma} = \cup_{i\in \mathbb{N}^+}E^{\sigma}_i$ with: 
\begin{equation} \label{eq:EsigmaI}
	E^{\sigma}_i = \{e \in E: \frac{L_i}{(1+\eps)} \leq  w(e) \leq L_i, i \in \mathbb{N}\} \mbox{, where } L_i = L_0/\eps^{i}, L_0 = (1+\eps)^{\sigma}.
\end{equation} 
Thus, for any edge set $E^{\sigma}$, any two edge weights are either roughly the same (up to a factor of $1+\eps$) or 
far apart (separated by at least a factor of $1/\eps$).
For technical convenience, we shall define $L_{-1} = 0$.

We note that the time needed to compute the partition of $E$ into the sets $\{E^{\sigma}\}_{1\leq \sigma \leq \mu_\epsilon}$
is upper bounded by $O(m + \srt(m')) = O(\srt(m))$, where $m'$ is the number of non-empty sets.
Indeed, this computation can be carried out naively in linear time, except for the time needed to sort the {\em indices} of the non-empty sets in $\{E^{\sigma}_i\}_{1\leq \sigma \leq \mu_\epsilon, i \in \mathbb{N}}$. In the runtime analysis that follows we shall disregard this initial time investment, under the understanding that we include it in the final runtime bound.
 
We now construct a $(2k-1)(1+O(\eps))$-spanner $H^{\sigma}$ for each set $E^{\sigma}$ with sparsity $O(n^{1/k})$ in $O(m \cdot \alpha(m,n))$ time. A  $(2k-1)(1+O(\eps))$-spanner $H$ for $G$ with sparsity $O(n^{1/k} \cdot \frac{\log(1/\eps)}{\eps})$ is then obtained as the union of all $H^{\sigma}$'s: $H = \cup_{1\leq \sigma \leq \mu_\epsilon}H^{\sigma}$, within time $O(m \cdot \alpha(m,n) \cdot \frac{\log(1/\eps)}{\eps})$.

We focus on the construction of $H^{\sigma}$, for a fixed $\sigma \in [1,\mu_{\eps}]$. Initially $H^{\sigma}_{0} = (V,\emptyset)$. The construction is carried out in what we call \emph{levels}: at level $i$, we shall construct a subgraph $H^{\sigma}_i$ such that $H^{\sigma}_{\leq i}$ is a $(2k-1)(1+O(\eps))$-spanner for the edge set $E^{\sigma}_{\leq i}$. Here $H^{\sigma}_{\leq i} = \cup_{0\leq j \leq i}H^{\sigma}_j$ and  $E^{\sigma}_{\leq i} =  E^{\sigma}_{0\leq j \leq i}$.  By induction, $H^{\sigma} \defi \cup_{i\geq 0}H^{\sigma}_i$ would provide a $(2k-1)(1+O(\eps))$-spanner for the edge set $E^{\sigma}$. 
Consequently,  $H = \cup_{1\leq \sigma \leq \mu_\epsilon}H^{\sigma}$ will provide a
$(2k-1)(1+O(\eps))$-spanner for $E = \bigcup_{1\leq \sigma \leq \mu_\epsilon} E^{\sigma}$,
and, by~\Cref{stretch:ob}, also for $G$ .
All graphs $H^{\sigma}_i$  share the same vertex set $V$ and hence are distinguished by the edge set.

A {\em cluster} is a set of vertices.
Our construction uses a {\em hierarchical clustering}, where for each $i \ge 0$, the construction at level $i$ is associated with a set of { clusters}
$\mathcal{C}_i$ such that: 
\begin{itemize}[noitemsep]
\item \textbf{(P1)~} 	\hypertarget{P1}{}
	Each cluster $C\in \mathcal{C}_i$ is a subset of $V$. Furthermore, clusters in $\mathcal{C}_i$ induce a partition of $V$.
\item \textbf{(P2)~} 	\hypertarget{P2}{}
Each cluster $C\in \mathcal{C}_i$ induces a subgraph $H_{\leq i}^{\sigma}[C]$ of diameter $\le gL_{i-1}$ for some constant $g$. 
\end{itemize}

$\mathcal{C}_0$ is the set of $n$ singletons of $V$ and hence trivially satisfies both Properties (\hyperlink{P1}{P1}) and (\hyperlink{P2}{P2}) (recall that $L_{-1} = 0$). The cluster sets $\{\mathcal{C}_0, \mathcal{C}_1, \ldots\}$ provide a {\em hierarchy of clusters} $\mathcal{H}$. In particular, for any $i \ge 1$, $\mathcal{C}_{i-1}$ is a {\em refinement} of $\mathcal{C}_i$: any cluster $C\in \mathcal{C}_i$ is the union of a subset of clusters in $\mathcal{C}_{i-1}$.

\paragraph{Representing $\mathcal{C}_i$ by Disjoint Sets.~} We shall use the classic \djset~data structure~\cite{Tarjan75} in our clustering procedure, for representing clusters in $\mathcal{C}_i$,  grouping subsets of clusters to larger clusters (via the \textsc{Union} operation), and checking whether a pair of vertices belongs to the same cluster (via the \textsc{Find} operation). In particular, each cluster $C\in \mathcal{C}_i$ will have a {\em representative} vertex, denoted by $r(C)$, that can be accessed from any vertex $v\in C$ by calling $\find(v)$; we define $r(v) := \find(v)$.  
The {\em amortized} time per each \textsc{Union} or \textsc{Find} operation is $O(\alpha(a,b))$, where $a$ is the total number of \textsc{Union} and \textsc{Find} operations  and $b$ is the number of vertices in the data structure.

\paragraph{Constructing $H^{\sigma}_i$.~} We assume that $|E^{\sigma}_i| \geq 0$; otherwise, we will skip the construction at level $i$ and set $\mathcal{C}_{i+1} = \mathcal{C}_i$. We say that a cluster at level $i$ is {\em isolated} if none of its vertices is incident on any edge of $E^{\sigma}_i$;
otherwise it is \emph{non-isolated}. Let $\mx$ be the set of all non-isolated level-$i$ clusters.  We say that two edges $(u,v)$ and $(u',v')$ in $E_i^{\sigma}$ are {\em parallel} if $r(u) = r(u')$  
(i.e., $u$ and $u'$ are in the same level-$i$ cluster) and $r(v) = r(v')$ (i.e., $v$ and $v'$ are in the same level-$i$ cluster). We say that $(u,v)$ is a self-loop if $r(u) = r(v)$ (i.e., $u$ and $v$ are in the same level-$i$ cluster). Let $S_i$ be obtained from $E^{\sigma}_i$ by removing from it all self-loops and keeping only the lightest edge in every maximal set of parallel edges of $E^{\sigma}_i$;
we refer to the edges of $S_i$ as the {\em source edges}. 

We then construct an unweighted graph $R_i$, called the \emph{representative graph}, as follows: $V(R_i) = \{r(C): C\in \mx\}$ and 
$E(R_i) =  \{(r(u),r(v)): (u,v) \in S_i\}$. The vertices and edges of $R_i$ are referred to as the {\em representative vertices} and {\em representative edges}, respectively;
note that each representative edge corresponds to a unique source edge.  Let $E'_i \leftarrow \textsc{HalperinZwick}(R_i,2k-1)$ be the edge set obtained by applying the spanner algorithm of~\Cref{thm:HalperinZwick} to $R_i$. Let $S_i'$ be the subset of source edges in $S_i$ corresponding to  the representative edges in $E_i'$.  Our graph  $H^{\sigma}_i$ has $S_i'$ as its edge set.

\begin{lemma}\label{lm:Stretch} $d_{H^{\sigma}_{\leq i}}(u,v) \leq (2k-1)(1+O(\eps))w(u,v)$ for every edge $(u,v) \in E^{\sigma}_i$, assuming $\eps \leq 1/(2g)$. Furthermore, $S_i'$ can be constructed in $O(|E^{\sigma}_i|\alpha(m,n))$ time.
\end{lemma}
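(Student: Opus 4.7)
My plan is to prove the stretch by a case analysis on whether the endpoints of $(u,v)\in E^{\sigma}_i$ lie in the same level-$i$ cluster, and to prove the runtime by charging a constant number of \find{} operations to each edge of $E^{\sigma}_i$. For the stretch, let $C_u, C_v \in \mathcal{C}_i$ be the clusters containing $u$ and $v$. In the intra-cluster case $C_u=C_v$, Property~(\hyperlink{P2}{P2}) supplies a $u$-$v$ path inside $H^{\sigma}_{\leq i}[C_u]$ of weight at most $gL_{i-1}=g\eps L_i$; since $w(u,v)\ge L_i/(1+\eps)$, this is at most $g\eps(1+\eps)\cdot w(u,v)=O(\eps)\cdot w(u,v)$, comfortably within the $(2k-1)(1+O(\eps))w(u,v)$ target.

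In the inter-cluster case $r(u)\neq r(v)$, the pair $(r(u),r(v))$ is a vertex pair in $R_i$ connected by the source edge that survived the parallel-class pruning of $(u,v)$. Applying \Cref{thm:HalperinZwick} to the unweighted $R_i$ yields a path $r(u)=x_0,x_1,\ldots,x_\ell=r(v)$ in $E'_i$ of length $\ell\leq 2k-1$; each hop $(x_{j-1},x_j)$ lifts to a source edge $(u_j,v_j)\in S'_i\subseteq H^{\sigma}_i$ of weight at most $L_i$, with $u_j$ in the cluster represented by $x_{j-1}$ and $v_j$ in the cluster represented by $x_j$. I then stitch a $u$-$v$ walk in $H^{\sigma}_{\leq i}$ by alternating these $\ell$ source edges with $\ell+1\leq 2k$ intra-cluster detours (from $u$ to $u_1$ in $C_u$, from $v_j$ to $u_{j+1}$ in the cluster of $x_j$ for $1\leq j<\ell$, and from $v_\ell$ to $v$ in $C_v$), each of weight at most $gL_{i-1}=g\eps L_i$ by Property~(\hyperlink{P2}{P2}). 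The total walk weight is at most $(2k-1)L_i+2kg\eps L_i$; dividing by $w(u,v)\ge L_i/(1+\eps)$ and using $2k/(2k-1)\le 2$ together with $\eps\le 1/(2g)$ absorbs the cross terms into a $(2k-1)(1+O(\eps))$ multiplicative bound.

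For the runtime, each edge of $E^{\sigma}_i$ is charged two \find{} calls to obtain $(r(u),r(v))$ at amortized cost $O(\alpha(m,n))$ per call, simultaneously identifying self-loops and representative pairs. Running the Halperin-Zwick algorithm (\Cref{thm:HalperinZwick}) on $R_i$ takes $O(|V(R_i)|+|E(R_i)|)=O(|E^{\sigma}_i|)$ time, and $S'_i$ is then recovered from $E'_i$ in linear time via back-pointers from representative edges to their source edges, for a total of $O(|E^{\sigma}_i|\alpha(m,n))$.

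The main obstacle I foresee is building $R_i$ as a simple graph (no parallel edges, keeping the lightest source edge per pair of representatives) within the $O(|E^{\sigma}_i|\alpha(m,n))$ budget in the pointer-machine model, where hashing is unavailable. I plan to interleave the parallel-edge de-duplication with the \find/\union{} operations already being performed: each representative vertex maintains an adjacency that is queried and updated via the \djset, so that deciding whether $(r(u),r(v))$ is already in $E(R_i)$ and, if so, replacing its stored source edge by the lighter one, is done in $O(\alpha(m,n))$ amortized time per inspected edge.
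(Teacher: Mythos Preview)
Your stretch argument is correct and essentially the paper's; your uniform handling of the inter-cluster case (whether or not $(u,v)$ itself survives into $S_i$) is in fact slightly cleaner than the paper, which splits into the subcase $(u,v)\in S_i$ and the subcase where $(u,v)$ is parallel to a lighter $(u',v')\in S_i$ and pays two extra cluster diameters. Your runtime accounting is likewise the paper's.

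The one flaw is your proposed parallel-edge de-duplication. The \djset{} data structure does not support adjacency queries---\find{} returns the representative of a vertex, not the neighbor set of a representative---so ``deciding whether $(r(u),r(v))$ is already in $E(R_i)$ \ldots{} via the \djset{} \ldots{} in $O(\alpha(m,n))$ amortized time'' does not work as stated. (The paper glosses over this step too.) Two easy repairs: (i) skip de-duplication altogether, since \Cref{thm:HalperinZwick} runs correctly on a multigraph in time $O(|E(R_i)|+|V(R_i)|)=O(|E^{\sigma}_i|)$, its output spanner is simple, and your stretch argument only uses that each source edge has weight at most $L_i$, never that it is the lightest in its parallel class; or (ii) de-duplicate in $O(|E^{\sigma}_i|)$ extra time by the standard two-pass bucket sort on the representative pairs $(r(u),r(v))$, which is available in the pointer-machine model once the representatives have been computed.
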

\begin{proof} Let $(u,v)$ be an arbitrary edge in $E_i^{\sigma}$. 
We first consider the case where $(u,v)\in S_i$. Then, there is an edge $(r(u),r(v)) \in R_i$. By \Cref{thm:HalperinZwick}, there is a path $P$ between $r(u)$ and $r(v)$ in $(V(R_i),E'_i)$ that contains at most $2k-1$ edges. We write $P = (r(x_0)=r(u), (r(x_0),r(x_1)), r(x_1), (r(x_1),r(x_2)), \ldots, r(x_{p}) = r(v))$ as an alternating sequence of representative vertices and edges, where $x_0 = u, x_p = v$ and $p \le 2k-1$. Let $(y^2_\ell, y^1_{\ell+1})$ be the source edge in $S_i'$ that corresponds to the representative edge  $ (r(x_\ell),r(x_{\ell+1}))$, for each $\ell\in [0,p-1]$. Denote by $C_\ell$  the level-$i$ cluster with $r(C_\ell) = r(x_\ell)$. Let $y^1_0 = u$ and $y^2_{p}=v$. Let
	\begin{equation}
		Q = Q_{0}(y^1_0,y^2_0)\circ (y^2_0,y^1_1)\circ Q_1(y^1_1,y^2_1) \ldots \circ (y^2_{p-1},y^1_p)\circ  Q_p(y^1_p,y^2_p)
	\end{equation}
be a path from $u$ to $v$, where $Q_{\ell}(y^1_\ell,y^2_\ell)$ is a shortest path between $y^1_\ell$ and $y^2_\ell$ in $H^{\sigma}_{\leq i-1}[C_\ell]$, for each $0\leq \ell\leq p$, and $\circ$ is the path concatenation operator. By property \hyperlink{P2}{(P2)}, $w(Q_{\ell}(y^1_\ell,y^2_\ell)) \le 
g L_{i-1} = g \eps L_i$. It follows that 
	\begin{equation}\label{eq:weightQ}
		\begin{split}
			w(Q) &\leq (2k-1)L_i + (2k)g\eps L_i \leq (2k-1)(1+ 2g\eps)L_i\\
			&\leq (2k-1)(1+2g\eps)(1+\eps)w(u,v) \qquad \mbox{(since $w(u,v)\geq L_i/(1+\eps)$)}\\
			&\leq (2k-1)(1+(4g+1)\eps)w(u,v) \qquad \mbox{(since $\eps \leq 1$)}
		\end{split}
	\end{equation}
	
Thus, the stretch of $(u,v)$ is at most $(2k-1)(1+(4g+1)\eps)$. 
	
Next, we consider the complementary case that $(u,v) \nin S_i$. 
By definition, the edge $(u,v)$ is not in $S_i$ either because it is a self-loop or it is parallel to another edge $(u',v')$ that belongs to $S_i$, with $w(u',v') \le w(u,v)$. In the former case, property \hyperlink{P2}{(P2)} implies the existence of a path from $u$ to $v$ in $H^{\sigma}_{\leq i-1}$ of weight at most $gL_{i-1} ~=~ g\eps L_i ~\leq \frac{L_i}{1+\eps}~\leq~w(u,v)$ when $\eps <  \frac{1}{2g}$. Thus, in this case the stretch of edge $(u,v)$ is $1$.
For the latter case, let $C_u$ and $C_v$ be the level-$i$ clusters containing $u$ and $v$, respectively, and without loss of generality assume that $u' \in C_u$  and $v' \in C_v$. 
By property (\hyperlink{P2}{P2}),  $\dm(H^{\sigma}_{\leq i-1}[C_u]),\dm(H^{\sigma}_{\leq i-1}[C_v])) \le 
gL_{i-1} = g\eps L_i$. 
The same argument used for deriving \Cref{eq:weightQ}, when applied to the edge $(u',v')$ rather than $(u,v)$, yields:
\begin{equation} \label{eq:simple}
d_{H_{\leq i}}(u',v') ~\le~ (2k-1)(1+(4g+1)\eps)w(u',v') ~\le~  (2k-1)(1+(4g+1)\eps)w(u,v).
\end{equation}
By the triangle inequality, 
\begin{equation*}
	\begin{split}
		d_{H_{\leq i}}(u,v) &\leq~ 	d_{H_{\leq i}}(u',v') + \dm(H_{\leq i-1}[C_u]) + \dm(H_{\leq i-1}[C_v])\\
		&\leq~  (2k-1)(1+ (4g+1)\eps)w(u,v)  +  2g\epsi L_i  \qquad \mbox{(by \Cref{eq:simple})}\\
		&\leq~ (2k-1)(1+ (4g+1)\eps)w(u,v) + 4g\eps w(u,v) \qquad \mbox{(since $w(u,v) \geq L_i/(1+\eps) \geq L_i/2$)}\\
		&=~  (2k-1)(1+ (8g+1)\eps)w(u,v) \qquad \mbox{(since $k\geq 1$)}
	\end{split}
\end{equation*}
Thus, the stretch of edge $(u,v)$ is $(2k-1)(1+ (8g+1)\eps) = (2k-1)(1+O(\eps))$.
Summarizing, we have shown that in all cases the stretch of edge $(u,v)$ is at most $(2k-1)(1+O(\eps))$, as required.

By construction of the representative graph $R_i$, all clusters corresponding to vertices of $R_i$ are non-isolated, hence no vertex of $R_i$ is isolated, yielding
$|V(R_i)| = O(|E(R_i)|) = O(|E^{\sigma}_i|)$. 
Thus, the construction of the edge set $S_i$ and the representative graph $R_i$, via the usage of the $\djset$ data structure,  takes total time of $O(|E^{\sigma}_i|\alpha(m,n))$. The set of edges $S_i'$ by \Cref{thm:HalperinZwick} can be constructed in time $O(|E(R_i)| + |V(R_i)|) = O(|E^{\sigma}_i|)$ time. Thus, the total running time to construct $S_i'$ is  $O(|E^{\sigma}_i|\alpha(m,n))$.
\qed
\end{proof}

\paragraph{Constructing $\mathcal{C}_{i+1}$.~} Every cluster $C \in \mathcal{C}_i\setminus \mx$ becomes a level-$(i+1)$ cluster. We next focus on the level-$i$ clusters of $\mx$. Recall that $V(R_i)$ is the set of all representatives of clusters in $\mx$. We construct a collection $\mathcal{U}$ of vertex-disjoint subgraphs  of $R_i$ in the following two steps: 
\begin{itemize}
	\item[(1)] Initially, we greedily construct a maximal set of vertex-disjoint stars of $R_i$, and initialize $\mathcal{U}$ as this edge set; thus, each subgraph $U \in \mathcal{U}$ contains a vertex and all of its neighbors in $R_i$. 
	\item[(2)] We scan the remaining vertices in $V(R_i)$ that haven't been grouped to any subgraph in $\mathcal{U}$. 
	For every such remaining vertex $v \in V(R_i)$, it must have at least one neighbor that is contained in a subgraph $U\in \mathcal{U}$ (by the maximality of $\mathcal{U}$); we add to $U$ the vertex $v$ and an edge $(v,u)$ leading to such a neighbor $u$ of $v$ 
	 (chosen arbitrarily if there are multiple such neighbors). 
\end{itemize}  

For each subgraph $U$ in the resulting edge set $\mathcal{U}$, we form a level-$(i+1)$ cluster $C_U\in \mathcal{C}_i$ by taking the union of all the clusters whose representatives are $V(U)$ as $C_U$.

\begin{lemma}\label{lm:Ci1}All clusters in $\mathcal{C}_{i+1}$ satisfy Properties (\hyperlink{P1}{P1}) and (\hyperlink{P2}{P2}) when $\eps \leq 1/(2g)$ and $g \geq 9$, and they can be constructed in time $O(|E^{\sigma}_i| \cdot \alpha(m,n))$. Furthermore, every cluster $C_U\in \mathcal{C}_{i+1}$ that is formed from a subgraph $U \in \mathcal{U}$, as described above, is the union of at least $2$ level-$i$ clusters. 
\end{lemma}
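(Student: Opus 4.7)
My plan is to verify the three parts of the lemma in turn, working throughout on the graph $(V(R_i), E_i')$, i.e., using the Halperin--Zwick spanner when forming $\mathcal{U}$, so that every edge of every $U \in \mathcal{U}$ corresponds to a source edge in $S_i' \subseteq H_i^{\sigma}$.

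First, for Property \hyperlink{P1}{(P1)}, $\mathcal{C}_{i+1}$ consists of the isolated level-$i$ clusters in $\mathcal{C}_i \setminus \mx$, kept verbatim, together with the clusters $\{C_U : U \in \mathcal{U}\}$. Step~1 picks vertex-disjoint stars, and Step~2 assigns each remaining $v \in V(R_i)$ to exactly one $U$ by maximality: if $v$ had no $E_i'$-neighbor inside any existing $U$, then $\{v\} \cup N_{E_i'}(v)$ would form a new star disjoint from the existing ones, contradicting maximality. Hence $\mathcal{U}$ partitions $V(R_i)$; the $C_U$'s partition $\bigcup_{C \in \mx} C$, and together with the isolated clusters they partition $V$. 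For the ``union of at least $2$ level-$i$ clusters'' claim, every $r \in V(R_i)$ has degree at least $1$ in $R_i$ (and hence in $E_i'$, which is a spanner of $R_i$), so each Step~1 star contains the center plus at least one leaf, and Step~2 only enlarges $U$'s, giving $|V(U)| \ge 2$ for all $U \in \mathcal{U}$.

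The core technical step is \hyperlink{P2}{(P2)}: bounding $\dm(H_{\le i+1}^{\sigma}[C_U]) \le g L_i$, which suffices to do for $H_{\le i}^{\sigma}[C_U]$ since later edges can only shorten distances. I would first observe that $U$ has unweighted diameter at most $4$: the Step~1 star has diameter $2$, and each vertex appended in Step~2 becomes a new leaf attached to some vertex of that star, so the worst case is two appended leaves connected via two original leaves and the center. Now pick $u \in C_a \subseteq C_U$ and $v \in C_b \subseteq C_U$; let $r(C_a) = r_0, r_1, \ldots, r_p = r(C_b)$ be a $U$-path with $p \le 4$, and let $C_{j_t}$ be the level-$i$ cluster with $r(C_{j_t}) = r_t$. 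Each edge $(r_t, r_{t+1})$ of $U$ corresponds to a source edge in $S_i'$ with endpoints $y_t^2 \in C_{j_t}$, $y_{t+1}^1 \in C_{j_{t+1}}$ and weight at most $L_i$. Concatenating the path $u \leadsto y_0^2$ inside $C_a$, the source edge $y_0^2 \to y_1^1$, the path $y_1^1 \leadsto y_1^2$ inside $C_{j_1}$, and so on, ending with $y_p^1 \leadsto v$ inside $C_b$, the total weight is at most $p L_i + (p+1)\, g L_{i-1} \le 4 L_i + 5 g \eps L_i$, using the inductive diameter bound $g L_{i-1} = g \eps L_i$ of the intermediate clusters. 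Under $\eps \le 1/(2g)$ this is at most $4 L_i + \tfrac{5}{2} L_i = \tfrac{13}{2} L_i \le g L_i$ whenever $g \ge 9$.

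For the running time, the costs decompose as: (i) forming $\mx$, $S_i$, and $R_i$ using $O(|E_i^{\sigma}|)$ \textsc{Find}'s on the current \djset, at $O(|E_i^{\sigma}|\, \alpha(m,n))$; (ii) the Halperin--Zwick spanner of $R_i$ in $O(|E(R_i)| + |V(R_i)|) = O(|E_i^{\sigma}|)$ time by \Cref{thm:HalperinZwick}; (iii) Steps~1 and~2 by a linear-time scan of $(V(R_i), E_i')$ that marks each vertex with the index of its $U$, costing $O(|E_i^{\sigma}|)$; and (iv) $O(|V(R_i)|) = O(|E_i^{\sigma}|)$ \textsc{Union} operations to merge, for each $U$, the level-$i$ clusters in $V(U)$ into $C_U$, at $O(|E_i^{\sigma}|\, \alpha(m,n))$. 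Summing gives $O(|E_i^{\sigma}|\, \alpha(m,n))$. The main delicate point, which I expect to be the only place where the interpretation really matters, is that Steps~1 and~2 must be carried out on $(V(R_i), E_i')$ rather than on $R_i$: this simultaneously guarantees that every $U$-edge corresponds to a source edge in $S_i' \subseteq H_i^{\sigma}$ (so the concatenated path in (P2) actually lies inside $H_{\le i}^{\sigma}[C_U]$) and that Step~2's maximality argument produces an attachment point for each leftover vertex.
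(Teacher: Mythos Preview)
Your proof is correct and follows the paper's approach almost exactly: the same hop-diameter-$4$ bound on $U$, the same concatenation for \hyperlink{P2}{(P2)} giving $4L_i + 5g\eps L_i \le gL_i$, the same maximality argument for the ``at least two'' claim, and the same runtime decomposition. The one substantive difference is that you form $\mathcal{U}$ on the spanner $(V(R_i), E_i')$ rather than on all of $R_i$. This is a sensible tightening: the paper, as written, forms $\mathcal{U}$ on $R_i$ and asserts only that $U$-edges correspond to source edges in $S_i$, but only the edges of $S_i' \subseteq S_i$ are actually placed into $H_i^\sigma$, so without your tweak the concatenated path in the \hyperlink{P2}{(P2)} argument is not a priori contained in $H_{\le i}^\sigma[C_U]$. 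Your version closes this small expository gap without altering the shape or constants of the argument.
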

\begin{proof} 
Property  (\hyperlink{P1}{P1}) holds trivially. 
To prove that  Property  (\hyperlink{P2}{P2}) holds, we first note that each subgraph $U \in \mathcal{U}$ (with vertices in $R_i$)
has hop diameter at most $4$,
which follows directly from the above two-step construction of $\mathcal{U}$.
Any edge connecting two vertices in $U$ corresponds to a source edge in $S_i$, and thus also in $E^{\sigma}_i$, and as such has length at most $L_i$, which implies that $C_U$ induces a subgraph of diameter at most $ 5(g L_{i-1}) + 4 L_i  = 5g\eps L_i + 4L_i \leq 9L_i = gL_i$, since $\eps\leq 1/g$ and $g \ge 9$. Thus, Property (\hyperlink{P2}{P2}) holds.

 The construction of the edge set $S_i$ and the representative graph $R_i$ takes total time of $O(|E^{\sigma}_i|\alpha(m,n))$ using the $\djset$ data structure. As for the construction of
the collection $\mathcal{U}$ of vertex-disjoint subgraphs  of $R_i$,
Step (1) of this construction, i.e.,  which constructs a maximal set of vertex-disjoint stars,
involves a greedy linear-time algorithm, whereas Step (2) naively takes linear time, so together they are implemented within time $O(|E(R_i)|) = O(|E^{\sigma}_i|)$. 
Constructing the corresponding clusters $\{C_U: U\in \mathcal{U}\}$ can be implemented within the same amount of time in the obvious way. The construction of clusters in $\mathcal{C}_{i+1}$ that are clusters in $\mathcal{C}_i\setminus \mx$ requires no extra time.

Finally, 
we argue that any cluster $C_U\in \mathcal{C}_{i+1}$ that is formed from a subgraph $U \in \mathcal{U}$ 
 contains at least $2$ level-$i$ clusters.
Indeed, any cluster formed in Step (1) of the construction of $\mathcal{U}$ contains at least 2 level-$i$ clusters,
by the maximality of $\mathcal{U}$ and since no vertex in $R_i$ is isolated.
Any remaining level-$i$ cluster must be grouped in Step (2) of the construction of $\mathcal{U}$ 
to clusters formed in Step (1), 
and this too holds by the maximality in Step 1 of the construction of $\mathcal{U}$ 
and since no vertex in $R_i$ is isolated.  \qed
\end{proof}

We are now ready to prove the first item of \Cref{thm:1}.

\begin{proof}[Proof of the first item of~\Cref{thm:1}] Recall that $H = \cup_{1\leq \sigma \leq \mu_\epsilon}H^{\sigma}$. Let $\Delta_{i+1} = |\mathcal{C}_i| - |\mathcal{C}_{i+1}|$. Recall that $\mathcal{C}_0$ is the set of $n$ singletons, i.e.,  $|\mathcal{C}_0| = n$. Thus, $\sum_{i\geq 0} \Delta_{i+1} \leq |\mathcal{C}_0|  = n$. 
	
	By \Cref{lm:Ci1}, $\Delta_{i+1} \geq \frac{|V(R_i)|}{2}$. Furthermore, \Cref{thm:HalperinZwick} yields $|S_i'| = O(|V(R_i)|^{1+1/k})$, hence $|S_i'| = O(n^{1/k}) \cdot \Delta_{i+1}$. Thus, we have:
	\begin{equation}\label{eq:EHsigma}
		|E(H^{\sigma})| ~=~ |\cup_{i\geq 0} E(H^{\sigma}_i)| ~=~ 
		\sum_{i\geq 0} |S_i'| ~=~ \sum_{i\geq 0}O(n^{1/k}) \cdot \Delta_{i+1} ~=~ O(n^{1+1/k}).
	\end{equation}

	 The sparsity of $H$ is $O(n^{1/k} \cdot \frac{\log(1/\eps)}{\eps})$ by \Cref{eq:EHsigma}. The stretch of $H$ is at most $(2k-1)(1+O(\eps))$ by~\Cref{lm:Stretch}; we can reduce the stretch down to $(2k-1)(1+\eps)$ by scaling $\eps \leftarrow \eps/c$, for a sufficiently large constant $c$, which will affect the sparsity and runtime bounds by constant factors. 	The time needed to construct $H^\sigma$ is $O(\sum_{i\geq 0}|E^{\sigma}_i| \cdot \alpha(m,n)) = O(m \cdot \alpha(m,n))$ by \Cref{lm:Stretch} and \Cref{lm:Ci1}. Thus, the overall time needed to construct $H$, when also considering the 
	runtime $O(\srt(m))$ for computing the partition of $E$ into the sets $\{E^{\sigma}\}_{1\leq \sigma \leq \mu_\epsilon}$,
	is $O(m \cdot \alpha(m,n) \cdot \frac{\log(1/\eps)}{\eps} + \srt(m))$. \qed
\end{proof}

\section{ A Linear Time Algorithm in the \wrdram Model}\label{sec:sparseRAM}

In this section, we prove the second item of \Cref{thm:1}. We follow the same framework as in \Cref{sec:PointerMachine}; our focus, as before, is on constructing a $(2k-1)(1+O(\eps))$-spanner $H^{\sigma}$ for $E^{\sigma}$, for a fixed $\sigma \in [1,\mu_{\eps}]$.  The construction is carried out in levels, where $H_i^{\sigma}$ is constructed at level $i$, and uses a hierarchy of clusters such that each cluster $C\in \mathcal{C}_i$ satisfies two properties that are similar to those used in \Cref{sec:PointerMachine}, namely Properties (\hyperlink{P1}{P1}) and (\hyperlink{P2}{P2}).

We also use a \djset~data structure to represent clusters in $\mathcal{C}_i$. However, our construction relies on a special case of \djset~, where the set of \union~operations are pre-specified at the outset of the construction. Gabow and Tarjan~\cite{GT85} designed a data structure for this special case of \djset~in the \wrdram model; this result is summarized in the following theorem.

\begin{theorem}[Gabow and Tarjan~\cite{GT85}]\label{thm:GabowTarjan} 
Let $T$ be a rooted tree with $n$ vertices. One can design a \djset~data structure in the \wrdram model that maintains disjoint sets of $V(T)$ and supports $m$ $\union$ and $\find$ operations in $O(m+n)$ total time, in which each \union~operation is of the form $\union(v,p_T(v))$ for some non-root vertex $v \in V(T)$. Here $p_T(v)$ denotes the parent of  $v$ in $T$.  
\end{theorem}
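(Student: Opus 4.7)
The plan is to prove this via a two-level micro-macro decomposition of $T$ that exploits word-level parallelism available in the \wrdram~model, a technique originating with Gabow and Tarjan. The overall idea is to break $T$ into \emph{micro-trees} small enough that their entire union state fits in $O(1)$ machine words, handle all intra-micro-tree operations via precomputed tables in $O(1)$ time, and handle the remaining inter-micro-tree operations with a substantially smaller instance of the same tree-structured union-find problem.

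First, I would partition $T$ into connected micro-trees $M_1,\ldots,M_k$, each containing at most $b = \lfloor (\log n)/4 \rfloor$ vertices, so that $k = O(n/\log n)$. Such a decomposition can be computed in $O(n)$ time by a bottom-up greedy traversal that closes a micro-tree as soon as its size reaches $b$ or its subtree is exhausted. The current union state of each micro-tree is then encoded as a bit vector of length at most $b-1$ indicating which internal edges have been contracted by the \union~operations performed so far; since $b = O(\log n) \le O(\wordsize)$, this state fits in a single word. I would then build a single global lookup table indexed by triples (micro-tree shape, state, local query vertex) whose entry gives the in-micro-tree representative of that vertex. The number of such triples is $\poly(n)$, so the table can be built in $O(n)$ time using only shifts and bitwise boolean operations from the restricted instruction set of the \wrdram~model.

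For operations that cross micro-tree boundaries, I would maintain a macro-level data structure whose universe is the set of micro-tree roots; the macro-level union operations again follow parent edges of a tree of size $O(n/\log n)$, so they form another instance of the same tree-structured union-find problem on a strictly smaller universe. Applying the construction recursively a constant number of times shrinks the residual universe below $\log n$, at which point a naive pointer-chasing implementation is trivially linear. A $\find(v)$ is executed by first looking up $v$'s representative within its micro-tree via the table; if that representative is the micro-tree root and the root has been merged upward, the call is forwarded to the macro layer. A $\union(v,p_T(v))$ is handled in-table when the edge $(v,p_T(v))$ is internal to a micro-tree, and is forwarded to the macro layer otherwise. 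Summing over all levels gives $O(m+n)$ total time.

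The main obstacle I expect is twofold. First, realizing the lookup tables under the restricted instruction set of the \wrdram~model (no multiplication) requires a careful integer encoding of micro-tree shapes, which I would handle via balanced-parenthesis bit strings of length $2b$ that can be concatenated with the state and the local vertex index using shifts and bitwise-or to form a single $O(\log n)$-bit table index. Second, hand-offs between the micro and macro layers must be amortized cleanly: once a micro-tree's top component is merged upward, subsequent in-micro-tree finds for that component must redirect to the macro layer, which I would ensure by storing with each micro-tree a single pointer to its current macro representative and updating it only at the one upward-merge event per micro-tree, contributing $O(k) = O(n/\log n)$ total overhead.
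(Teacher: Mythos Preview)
The paper does not prove this theorem; it is quoted verbatim as a result of Gabow and Tarjan~\cite{GT85} and used as a black box. There is therefore nothing in the paper to compare your proposal against.

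That said, your sketch is the standard Gabow--Tarjan micro/macro decomposition and is correct in outline. A couple of points are worth tightening. First, the claim ``the number of such triples is $\poly(n)$, so the table can be built in $O(n)$ time'' is a non sequitur as stated; you need the table to be sublinear in $n$, not merely polynomial. With $b=\lfloor (\log n)/4\rfloor$ the count of (shape, state, local vertex) triples is roughly $4^{b}\cdot 2^{b}\cdot b = n^{3/4}\log n$, which is indeed $o(n)$, but you should say so. Second, the original Gabow--Tarjan implementation avoids enumerating tree shapes altogether: each vertex stores the bitmask of its in-micro-tree ancestors, the micro-tree stores the bitmask of ``marked'' (already linked) vertices, and a microfind is a single AND followed by a lowest-set-bit extraction. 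This is both simpler and sidesteps the encoding issue you flag as an obstacle. Third, your greedy ``close a micro-tree as soon as its size reaches $b$ or its subtree is exhausted'' does not by itself guarantee $O(n/b)$ micro-trees; you need the standard refinement that ensures all but $O(1)$ micro-trees per branching point have size $\Theta(b)$.
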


We emphasize that the \djset~data structure of Gabow and Tarjan in \Cref{thm:GabowTarjan} only works in the \wrdram model.  The tree $T$ in \Cref{thm:GabowTarjan} is called a \emph{union tree} of the \djset~data structure. We use $\link(v)$ to specifically denote the \union~operation of the form $\union(v, p_T(v))$. 

The construction of~\Cref{sec:PointerMachine} achieves a super-linear running time.
To improve this runtime to linear in $m$, we plug the following new ideas on top of the construction of~\Cref{sec:PointerMachine}.

The second term in the super-linear runtime $O(m \cdot \alpha(m,n) \cdot \frac{\log(1/\eps)}{\eps}+\srt(m))$, namely $\srt(m)$,
stems from the time needed to compute the partition of $E$ into the sets $\{E^{\sigma}\}_{1\leq \sigma \leq \mu_\epsilon}$,
which boils down to sorting the indices of the non-empty sets in $\{E^{\sigma}\}_{1\leq \sigma \leq \mu_\epsilon}$. 
In the $\rdrm$ model, we employ a rather simple trick to carry out such an index sorting in time $O(m \cdot \frac{\log(1/\eps)}{\eps})$; the details of this optimization appear in~\Cref{fusion}.

The main obstacle lies in shaving the  factor $\alpha(m,n)$ from the first term $O(m \cdot \alpha(m,n) \cdot \frac{\log(1/\eps)}{\eps})$. 
For this optimization, the two key ideas are the following:

\begin{itemize}
	\item \textbf{Idea 1.~} We use an MST for $G$ as the union tree for the \djset~data structure.
	In the \wrdram model, Fredman and Willard~\cite{FW94} designed an algorithm to construct a minimum spanning tree  in $O(m)$ time. 
	Let	$\mst$ be an arbitrary minimum spanning tree for $G$; we root $\mst$ at an arbitrary vertex $r$. 
	
	\item  \textbf{Idea 2.~} We guarantee that every level-$i$ cluster $C \in \mathcal{C}_i$ \emph{induces} a subtree of $\mst$ of diameter at most $gL_{i-1}$, for some constant $g$. As we will show in the sequel, by forcing clusters to induce subtrees of $\mst$, we are able to use \link~operations to form level-$(i+1)$ clusters from level-$i$ clusters, which is the source of our speed-up. 
	The crux of our construction is in realizing idea 2.  
\end{itemize}
\Cref{thm:GabowTarjan} guarantees that each of the $\union$ and $\find$ operations takes $O(1)$ amortized time. As a result, we shave the $\alpha(m,n)$ factor in the running time of the algorithm from \Cref{sec:PointerMachine}.

Next we proceed to the details of the linear-time construction. 
The construction will satisfy the following two properties of clusters in $\mathcal{C}_i$,
the first of which is identical to Property (\hyperlink{P1}{P1}) of~\Cref{sec:PointerMachine} whereas the second is an adaptation of Property  (\hyperlink{P2}{P2}).

\begin{itemize}[noitemsep]
\item \textbf{(P1')~} 	\hypertarget{P1'}{}
	Each cluster $C\in \mathcal{C}_i$ is a subset of $V$. Furthermore, clusters in $\mathcal{C}_i$ induce a partition of $V$.
\item \textbf{(P2')~} 	\hypertarget{P2'}{} Each cluster $C\in \mathcal{C}_i$ induces a (connected) subtree $\mst[C]$ of $\mst$ with diameter at most $gL_{i-1}$, for some constant $g$ (the same constant used in Idea 2 above which is different than the one used
in  (\hyperlink{P2}{P2})).  
\end{itemize}

We will add all edges of $\mst$ to the   spanner, by setting $H^{\sigma}_{0}$ as  $\mst$,
which adds one unit to the sparsity and lightness.
Property (\hyperlink{P2'}{P2'}) is inherently more restrictive than
 Property (\hyperlink{P2}{P2}), as it aims at guaranteeing the same (perhaps up to a constant factor) diameter bound, but when restricted to subtrees of $\mst$. 

\paragraph{Representing $\mathcal{C}_i$.~} As in~\Cref{sec:PointerMachine}, we use the \djset~data structure to represent clusters in $\mathcal{C}_i$, but  we use the  data structure   provided by~\Cref{thm:GabowTarjan}, which guarantees constant amortized cost.
As a result, we will maintain the property that the representative $r(C)$ of any cluster $C \in \mathcal{C}_i$ is always set to be the \emph{root of the subtree}  $\mst[C]$. By setting the representative of a cluster $C$ to be its root, $C$ can be united with other clusters via $\link(r(C))$, which is crucial for applying the result of~\Cref{thm:GabowTarjan}. The children of $C$ can be united to $C$ by the same way. 

\paragraph{Constructing $H^{\sigma}_i$.~} The construction is the same as the construction of $H^{\sigma}_i$ in \Cref{sec:PointerMachine}. Specifically, we construct a set of level-$i$ clusters $\mx$, the representative graph $R_i$, and the edge set $S'_i$, which is obtained by running the spanner algorithm of~\Cref{thm:HalperinZwick} to $R_i$. Since the \union~and \find~operations now admit $O(1)$ (amortized) time, we derive the following lemma, whose proof follows along similar lines as those in the proof of~\Cref{lm:Stretch}.

\begin{lemma}\label{lm:RAM-Stretch} $d_{H_{\leq i}}(u,v) \leq (2k-1)(1+O(\eps))w(u,v)$ for every edge $(u,v) \in E^{\sigma}_i$, assuming $\eps < 1/(2g)$. Furthermore, $S'_i$ can be constructed in $O(|E^{\sigma}_i|)$ time.
\end{lemma}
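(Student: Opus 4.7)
The plan is to follow the proof of~\Cref{lm:Stretch} essentially verbatim for the stretch bound, since the only change relative to~\Cref{sec:PointerMachine} is that Property (\hyperlink{P2}{P2}) has been replaced by Property (\hyperlink{P2'}{P2'}). But (\hyperlink{P2'}{P2'}) enforces the very same diameter bound $gL_{i-1}$ on the cluster-induced subgraphs (in fact, on subtrees of $\mst$), so every inequality from the proof of~\Cref{lm:Stretch} transfers mutatis mutandis. The runtime argument, however, requires the extra observation that all \textsc{Find} operations issued during the construction of $S'_i$ cost $O(1)$ amortized under the Gabow--Tarjan data structure of~\Cref{thm:GabowTarjan}, so that the factor of $\alpha(m,n)$ incurred in~\Cref{lm:Stretch} is shaved off. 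I expect no real obstacle here; the plan is essentially a bookkeeping exercise to confirm that the proof of~\Cref{lm:Stretch} carries over.

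For the stretch bound, consider an arbitrary edge $(u,v) \in E^{\sigma}_i$. I would split into the same three cases as in~\Cref{lm:Stretch}: (i)~$(u,v) \in S_i$; (ii)~$(u,v)$ is a self-loop, i.e., $r(u) = r(v)$; (iii)~$(u,v)$ is parallel to some source edge $(u',v') \in S_i$ with $w(u',v') \le w(u,v)$. In case~(i), I combine the $(2k-1)$-hop path in $R_i$ furnished by~\Cref{thm:HalperinZwick} with intra-cluster shortest paths in $\mst[C]$, each of weight at most $gL_{i-1} = g\eps L_i$ by (\hyperlink{P2'}{P2'}); the calculation leading to $w(Q) \leq (2k-1)(1+(4g+1)\eps)w(u,v)$ is unchanged. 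In case~(ii), the existence of a path in $\mst[C] \subseteq H^{\sigma}_{\leq i-1}$ of weight at most $g L_{i-1} = g\eps L_i \leq L_i/(1+\eps) \leq w(u,v)$ (using $\eps < 1/(2g)$) yields stretch~$1$. Case~(iii) follows by two applications of the triangle inequality and the diameter bound from (\hyperlink{P2'}{P2'}) on the clusters containing $u,u'$ and $v,v'$, yielding $(2k-1)(1+O(\eps))w(u,v)$ exactly as in~\Cref{eq:simple} and the display that follows.

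For the runtime, I would argue that constructing $S'_i$ reduces to: (a)~evaluating $r(u), r(v)$ for every $(u,v) \in E^{\sigma}_i$ to detect self-loops and to bucket parallel edges by the ordered pair $(r(u),r(v))$, keeping the lightest in each bucket; (b)~assembling the representative graph $R_i$ from the resulting source edges; and (c)~running the Halperin--Zwick construction of~\Cref{thm:HalperinZwick} on $R_i$. Step~(a) performs $O(|E^{\sigma}_i|)$ \textsc{Find} operations, each of $O(1)$ amortized cost by~\Cref{thm:GabowTarjan}, so it costs $O(|E^{\sigma}_i|)$ time in total; the bucketing itself is a radix-style grouping that runs in linear time in the \rdrm model. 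Step~(b) is linear in $|E^{\sigma}_i|$, since no representative vertex of $R_i$ is isolated and hence $|V(R_i)| = O(|E(R_i)|) = O(|E^{\sigma}_i|)$. Step~(c) costs $O(|V(R_i)| + |E(R_i)|) = O(|E^{\sigma}_i|)$ by~\Cref{thm:HalperinZwick}. Summing these yields the desired $O(|E^{\sigma}_i|)$ bound, and completes the proof.
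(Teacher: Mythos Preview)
Your proposal is correct and matches the paper's approach: the paper simply states that the proof of~\Cref{lm:RAM-Stretch} ``follows along similar lines as those in the proof of~\Cref{lm:Stretch},'' and you have correctly identified why --- Property (\hyperlink{P2'}{P2'}) gives the same diameter bound $gL_{i-1}$ (now realized inside $\mst[C] \subseteq H^{\sigma}_{\leq i}$, since $H^{\sigma}_0 = \mst$ in this section), so the stretch analysis is unchanged, while the Gabow--Tarjan data structure of~\Cref{thm:GabowTarjan} shaves the $\alpha(m,n)$ factor in the runtime.
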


\paragraph{Constructing $\mathcal{C}_{i+1}$.~} Our construction of $\mathcal{C}_{i+1}$ relies on the notion of {\em cluster forest} defined below; see \Cref{fig:ClusterForest} for an illustration.

\begin{definition}[Cluster Forest]\label{def:cluster_tree} Let $\mathcal{Y}\subseteq \mathcal{C}_i$ be a set of level-$i$ clusters. A {\em cluster forest} for $\mathcal{Y}$, denoted by $\mathcal{F}_{\mathcal{Y}}$, is a directed forest with a weight function $\omega$ on the edges such that:
	\begin{itemize}
		\item[(1)] Each node $\varphi_C \in \mathcal{F}_{\mathcal{Y}}$ corresponds to a cluster $C\in \mathcal{Y}$,
		\item[(2)] There is a directed edge $(\varphi_{C_1} \rightarrow \varphi_{C_2})$ in the forest $\mathcal{F}_{\mathcal{Y}}$ if $C_2$ contains the parent, say $p_{\mst}(v)$, of the representative, say $v$, of $C_1$. Furthermore, $\omega(\varphi_{C_1} \rightarrow \varphi_{C_2}) = w(v,p_{\mst}(v))$.
	\end{itemize}
\end{definition}

\begin{figure}[!h]
	\begin{center}
		\includegraphics[width=0.9\textwidth]{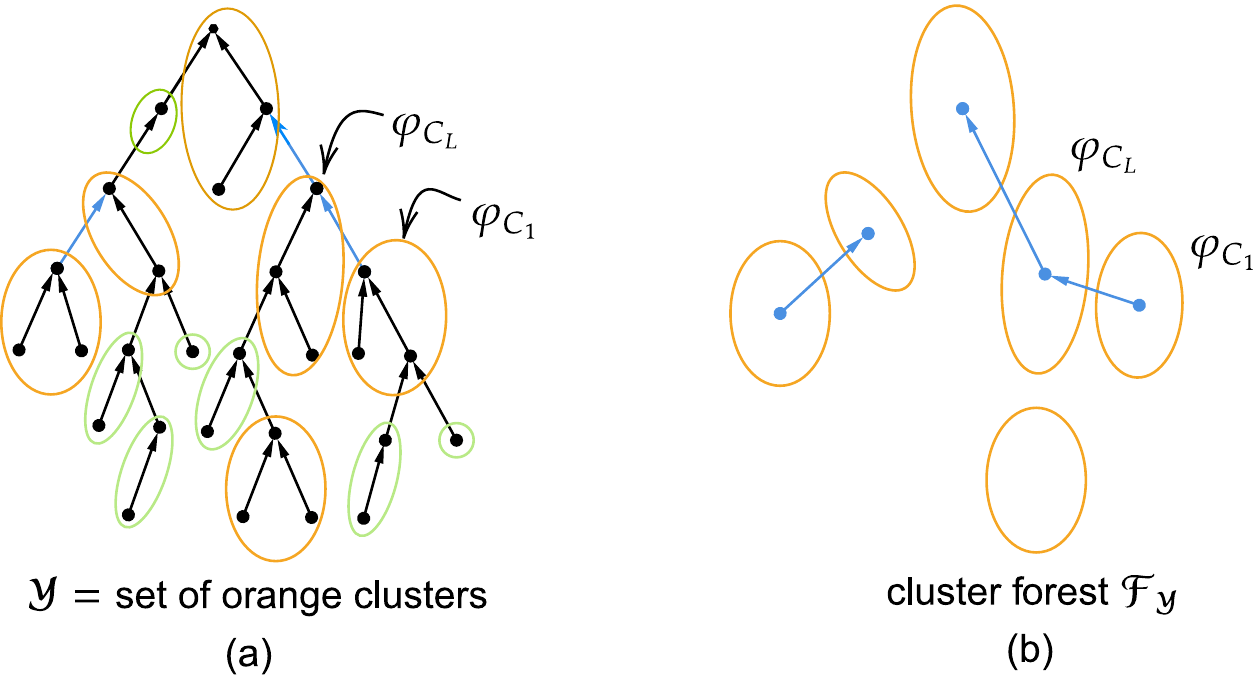}
	\end{center}
	\caption{(a) Level-$i$ clusters induce subtrees of $\mst$ enclosed by oval curve, and (b) a cluster forest $\mf_{\my}$.}
	\label{fig:ClusterForest}
\end{figure}

By definition, every edge of a cluster forest  $\mathcal{F}_{\mathcal{Y}}$ corresponds to an $\mst$ edge.  Let $\calmst_i = \mathcal{F}_{\mathcal{C}_i}$ be the cluster forest defined for the entire set $\mathcal{C}_i$ of level-$i$ clusters;  by Property (\hyperlink{P2'}{P2'}), it holds that $\calmst_i$  is a tree. We stress that $\calmst_i$ is only used in the analysis of our algorithm; indeed, computing $\calmst_i$, at least naively, would require $\Omega(|\mc_i|)$ time, which is too costly. 

For a set $\mathcal{Y}$ of level-$i$ clusters, 
we say that the cluster forest $\mathcal{F}_{\mathcal{Y}}$  is \emph{$L_i$-bounded} if every edge in it has weight at most $L_i$. The following lemma is the crux of our construction.

Recall that $\mx$ denotes the set of  all non-isolated level-$i$ clusters in the representative graph $R_i$.

\begin{lemma}\label{lm:XTree} Let $\mathcal{A}_i$ be the set of $\calmst_i$ edges of weight at most $L_i$, and let $\my$ be the set of nodes that are incident on at least one edge in $\mathcal{A}_i$. Let $\mathcal{F}_{\mathcal{Y}}$ be the forest with node set $\my$ and edge set $\mathcal{A}_i$.  Then the following two conditions hold: 
	\begin{enumerate}
		\item[(1)] $\mx \subseteq \my$.
		\item[(2)] Every tree in $\mathcal{F}^{\prune}_{\mathcal{Y}}$  has at least 2 nodes.
	\end{enumerate}
\end{lemma}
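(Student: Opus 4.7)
\medskip
\noindent\textbf{Plan.} The fulcrum of both parts is the MST cycle property: for any edge $(u,v) \in E(G)$, every edge along the unique $\mst$-path between $u$ and $v$ has weight at most $w(u,v)$. Combined with Property (\hyperlink{P2'}{P2'}), which forces each cluster $C \in \mathcal{C}_i$ to induce a connected subtree $\mst[C]$ rooted at $r(C)$, this will let me convert incidences in $E^\sigma_i$ into incidences of $\calmst_i$-edges.

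\medskip
\noindent\textbf{Proving (1).} Let $C \in \mx$, so $C$ contains some endpoint of an edge $(u,v) \in E^\sigma_i$; say $u \in C$, and (since in the construction of $R_i$ we may restrict attention to clusters with at least one non--self-loop incidence) take $v \notin C$. Consider the unique $\mst$-path $\pi$ from $u$ to $v$. Because $\mst[C]$ is a connected subtree containing $u$ but not $v$, $\pi$ must contain at least one edge leaving $C$. I will argue that every such boundary edge is in fact an edge of $\calmst_i$ incident to $\varphi_C$: by the rooted-subtree structure, the only way to step outside $C$ along $\mst$ is either through the edge $(r(C),p_{\mst}(r(C)))$, or through an edge $(x,y)$ where $x \in C$ and $y$ is an $\mst$-child of $x$ lying outside $C$; in the latter case $y$ must be the root $r(C_y)$ of the cluster $C_y \ni y$, so by \Cref{def:cluster_tree} the edge corresponds to $(\varphi_{C_y} \to \varphi_C)$ in $\calmst_i$. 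In both cases the boundary edge is an edge of $\calmst_i$ incident to $\varphi_C$, and by the cycle property its weight is at most $w(u,v) \le L_i$. Hence $\varphi_C$ is incident to an edge of $\mathcal{A}_i$, so $C \in \my$.

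\medskip
\noindent\textbf{Proving (2).} This part is immediate from the definition: $\my$ consists precisely of the endpoints of the edges in $\mathcal{A}_i$, so in the forest $\mathcal{F}_{\my} = (\my, \mathcal{A}_i)$ every node has degree at least $1$. Consequently, no connected component of $\mathcal{F}_{\my}$ (and therefore of any pruned subforest $\mathcal{F}^{\prune}_{\my}$ obtained by trimming degree-preserving portions of it) can consist of a single isolated node; each tree must contain at least two nodes.

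\medskip
\noindent\textbf{Main obstacle.} The nontrivial step is part (1): one must correctly identify, using Property (\hyperlink{P2'}{P2'}), that the boundary edges of $C$ along any $\mst$-path are exactly the $\calmst_i$-edges incident to $\varphi_C$. Once this structural identification is made, the cycle property of the MST delivers the weight bound essentially for free. The remainder of the argument, including part (2), is a direct consequence of the definitions.
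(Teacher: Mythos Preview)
Your proof is correct and follows essentially the same approach as the paper: both arguments hinge on the MST cycle property together with Property~(\hyperlink{P2'}{P2'}) to show that some $\mst$-edge on the path $\mst[u,v]$ crossing out of $C$ has weight at most $L_i$ and corresponds to a $\calmst_i$-edge incident to $\varphi_C$. Your version is slightly more explicit in identifying the boundary edges of $\mst[C]$ with the $\calmst_i$-edges at $\varphi_C$ (and avoids the paper's case split on whether $(u,v)\in\mst$), while the paper phrases the same step as ``each edge in $\calmst_i[\varphi_{C_u},\varphi_{C_v}]$ corresponds to an edge in $\mst[u,v]$''; part~(2) is immediate in both.
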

\begin{proof}
	Condition (2) follows directly from the construction. We next prove that Condition (1) holds. 

 Let $\varphi_{C_u}$ be the node corresponding to a level-$i$ cluster $C_u$ in $\mathcal{X}$. By the definition of $\mx$, there is an edge $(u,v) \in E^{\sigma}_i$ such that $u \in \varphi_{C_u}$. Let $C_v$ be the level-$i$ cluster containing $v$ and $\varphi_{C_v}$ be the node corresponding to $C_v$. If $(u,v) \in \mst$, then $\varphi_{C_u} \in \my$, and we're done. 
	
We henceforth assume that $(u,v) \nin \mst$. Consider the fundamental cycle $C_{uv}$ of $\mst$ formed by $\mst[u,v]$ and edge $(u,v)$. By the cycle property of $\mst$, every edge $e \in \mst[u,v]$ satisfies $w(e)\leq w(u,v)$. Recall that
 $\calmst_i$  is a tree by Property (\hyperlink{P2'}{P2'}). Moreover, by the definition of $\calmst_i$,  each edge in $\calmst_i[\varphi_{C_u}, \varphi_{C_v}]$ corresponds to an edge in $\mst[u,v]$, and so has weight at most $w(u,v)\leq L_i$. Hence $\varphi_{C_u}$ is incident to an edge of $\mathcal{A}_i$ by the definition of $\mathcal{A}_i$, which yields $\varphi_{C_u} \in \my$. \qed
\end{proof}

We now construct the set of level-$(i+1)$ clusters $\mathcal{C}_{i+1}$ as follows. Let $\mf_{\my}$ be the cluster forest for $\my$ provided by~\Cref{lm:XTree}. We construct $\mathcal{C}_{i+1}$ as follows. Every level-$i$ cluster $C \in \mathcal{C}_i\setminus \my$ becomes a level-$(i+1)$ cluster. 
Then, we   construct a collection $\mathbb{U}$ of subtrees of  $\mathcal{F}_{\my}$, such that each subtree $\mathcal{U}\in \mathbb{U}$ contains at least two nodes and has hop-diameter at most $4$. 
 For each subtree $\mathcal{U}$, we form a level $(i+1)$ cluster $C_{\mathcal{U}} = \cup_{\varphi_C \in \mv(\mU)}C$. 
We note that $\mathbb{U}$ can be constructed greedily via the same algorithm used in \Cref{sec:PointerMachine}, within time  $O(|\my|)$.

In the following lemma we assume that the set of clusters $\my$ is given to us. In the proof of \Cref{thm:1} where we use \Cref{lm:RamCi1}, we will specify the construction of $\my$.

\begin{lemma}\label{lm:RamCi1}
All clusters in $\mathcal{C}_{i+1}$ satisfy Properties (\hyperlink{P1'}{P1'}) and (\hyperlink{P2'}{P2'})
when $\eps \leq 1/(2g)$ and $g \geq 9$, and they can be constructed in time $O(|\my|)$. Furthermore, every cluster $C_\mU\in \mathcal{C}_{i+1}$ that is formed from a subgraph $\mU \in \mathbb{U}$, as described above, is the union of at least $2$ level-$i$ clusters.
\end{lemma}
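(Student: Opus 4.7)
The plan is to establish the two structural properties of the new clusters, then bound the construction time, then verify the size-at-least-$2$ condition. The main obstacles will be (i) verifying the MST-diameter bound under the restriction that each $C_\mU$ must induce a subtree of $\mst$, and (ii) maintaining the invariant that the representative of each cluster is the root of the corresponding MST subtree through the sequence of \link~operations.

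Property (\hyperlink{P1'}{P1'}) is essentially free: the subtrees in $\mathbb{U}$ partition $\my$, so forming one cluster $C_\mU = \bigcup_{\varphi_C \in V(\mU)} C$ per $\mU \in \mathbb{U}$ and carrying over each cluster in $\mc_i \setminus \my$ unchanged yields a partition of $V$. For Property (\hyperlink{P2'}{P2'}), I first argue connectivity: by \Cref{def:cluster_tree}, every edge of $\mU$ corresponds to an MST edge of the form $(r(C_1), p_\mst(r(C_1)))$ joining the root of $\mst[C_1]$ to a vertex of a sibling cluster $C_2 \in V(\mU)$. Gluing the subtrees $\{\mst[C] : \varphi_C \in V(\mU)\}$ along these MST edges yields a connected subtree of $\mst$; since any two disjoint MST subtrees can be joined by at most one MST edge (otherwise a cycle would form in $\mst$), this tree coincides with $\mst[C_\mU]$. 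For the diameter bound, any simple path in $\mst[C_\mU]$ projects to a path in $\mU$ of hop-length at most $4$, hence visits at most $5$ level-$i$ subtrees and crosses at most $4$ inter-cluster MST edges. By Property (\hyperlink{P2'}{P2'}) at level $i$, each intra-cluster segment contributes at most $g L_{i-1} = g \eps L_i$, and each inter-cluster edge has weight at most $L_i$ since it lies in $\mathcal{A}_i$. The total is at most $5 g \eps L_i + 4 L_i \leq \tfrac{5}{2} L_i + 4 L_i \leq g L_i$, using $\eps \leq 1/(2g)$ and $g \geq 9$.

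For the runtime, the greedy two-step construction of $\mathbb{U}$ from $\mf_\my$ (the star-and-attach procedure of \Cref{sec:PointerMachine}, adapted to operate on the forest) runs in $O(|\my|)$ time via a single traversal. Merging the level-$i$ clusters comprising each $\mU$ is done by processing the edges of $\mU$ from leaves upward: for each edge $(\varphi_{C_1} \to \varphi_{C_2})$ I invoke $\link(r(C_1))$, which performs $\union(r(C_1), p_\mst(r(C_1)))$ in $O(1)$ amortized time by \Cref{thm:GabowTarjan}. Over all subtrees in $\mathbb{U}$ the total number of \link~and \find~operations is $O(|\my|)$, so the merging phase runs in $O(|\my|)$ time. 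After all links for a given $\mU$ are processed, I set the representative of the resulting cluster to the (unique) node in $V(\mU)$ whose parent in $\mst$ lies outside $\mU$ (the root of $\mU$); this preserves the invariant that the representative of a cluster is the root of its MST subtree.

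Finally, the size-at-least-$2$ condition follows from the construction of $\mathbb{U}$, mirroring the analogous argument in \Cref{lm:Ci1}. Every node of $\my$ has at least one neighbor in $\mf_\my$ (by the definition of $\my$ via incidence with $\mathcal{A}_i$), so Step~1 of the greedy procedure produces stars of at least $2$ nodes, and Step~2 only enlarges existing stars. Therefore every $\mU \in \mathbb{U}$ contains at least $2$ nodes, so $C_\mU$ is the union of at least $2$ level-$i$ clusters, as claimed.
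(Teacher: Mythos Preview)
Your proof is correct and follows essentially the same approach as the paper's: both argue (P1') trivially, derive (P2') from the hop-diameter-$4$ bound on $\mU$ combined with the $L_i$-boundedness of $\mf_\my$ edges and the inductive diameter bound on level-$i$ clusters, bound the runtime via $|\mv(\mU)|-1$ \link~calls per subtree, and obtain the size-$\geq 2$ claim from the maximal-star construction. Your write-up is somewhat more explicit than the paper's (you spell out the connectivity of $\mst[C_\mU]$ and the maintenance of the root-as-representative invariant, which the paper glosses over), but the argument is the same.
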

\begin{proof}
The proof of this lemma follows similar lines to those in the proof of~\Cref{lm:Ci1} from Section~\Cref{sec:PointerMachine}, hence we aim for conciseness. As mentioned, $\mathbb{U}$ can be constructed  within time  $O(|\my|)$.
	
	Recall that every edge in $\mf_\my$ corresponds to an edge of the form $v\rightarrow p_\mst(v)$ for some vertex $v\in V$. Thus, for each subgraph $\mU\in \mathbb{U}$, the level-$(i+1)$ cluster $C_{\mU}$ can be constructed by calling $|\mv(\mU)|-1$ \link~operations. Therefore, $\{C_\mU: \mU\in \mathbb{U}\}$ can be constructed in time $O(\sum_{\mU\in \mathbb{U}}|\mU|) = O(|\my|)$. Note that we do not pay any running time for constructing clusters in $\mathcal{C}_{i+1}$ that are clusters in $\mathcal{C}_i\setminus \my$. Therefore $\mathcal{C}_{i+1}$ can be constructed in $O(|\my|)$ time.
	
Property  (\hyperlink{P1'}{P1'}) holds trivially. 	Property (\hyperlink{P2'}{P2'}) follows from the fact that each subgraph $\mU \in \mathbb{U}$ has hop diameter at most $4$ and that each edge between two nodes in $\mU$ corresponds to an edge in $\mst$ of length at most $L_i$ since every edge of $\mf_{\my}$ has a weight at most $L_i$ by construction.

Note that $\mathbb{U}$ is constructed using same two-step algorithm used in \Cref{sec:PointerMachine}. Thus, the same argument in \Cref{lm:Ci1} applies to this case. Specifically, any cluster formed in Step (1) of the construction of $\mathbb{U}$ contains at least 2 nodes,  since no vertex in $\mf_\my$ is isolated, and any remaining  node must be grouped in Step (2) of the construction of $\mathbb{U}$ to sugraphs formed in Step (1). \qed
\end{proof}

We are now ready to prove the second item of \Cref{thm:1}.

\begin{proof}[Proof of the second item of~\Cref{thm:1}] Recall that $H = \cup_{1\leq \sigma \leq \mu_\sigma}H^{\sigma}$.	 We employ a similar charging argument to the one used in~\Cref{sec:PointerMachine} to bound $|E(H^{\sigma})|$.	Let $\Delta_{i+1} =  |\mathcal{C}_i| -  |\mathcal{C}_{i+1}|$. Note that $|\mathcal{C}_0| = n$, hence $\sum_{i\geq 0} \Delta_{i+1} \leq n$.  By \Cref{lm:RamCi1} and \Cref{lm:XTree}, we have   $\Delta_{i+1} \geq \frac{|\my|}{2}  = \Omega(|\mx|) = \Omega(|V(R_i)|)$. (Note that $|\mx| = |V(R_i)|$.) Thus, \Cref{eq:EHsigma} of~\Cref{sec:PointerMachine} holds in this case as well. It follows that the sparsity of $H$ is $O(n^{1/k} \cdot \frac{\log(1/\eps)}{\eps})$. The stretch is $(2k-1)(1+O(\eps))$ by~\Cref{lm:RAM-Stretch};
we can reduce the stretch down to $(2k-1)(1+\eps)$ by scaling $\eps \leftarrow \eps/c$, for a sufficiently large constant $c$, which will affect the sparsity and runtime bounds by constant factors.
	The runtime to construct $H^\sigma$ is $O(\sum_{i\geq 0}|E^{\sigma}_i|) = O(m)$ by \Cref{lm:RAM-Stretch}.
	
		We now bound the time to construct the clusters in $\mathcal{C}_{i+1}$. The main difficulty is that the size of $\my$ constructed in \Cref{lm:XTree} could be much larger than $|E^{\sigma}_i|$, hence we cannot bound the runtime by $|E^{\sigma}_i|$
		as we did in \Cref{sec:PointerMachine}. Here we employ a more delicate argument. At the outset of the construction, we divide the edges of $\mst$ into levels as we did for $E^{\sigma}_i$. 
		The level-$i$ edges of $\mst$, denoted by $B_i$, include every edge of length larger than $L_{i-1}$ and at most $L_i$. The time to construct $B_i$ is $O(n\log(1/\eps))$, following the same index-sorting argument used for constructing $E^{\sigma}_i$ efficiently in  \Cref{fusion}.

	At the outset of the construction of $\mathcal{C}_{i+1}$, we assume that we are given the set of edges $\mathcal{D}_{i-1}$ that contains every edge of weight at most $L_{i-1}$ of $\calmst_i$. For level $i = 0$, we set $\mathcal{D}_{i-1} = \emptyset$. Let $\mathcal{B}_i$ be the set of edges of $\calmst_i$ corresponding to edges in $B_i$. The edge set $\mathcal{B}_i$ can be constructed in $O(|B_i|)$ time as follows. For each edge $(u,v)\in B_i$, we add an edge $(\varphi_{C_u}, \varphi_{C_v})$ to $\mathcal{B}_i$, where $C_u$ and $C_v$ are the two level-$i$ clusters containing $u$ and $v$, respectively, which can be found via $\find(u)$ and $\find(v)$. 
	
	The set of edges $\mathcal{A}_i$ defined in \Cref{lm:XTree} is $\mathcal{D}_{i-1}\cup \mathcal{B}_i$. Note that  $|\mathcal{A}_i| \leq |\my|$ since $\mathcal{F}_{\mathcal{Y}}$ is acyclic. Thus, the running time to construct $\mathcal{A}_{i}$ is  $O(|\my|)$, as we have both $\mathcal{D}_{i-1}$ and $\mathcal{B}_i$ stored in a list data structure. To construct the set of $\mathcal{D}_i$ for the construction at the next level, we simply identify edges in  $\mathcal{F}_{\mathcal{Y}}$ that are between two different subgraphs  $\mathbb{U}$ in the construction of $\mathcal{C}_{i+1}$. Thus, the running time to construct $\mathcal{D}_i$ is also $O(|\my|)$. The running time to construct $\mathcal{C}_{i+1}$ is $O(|\my|)$ by \Cref{lm:RamCi1}. It follows that the total running time of the construction of clusters at level $i$ is  $O(|\my|)$.   Since $\Delta_{i+1} \geq \frac{|\my|}{2}$, the time to construct $\mathcal{C}_{i+1}$ is bounded by $O(\Delta_{i+1})$, where $\Delta_{i+1} = |\mathcal{C}_i| - |\mathcal{C}_{i+1}|$. It follows that the total running time to construct clusters over all levels is $\sum_{i\geq 0} O(\Delta_{i+1})  = O(n)$. 
	
	In summary, the running time to construct $H$ is $O((m+n) \cdot \frac{\log(1/\eps)}{\eps}) = O(m \cdot \frac{\log(1/\eps)}{\eps})$.\qed
\end{proof}

\subsection{Index sorting in linear time} \label{fusion}

First, we assume that the word size is $\bar{w} \ge \log(n)$ and all edge weights are bounded above by $2^{\bar{w}}$, as per the \rdrm model.  The total number of different indices is given by $\log_{1+\eps} 2^{\bar{w}} = \Theta(\bar{w}/\eps)$. It follows that the number of integers is $ n' \leq \bar{w}/\eps$.  In this range of values, predecessor search can be done in $O(\log(n’)/\log \bar{w}) = O(\log(1/\eps))$ time using the fusion tree data structure \cite{FW90} (see also~\cite{PT06}). 
Consequently, the time needed to compute the partition of $E$ into the sets $\{E^{\sigma}\}_{1\leq \sigma \leq \mu_\epsilon}$,
which involves index sorting via predecssor search, is bounded by $ O(m\log(1/\eps))$.
Partitioning the set of edges of $\mst$ into levels can be done in the same way; the running time is  $O(n\log(1/\eps))$ as there are $n-1$ edges in $\mst$.
Summarizing, the running time of these partitioning steps is bounded by  $O((m+n)\log(1/\eps)) =  O(m\log(1/\eps))$.

\section{Optimally Sparse and Light Spanners in $O(m\alpha(m,n))$ Time}

Le and Solomon~\cite{LS21} recently show that a $(2k-1)(1+\eps)$-spanner with lightness $O_{\eps}(n^{1/k})$ can be constructed in $O_{\eps}(m\alpha(m,n))$ time; the notation $O_{\eps}(.)$ hides a polynomial factor of $1/\eps$.  However, their spanner is not sparse, i.e., in the worst case, the number of edges of the spanner is $\Omega(m)$, which could be $\Omega(n^{2})$ for dense graphs. Here we use the insights we develop in \Cref{sec:PointerMachine} and \Cref{sec:sparseRAM} on top of the construction of \cite{LS21} to obtain a  $(2k-1)(1+\eps)$-spanner that is both sparse and light as claimed in \Cref{thm:2}.

First, we briefly recap the algorithm of Le and Solomon~\cite{LS21}, called \emph{LS algorithm}. LS algorithm first divides $E$ into two sets of edges:  $E_{\light} = \{e \in : w(e) \leq \frac{w(\mst)}{m\eps}\}$ and $E_{\heavy} = E\setminus E_{\light}$. Every edge in $E_{\light}$ shall be added to the final spanner, and this only incurs an additive $+\frac{1}{\eps}$ in the lightness since:
\begin{equation}\label{eq:weightElight}
	w(E_{\light}) \leq m \cdot \frac{w(\mst)}{m\eps}  \leq \frac{w(\mst)}{\eps}.
\end{equation}
For edges in $E_{\heavy}$, LS algorithm constructs a  $(2k-1)(1+\eps)$-spanner $H_{\heavy}$  that has two properties: 
\begin{equation} \label{eq:Hprop}
	\begin{split}
		(1)\quad &w(H_{\heavy})\leq O_{\eps}(n^{1+1/k})w(\mst) \\
		(2) \quad &d_G(u,v)\leq d_{H_{\heavy}}(u,v)\leq (2k-1)(1+\eps)d_G(u,v) \quad \forall (u,v)\in E_{\heavy}
	\end{split}
\end{equation}
The final spanner of the graph is $H = H_{\heavy}\cup E_{\light}$.  By \Cref{eq:weightElight} and \Cref{eq:Hprop}, it follows that $w(H) = (O_{\eps}(n^{1/k}) + \frac{1}{\eps})w(\mst) = O_{\eps}(n^{1/k}) w(\mst)$, and hence the lightness of $H$ is $O_{\eps}(n^{1/k})$. 

Our first observation is that in LS algorithm, $H_{\heavy}$ does not contain any other edge of $E_{\light}$, except for $\mst$ edges. It follows that if we construct  a $(2k-1)(1+\eps)$-spanner  $H_{\light}$ for  the subgraph of $G$ induced by $E_{\light}\cup \mst$ by applying the construction in \Cref{thm:1}, and set $H = H_{\light} \cup H_{\heavy} \cup \mst$, then $H$ is still a $(2k-1)(1+\eps)$-spanner  of $G$. Furthermore, $w(H) \leq w(E_{\light}) + w(H_{\heavy}) + w(\mst) = O_{\eps}(n^{1/k})w(\mst)$. Thus, the lightness of $H$ is $O_{\eps}(n^{1/k})$.  Observe that $H_{\light}\cup \mst$ has sparsity $O_{\eps}(n^{1/k})$. It follows that, to guarantee that $H$ has sparsity $O_{\eps}(n^{1/k})$, we need to construct  $H_{\heavy}$  such that its sparsity and lightness are both $O_{\eps}(n^{1/k})$. Our construction crucially makes use of the cycle property of $\mst$ following the same spirit of the construction in \Cref{sec:sparseRAM}.

\subsection{The construction of $H_{\heavy}$}

We assume that $E_{\heavy}$ has no edges of weight at least $w(\mst)$ since we could safely remove them from $E_{\heavy}$ without affecting the stretch of the construction.  The spanner $H_{\heavy}$  constructed by LS algorithm is a subgraph of $G_{\heavy}$, which is a subgraph $G$ induced by $E_{\heavy}\cup E(\mst)$.  However, the construction operates on a graph $\tilde{G}$ obtained from  $G_{\heavy}$ by subdividing edges of $\mst$ using \emph{virtual vertices}. Specifically, we define $\bar{w} = w(\mst)/\eps$, and for each edge of $e\in \mst$, if $w(e)\geq \bar{w}$, we subdivide $e$ into $\lceil \frac{w(e)}{\bar{w}} \rceil$ edges, each of weight at most $\bar{w}$, whose total weight is $w(e)$.  Let $\msttilde$  be the subdivided $\mst$ and $\tilde{G} = (\tilde{V}, E(\msttilde)\cup E_{\heavy})$.  That is, $\tilde{G}$ and $G$ share the same set of edges $E_{\heavy}$. Vertices in $\tilde{V}\setminus V$ are called \emph{virtual vertices}. Le and Solomon~\cite{LS21} observed that:

\begin{observation}[Observation 3.4 in \cite{LS21}]\label{obs:virtualNum} $|\tilde{V}| = O(m)$. 
\end{observation}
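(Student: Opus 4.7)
The plan is a direct edge-by-edge accounting of the virtual vertices. Decompose $\tilde V = V \cup V_{\mathrm{virt}}$, where $V_{\mathrm{virt}}$ denotes the vertices introduced by subdivision of $\mst$ edges. Since $G$ is connected we have $|V|=n\le m+1$, so it suffices to show $|V_{\mathrm{virt}}|=O(m)$.

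For each edge $e\in \mst$, the subdivision does nothing when $w(e)<\bar w$ and otherwise replaces $e$ by $\lceil w(e)/\bar w\rceil$ edges, contributing $\lceil w(e)/\bar w\rceil-1$ new vertices. In the regime $w(e)\ge \bar w$ we have $w(e)/\bar w\ge 1$, so $\lceil w(e)/\bar w\rceil\le 2\,w(e)/\bar w$. Summing the contributions over $\mst$ telescopes into a total-weight bound:
$$|V_{\mathrm{virt}}|\;\le\;\sum_{e\in \mst:\,w(e)\ge \bar w}\!\!\bigl(\lceil w(e)/\bar w\rceil-1\bigr)\;\le\;\sum_{e\in\mst}\frac{2\,w(e)}{\bar w}\;=\;\frac{2\,w(\mst)}{\bar w}.$$

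The last step is to plug in the choice of $\bar w$. Because $\bar w$ is calibrated (up to $\eps$-factors) relative to the light/heavy threshold $w(\mst)/(m\eps)$ that defines $E_{\light}$ and $E_{\heavy}$, the ratio $w(\mst)/\bar w$ is $O(m)$. Combined with $|V|\le m+1$, this yields $|\tilde V|=O(m)$ as required.

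There is no real obstacle: the proof is essentially one line once one observes that subdivision of a long $\mst$ edge produces only a constant-factor overhead per unit of $\bar w$, and that the total weight we are dividing by is $w(\mst)$ itself. The only place to be careful is the verification that the chosen $\bar w$ makes $w(\mst)/\bar w=O(m)$, which is built into the definition of $\bar w$ and the assumption $\eps<1$.
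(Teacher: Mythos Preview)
The paper does not supply its own proof of this observation; it is quoted verbatim from \cite{LS21}. Your counting argument is the standard one and is correct.

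One point worth tightening: your last step hedges on the value of $\bar w$. The paper writes $\bar w = w(\mst)/\eps$, but this is evidently a typo (with that value no $\mst$ edge could satisfy $w(e)\ge \bar w$, so there would be no virtual vertices at all). The value consistent with the rest of the section is $\bar w = w(\mst)/m$: then $\bar w/\eps = w(\mst)/(m\eps)$ is exactly the light/heavy threshold, matching the paper's later remark that every edge in $E_{\heavy}$ has weight at least $\bar w/\eps$. Plugging $\bar w = w(\mst)/m$ into your bound gives $|V_{\mathrm{virt}}| \le 2\,w(\mst)/\bar w = 2m$, hence $|\tilde V| \le n + 2m \le 3m+1 = O(m)$, making the final step explicit rather than appealing to a calibration.
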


We now divide $E_{\heavy}$ further into subsets $\{E^{\sigma}\}_{1\leq \sigma \leq \mu_\eps}$ with $\mu_\eps = O(\frac{1}{\eps}\log(1/\eps))$, as we did in \Cref{sec:PointerMachine} (\Cref{eq:EsigmaI}):

\begin{equation}\label{eq:Esigmaixdef}
	E^{\sigma}_i = \left\{e : \frac{L_i}{1+\eps} \leq w(e) \leq L_i \right\} \mbox{ with } L_i = L_{0}/\eps^i, L_0 = (1+\eps)^{\sigma}\bar{w}~. 
\end{equation}

We note that constructing  every $E^{\sigma}_i$ can be done in $O(m)$ by simply sorting all indices $i$ such that $E^{\sigma}_i \not=\emptyset$. This is because the maximum index $i_{\max}$ is $O(\log(n))$ (Claim 3.5 in \cite{LS21}) and hence, the sorting step takes only $O(\log(n)\log\log(n)) = O(n)$ time.

The construction then focuses on each set $E^{\sigma}$ separately. That is, we construct a $(2k-1)(1+O(\eps))$-spanner $H^{\sigma}$ for each set $E^{\sigma}$ in $O(m\alpha(m,n))$ time, and set  $H_{\heavy} = \cup_{1\leq \sigma \leq \mu_\eps}H^{\sigma}$. It follows that the running time to construct $H_{\heavy}$ is $O(m\alpha(m,n)/\eps)$. Here we slightly abuse notation as $H^{\sigma}$ is a subgraph of $\tilde{G}$ instead of being a subgraph of $G_{\heavy}$. However, the difference between $\tilde{G}$ and $G_{\heavy}$ lies only in $\mst$ vs $\msttilde$, and by assuming that $H^{\sigma}$ contains $\msttilde$, we can transform $H^{\sigma}$ to a subgraph of $G_{\heavy}$ by replacing each path of subdividing virtual vertices with the corresponding original edge of $\mst$. 

For notational convenience, we set $H^{\sigma}_{0} = (\tilde{V}, E(\msttilde))$. The construction of $H^{\sigma}$ happens in \emph{levels}: at level $i$, we construct a subgraph $H^{\sigma}_i$ such that $H^{\sigma}_{\leq i}$ is a $(2k-1)(1+O(\eps))$-spanner for edges in $E^{\sigma}_{\leq i}$. Here $H^{\sigma}_{\leq i} = \cup_{0\leq j \leq i}H^{\sigma}_j$ and  $E^{\sigma}_{\leq i} = \cup_{0\leq j \leq i}  E^{\sigma}_{j}$. Recall that $E^{\sigma}_{0} = \emptyset$ since every edge in $E_{\heavy}$ has a weight of at least $\bar{w}/\eps$.  By induction, $H^{\sigma} \defi \cup_{i\geq 0}H^{\sigma}_i$ is a $(2k-1)(1+O(\eps))$-spanner for $\tilde{G}$. 

 Similar to the construction of a sparse spanner in \Cref{sec:PointerMachine}, we construct a hierarchy of clusters, and each level $i\geq 1$ of the construction is associated with a set of clusters $\mathcal{C}_i$ satisfying the following properties:
 
 \begin{enumerate}[noitemsep]
 	\item[(1)] \hypertarget{P1L}{} Each cluster $C\in \mathcal{C}_i$ is a subset of $V$. Furthermore, clusters in $\mathcal{C}_i$ induce a partition of $\tilde{V}$.
 	\item[(2)]\hypertarget{P2L}{} Each cluster $C\in \mathcal{C}_i$  is the union of $\Omega(1/\eps)$ clusters in $\mathcal{C}_{i-1}$ for any $i\geq 2$.  
 	\item[(3)]\hypertarget{P3L}{} Each cluster $C\in \mathcal{C}_i$ induces a subgraph $H_{\leq i-1}^{\sigma}[C]$ of diameter at most $gL_{i-1}$ for some constant $g$.
 \end{enumerate}
  
 By property \hyperlink{P3L}{(3)}, clusters at level $1$ are subgraphs of $H_0$, which is $\msttilde$. The construction is described in the following lemma.
 
 \begin{lemma}[Lemma 3.8 in \cite{LS21}]\label{lm:level1Const}  In time $O(m)$, we can construct a set of level-$1$ clusters $\mathcal{C}_1$ such that, for each cluster $C\in \mathcal{C}_1$, the subtree $\msttilde[C]$ 	of $\msttilde$ induced by $C$ satisfies $L_0 \leq \dm(\msttilde[C]) \leq 14L_0$. 
 \end{lemma}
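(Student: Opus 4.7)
The plan is to partition $\msttilde$ into subtrees by a greedy post-order DFS that exploits two structural facts: every edge of $\msttilde$ has weight at most $\bar w \le L_0$ (since the subdivision step forces this), and $|\tilde V|=O(m)$ by \Cref{obs:virtualNum}. Thus any linear-time bottom-up scan over $\msttilde$ already meets the desired $O(m)$ runtime.

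Root $\msttilde$ at an arbitrary vertex $r$ and process vertices in post-order. At each vertex $v$, maintain a pending subtree $T_v$ (rooted at $v$) of $\msttilde$ that has not yet been assigned to a cluster, together with its height $h(T_v)$ (distance from $v$ to the farthest leaf of $T_v$) and its diameter $\dm(T_v)$; these can be updated in $O(1)$ per merge via the formula $\dm(T_v \cup \{(v,c)\} \cup T_c) = \max\{\dm(T_v),\dm(T_c),h(T_v)+w(v,c)+h(T_c)\}$. When returning from a child $c$ to $v$, consider attaching $T_c$ through the edge $(v,c)$ (of weight $\le L_0$). If the attachment would keep the diameter below a threshold of roughly $3L_0$ we merge; otherwise we first close one of $T_v,T_c$ as a cluster, but only if its current diameter has reached $L_0$ — guaranteeing the lower bound. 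After processing all children of $v$, if $\dm(T_v)\ge L_0$ we close it as a cluster and pass up only $\{v\}$ (with the parent edge) to $v$'s parent; otherwise we pass $T_v$ unchanged.

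Because every merge step inflates the diameter by at most the edge weight ($\le L_0$) plus one additional height term, and we close as soon as the diameter surpasses the threshold, each closed cluster has diameter bounded by a small multiple of $L_0$; a direct arithmetic check gives the $14L_0$ upper bound with slack. The lower bound $\ge L_0$ holds by construction, except for one corner case: if, at the very end, the residual subtree containing $r$ has diameter below $L_0$, we fold it into an arbitrary adjacent closed cluster via the edge joining them, which inflates that cluster's diameter by at most $L_0+L_0=2L_0$ — safely within the $14L_0$ budget.

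The main obstacle is arranging the greedy cut-off rule so that \emph{both} the lower and upper diameter bounds hold simultaneously at high-degree vertices, where many short residual subtrees arrive at a single $v$ and must be merged together without overshooting. The fix is to process the children of $v$ in order, keeping a single pending subtree at $v$ and closing it as soon as its diameter enters $[L_0,3L_0]$ before continuing to absorb further children; since each edge contributes at most $L_0$ and the pending diameter never exceeds $3L_0$ before a cut, no cluster ever grows past $14L_0$, while every cut meets the $L_0$ lower bound. Each vertex is touched $O(1)$ times, so the total work is $O(|\tilde V|)=O(m)$, as required.
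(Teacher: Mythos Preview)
The paper does not supply a proof of this lemma; it is quoted from \cite{LS21} and used as a black box, so there is nothing in-paper to compare against. Your greedy post-order DFS is the standard technique for partitioning a weighted tree into bounded-diameter subtrees, and a correctly stated version of it does prove the lemma.

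However, your description has a genuine gap at the closure step. You write that when $\dm(T_v)\ge L_0$ you ``close it as a cluster and pass up only $\{v\}$'', and at a high-degree vertex you close the pending tree mid-scan and then ``continue to absorb further children''. In both cases $v$ has already been placed in the closed cluster, so passing $\{v\}$ up (or restarting with $T_v=\{v\}$) assigns $v$ to a second cluster and breaks the partition property~\hyperlink{P1L}{(1)}. Closing $T_v\setminus\{v\}$ instead does not help either, since that set is disconnected whenever $v$ carries more than one pending branch.

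The repair is actually simpler than your mid-scan rule: never close while scanning children. Attach every child's pending subtree to $v$; each was passed up with diameter $<L_0$, hence height $<L_0$, and every edge of $\msttilde$ has weight at most $\bar{w}\le L_0$, so after attaching all children $T_v$ has height $<2L_0$ and diameter $<4L_0$ \emph{regardless of the degree of $v$}. Now close $T_v$ (including $v$) iff $\dm(T_v)\ge L_0$ and pass up \emph{nothing} to the parent; otherwise pass $T_v$ up unchanged. The only leftover is the residual at the root, which you already handle by folding it into an adjacent closed cluster; this keeps all diameters below $6L_0$, comfortably within the stated $14L_0$.
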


Thus, by choosing $g\geq 14$, property \hyperlink{P3L}{(3)} is satisfied for clusters in $\mathcal{C}_1$. Property \hyperlink{P1L}{(1)} follows directly from \Cref{lm:level1Const}, and property \hyperlink{P2L}{(2)} is not applicable. 

A crucial component in LS algorithm is a potential function $\Phi: 2^{\tilde{V}}\rightarrow \mathbb{R}^+$ that associates each cluster $C\in \mathcal{C}_i$ with a potential value $\Phi(C)$. Let $\Phi_i = \sum_{C\in \mathcal{C}_i} \Phi(C)$ be the total potential value at level $i$. The potential values of level $1$ clusters are defined as follows:
\begin{equation}\label{eq:level1Potential}
	\Phi(C) = \dm(\msttilde[C]) \quad\forall C\in \mathcal{C}_1
\end{equation}
By \Cref{lm:level1Const}, we have that:
\begin{equation}\label{eq:level1TotalPotential}
	\Phi_1 = \sum_{C\in \mathcal{C}_1}\dm(\msttilde[C]) \leq w(\mst)
\end{equation}

Next, Le and Solomon~\cite{LS21} define a potential change $\Delta_{i+1} = \Phi_i - \Phi_{i+1}$. Let $i_{\max}$  be the maximum level, and define $\Phi_{i_{\max}+1} = 0$. The idea is to bound the weight of the to-be-constructed spanner $H^{\sigma}_i$  by the potential change $O_{\eps}(n^{1/k})\Delta_{i+1}$ (modulo a small additive term that we will describe later). It follows that we can bound the weight of $H^{\sigma}$, again modulo a small additive term, as follows.
\begin{equation}
	w(H^{\sigma}) \leq  O_{\eps}(n^{1/k}) \sum_{i=0}^{i_{\max}} \Delta_{i+1} = O_{\eps}(n^{1/k})(\Phi_1 - \Phi_{i_{\max}+1}) = O_{\eps}(n^{1/k}) \Phi_1 = O_{\eps}(n^{1/k}) w(\mst)
\end{equation}

In~\cite{LS21}, Le and Solomon showed the following lemma, which is the key to their construction.

\begin{lemma}[Lemma 2.6 and Theorem 1.9~\cite{LS21}]
	\label{lm:LSConstruction} For each level $i\geq 1$, there is an algorithm that can compute a subgraph $H^{\sigma}_i$ \emph{induced by a subset of $ E_i^{\sigma}$}, as well as the set of level-$(i+1)$ clusters $\mathcal{C}_{i+1}$ satisfying properties \hyperlink{P1L}{(1)}-\hyperlink{P3L}{(3)}  given a set of clusters $\mathcal{C}_i$ at level $i$,  such that:  
	\begin{enumerate}[noitemsep]
		\item[(1)] $w(H^{\sigma}_i) =  O_{\eps}(n^{1/k}) \Delta_{i+1} + a_i$ for some $a_i\geq 0$ such that $\sum_{1\leq i\leq i_{\max}} a_i = O_{\eps}(n^{1/k})w(\mst)$.
		\item[(2)] for every $(u,v)\in E^{\sigma}_i$, $d_{H^{\sigma}_{\leq i}}(u,v)\leq (2k-1)(1+(10g+1)\epsilon)w(u,v)$.
	\end{enumerate}
	Furthermore, the total running time of the construction of \emph{all levels} is $O(m\alpha(m,n))$  in the pointer-machine model.
\end{lemma}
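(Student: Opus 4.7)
The plan is to invoke the LS21 construction essentially verbatim; this lemma is a restatement of their Lemma 2.6 and Theorem 1.9 in the notation of the present paper, so the main task is to verify that the parameters line up and that the stretch constant $(10g+1)\eps$ comes out correctly.

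At each level $i \ge 1$, given $\mathcal{C}_i$, I would form an intermediate ``cluster graph'' $R_i$ whose vertices are the level-$i$ clusters (each carrying a potential $\Phi(C)$), whose edges are the source edges of $E^{\sigma}_i$ (obtained from $E^{\sigma}_i$ by removing self-loops and collapsing parallel edges to their lightest representative), and which is equipped with an auxiliary spanning structure $\msttilde_i$ reflecting the cluster-hierarchy portion of $\msttilde$. I would then run the specialized spanner routine of LS21 on $R_i$ that simultaneously (a) outputs a $(2k-1)$-spanner $R'_i \subseteq R_i$ whose weight is charged to $O_{\eps}(n^{1/k}) \Delta_{i+1} + a_i$, and (b) partitions a subset of $\mathcal{C}_i$ into ``super-clusters,'' each of which is a subgraph of $R_i$ of bounded diameter that is the union of $\Omega(1/\eps)$ level-$i$ clusters; these super-clusters, combined with the unmodified clusters of $\mathcal{C}_i$, form $\mathcal{C}_{i+1}$. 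The set $H^{\sigma}_i$ consists of the source edges of $E^{\sigma}_i$ corresponding to the representative edges selected in $R'_i$.

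For the weight bound (property 1), I rely on the LS21 telescoping argument: the $\Omega(1/\eps)$-shrinking per super-cluster is what drives the potential drop $\Delta_{i+1}$, and the weight of the selected representative edges is absorbed into $O_{\eps}(n^{1/k}) \Delta_{i+1}$ as in their proof; the additive slack $a_i$ corresponds to $\msttilde$-boundary edges picked up during super-cluster growth, which summed across levels telescopes to $O_{\eps}(n^{1/k}) w(\mst)$ because each such boundary edge is charged to a distinct $\msttilde$-subtree. For the stretch (property 2), I would repeat the triangle-inequality calculation of Lemma 3.1, with $g \ge 14$ taken from \Cref{lm:level1Const}: any source edge $(u,v) \in E^{\sigma}_i$ admits a path with at most $2k-1$ representative edges of weight $\le L_i$, bracketed by at most $2k$ cluster-internal detours of weight $\le g L_{i-1} = g \eps L_i$, and normalizing by $w(u,v) \ge L_i/(1+\eps)$ yields the $(2k-1)(1+(10g+1)\eps)$ stretch; the mildly larger leading constant versus $(8g+1)$ in Lemma 3.1 reflects the extra endpoint detours needed when $(u,v)$ is itself a collapsed parallel edge or a self-loop and has to be rerouted through a virtual-vertex path of $\msttilde$.

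The main obstacle is the $O(m\alpha(m,n))$ runtime in the pointer-machine model, which I would handle as LS21 does, via a \djset data structure maintaining the cluster identities of $R_i$ and $\msttilde_i$ under the level-by-level contraction operations; each edge of $E$ triggers $O(1)$ amortized $\union$/$\find$ calls over its lifetime, and the $\poly(1/\eps)$ slowdown from iterating over the $\mu_\eps$ scales $\sigma$ is absorbed into the $O_{\eps}(\cdot)$ factor. No algorithmic modification is required for the present lemma; the genuinely new difficulty, namely guaranteeing a \emph{count} reduction $|\mathcal{C}_i| - |\mathcal{C}_{i+1}| = \Omega(|V(R_i)|)$ alongside the potential reduction---which is precisely what \Cref{thm:2} demands---is deferred to the subsequent sections and will be addressed there via the MST-clustering insight developed in \Cref{sec:sparseRAM}.
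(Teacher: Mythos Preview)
The paper does not prove this lemma at all: as the attribution in its header indicates, it is quoted verbatim from \cite{LS21} (their Lemma~2.6 and Theorem~1.9), and the paper simply uses it as a black box. Your opening sentence recognizes exactly this, so at the level the paper operates, your proposal is correct.

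That said, a few details in your sketch of the LS21 argument are inaccurate and would mislead a reader who tried to reconstruct the proof from your description. First, your account of the additive term $a_i$ is wrong: it is not a charge for ``$\msttilde$-boundary edges picked up during super-cluster growth'' that telescopes over $\msttilde$-subtrees. As the paper explains right after stating the lemma and again in the proof of \Cref{lm:NewConstruction}, $a_i$ is a corrective term that handles the \emph{degenerate case} (Item~(2) of \Cref{lm:Clustering}), where $\mv_i^{\lowm} = \mv_i$, $|\me_i| = O(1/\eps^2)$, and $\Delta_{i+1}$ can be zero or negative; the bound $\sum_i a_i = O_\eps(n^{1/k}) w(\mst)$ comes from the fact that this case contributes only $O(1/\eps^2)$ edges of weight $\le L_i$ per level, and summing $L_i$ over levels is controlled by $w(\mst)$. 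Second, the value of $g$ is not the $14$ from \Cref{lm:level1Const}; in the LS21 clustering (\Cref{lm:Clustering}) it is $g = 31$, and after this paper's modifications it becomes $g = 42$. These do not affect the validity of citing the lemma, but if you intend your sketch to stand in for a proof, they should be corrected.
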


In \Cref{lm:LSConstruction}, $a_i$ is a corrective term added to handle some edge cases where $\Delta_{i+1} = 0$ or even negative. The stretch is $(2k-1)(1+(10g+1)\epsilon)$ instead of $(2k-1)(1+\eps)$, but we can obtain the stretch $(2k-1)(1+\eps)$ by scaling $\eps \leftarrow \frac{\eps}{10g+1}$.  Note that \Cref{lm:LSConstruction} does not provide any bound on the number of edges of $H^{\sigma}_i$.

To bound the sparsity of $H^{\sigma}$ in our construction, we distinguish between \emph{isolated clusters} and \emph{non-isolated clusters}. A cluster $C\in \mathcal{C}_i$ is non-isolated if it   contains at least one endpoint of an edge in $E(H^{\sigma}_i)$, and otherwise, is isolated. By examining the construction of Le and Solomon carefully, we have that:

\begin{lemma}[Le and Solomon~\cite{LS21}]\label{lm:HiSizeYi}Let $\mathcal{Y}_i\subseteq \mathcal{C}_i$ be the set of all non-isolated clusters. Then $|E(H_i^{\sigma})| = O_{\eps}(n^{1/k})|\mathcal{Y}_i|$.
\end{lemma}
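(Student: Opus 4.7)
\textbf{Proof plan for \Cref{lm:HiSizeYi}.} The plan is to unpack the construction of $H_i^{\sigma}$ inside the LS algorithm and show that every edge added to $H_i^{\sigma}$ at level $i$ is sourced from a spanner computation performed on a cluster graph whose vertex set is in bijection with (a subset of) the non-isolated clusters $\mathcal{Y}_i$. Once this is in place, a direct application of \Cref{thm:HalperinZwick}, combined with the observation that $|\mathcal{Y}_i| \le n$, yields the desired bound.

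First, I would recall from \Cref{lm:LSConstruction} that $H_i^{\sigma}$ is induced by a subset of $E_i^{\sigma}$, and that the LS construction selects these edges by running a sparse-spanner routine (a variant of Halperin--Zwick, possibly applied to several auxiliary subgraphs, to account for the potential-function based step) on a cluster graph $R_i$ whose vertices represent level-$i$ clusters incident on at least one edge of $E_i^{\sigma}$. By the definition of non-isolated clusters in the lemma statement, a level-$i$ cluster $C$ belongs to $\mathcal{Y}_i$ exactly when it is incident on some edge of $E(H_i^{\sigma}) \subseteq E_i^{\sigma}$. Since only non-isolated clusters can contribute endpoints to $E(H_i^{\sigma})$, we have $V(R_i) \subseteq \mathcal{Y}_i$ (up to identifying clusters with their representatives), and in particular $|V(R_i)| \le |\mathcal{Y}_i|$.

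Next, I would bound the output size of the spanner routine applied to $R_i$. By \Cref{thm:HalperinZwick}, a $(2k-1)$-spanner for an $N$-vertex graph has $O(N^{1+1/k})$ edges; applied with $N = |V(R_i)| \le |\mathcal{Y}_i|$, the number of source edges selected into $H_i^{\sigma}$ in this step is at most
\begin{equation*}
O\!\left(|\mathcal{Y}_i|^{1+1/k}\right) ~=~ O\!\left(|\mathcal{Y}_i|\cdot |\mathcal{Y}_i|^{1/k}\right) ~\le~ O\!\left(n^{1/k}\cdot |\mathcal{Y}_i|\right).
\end{equation*}
Any auxiliary selection step (for instance, the potential-based component that is responsible for the corrective term $a_i$ and possibly uses several copies of the spanner algorithm on related cluster graphs) can be bounded in the same fashion, since each such step operates on a graph whose vertex set is a subset of $V(R_i) \subseteq \mathcal{Y}_i$; this is the origin of the $\poly(1/\eps)$ factor hidden in $O_\eps(\cdot)$.

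The main obstacle is in the second step: the LS construction of $H_i^{\sigma}$ is not a single invocation of Halperin--Zwick but a layered procedure that interacts with the potential function $\Phi$, and one has to verify that \emph{every} edge of $H_i^{\sigma}$ -- not merely those produced by the primary spanner call -- has both endpoints in clusters belonging to $\mathcal{Y}_i$, and that the auxiliary cluster graphs used internally have at most $|\mathcal{Y}_i|$ vertices. This requires going back to the construction of \cite{LS21} and, for each sub-step that inserts edges into $H_i^{\sigma}$, explicitly identifying the cluster graph it operates on and verifying its vertex set is contained in $\mathcal{Y}_i$. Once this structural audit is done, summing $O(|\mathcal{Y}_i|^{1+1/k})$ with at most $\poly(1/\eps)$ multiplicative overhead across the sub-steps closes the argument.
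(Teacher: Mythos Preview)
Your overall plan---audit each sub-step of the LS construction of $H^{\sigma}_i$ and bound the edges contributed by each---is the right shape, but your guess about what those sub-steps look like is off, and the bound you propose for them does not match how the paper actually argues.

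The LS construction of $H^{\sigma}_i$ (spelled out in the proof of \Cref{lm:NewConstruction}) is not a sequence of Halperin--Zwick calls on subgraphs of a single cluster graph. It hinges on a \emph{degree-specific partition} of the nodes of the cluster graph $\mg_i$ into $\mv_i^{\high}$ (nodes with $\Omega(1/\eps)$ incident edges in $\me_i$) and $\mv_i^{\lowp}\cup\mv_i^{\lowm}$ (nodes with $O(1/\eps)$ incident edges). The three steps are: (Step~1) add the edges $\me(\mx)\cap\me_i$ internal to each subgraph $\mx\in\mathbb{X}$; (Step~2) run Halperin--Zwick \emph{only} on the subgraph induced by $\mv_i^{\high}$; (Step~3) add \emph{every} edge of $\me_i$ incident to a low-degree node. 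Only Step~2 is a spanner call and is bounded by $O(|\mv_i^{\high}|^{1+1/k})\le O(n^{1/k})|\mathcal{Y}_i|$ as you suggest. Step~3, however, is not a spanner call at all: it dumps in all incident edges, and the bound $O(|\mathcal{Y}_i|/\eps)$ comes from the $O(1/\eps)$ degree cap on low-degree nodes---this is precisely where the $\poly(1/\eps)$ in $O_\eps(\cdot)$ originates, not from ``several copies of the spanner algorithm''. Step~1 is bounded by $O(|\mathcal{Y}_i|)$ via Item~(3) of \Cref{lm:Clustering}, which says $|\me(\mx)\cap\me_i|=O(|\ma_\mx|)$ where $\ma_\mx$ are the nodes of $\mx$ incident to an edge in $\me_i$. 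So your ``structural audit'' would succeed numerically, but the mechanism you anticipate for the non-spanner steps is wrong; you need the degree partition and the per-subgraph edge bound from \Cref{lm:Clustering}, not repeated applications of \Cref{thm:HalperinZwick}.
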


By property \hyperlink{P2L}{(2)}, the number of clusters is geometrically decreasing when $\eps$ is sufficiently smaller than $1$, and hence, the total number of clusters at all levels is $O(|\mathcal{C}|_1)$. This implies that:
\begin{equation}\label{eq:HsigmaLooseBound}
	|E(H^{\sigma})| = \sum_{i\geq 1}|E(H^{\sigma}_i)| = \sum_{i\geq 1} O_{\eps}(n^{1/k})|\mathcal{C}_i| =  O_{\eps}(n^{1/k}) |\mathcal{C}_1| =  O_{\eps}(n^{1/k}) |\tilde{V}|
\end{equation}

However, $|\tilde{V}|$ could be up to $\Omega(m)$ as it contains virtual vertices (\Cref{obs:virtualNum}). Thus, \Cref{eq:HsigmaLooseBound} does not provide any meaningful bound on the number of edges of $H^{\sigma}$.

We now describe our idea to modify LS algorithm and to bound the number of edges in $H^{\sigma}_i$. For each cluster $C\in \mathcal{C}_i$, we introduce two new types of clusters: \emph{non-virtual clusters}, denoted by $\mathcal{N}_i$, and \emph{virtual clusters}, denoted by $\mathcal{M}_i$.  A cluster $C\in \mathcal{C}_i$ is virtual if $C$ only contains virtual vertices, i.e., $C\subseteq \tilde{V}\setminus V$; otherwise $C$ is non-virtual. Since a non-isolated cluster contains at least one non-virtual vertex, which is the endpoint of an edge in $E(H^\sigma_i)$,  we have:
\begin{observation}\label{obs:} $\mathcal{Y}_i \subseteq \mathcal{N}_i$.
\end{observation}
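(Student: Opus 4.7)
The plan is to establish the inclusion by contrapositive: I will show that every \emph{virtual} cluster $C \in \mathcal{C}_i$ (i.e., $C \subseteq \tilde{V} \setminus V$) is necessarily \emph{isolated}, so that $C \notin \mathcal{Y}_i$. Once this is done, the desired inclusion $\mathcal{Y}_i \subseteq \mathcal{N}_i$ follows immediately from the definitions of $\mathcal{Y}_i$, $\mathcal{N}_i$, and $\mathcal{M}_i$.

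The only input required is item~(1) of~\Cref{lm:LSConstruction}, which asserts that the level-$i$ spanner subgraph $H^{\sigma}_i$ is induced by a subset of $E^{\sigma}_i$. By the definition in~\eqref{eq:Esigmaixdef}, together with the construction of the partition, we have $E^{\sigma}_i \subseteq E^{\sigma} \subseteq E_{\heavy} \subseteq E$, so every edge of $H^{\sigma}_i$ is an edge of the original graph $G$. In particular, both endpoints of any such edge lie in the original vertex set $V$, and therefore no vertex of $\tilde{V} \setminus V$ — that is, no virtual vertex introduced when subdividing $\mst$-edges to pass from $\mst$ to $\msttilde$ — can serve as an endpoint of an edge of $E(H^{\sigma}_i)$.

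Consequently, if $C \subseteq \tilde{V}\setminus V$, then no vertex of $C$ is incident to any edge of $E(H^{\sigma}_i)$, so $C$ is isolated and hence $C \notin \mathcal{Y}_i$. Taking the contrapositive yields $\mathcal{Y}_i \subseteq \mathcal{N}_i$, as claimed. I do not foresee any obstacle here: the observation is really just a definitional unpacking of the fact that the spanner edges selected at each level $i \ge 1$ come only from the ``heavy'' portion of the original edge set, and never from the subdivided backbone $\msttilde \setminus \mst$ (whose edges were already paid for once and for all by setting $H^{\sigma}_0 = (\tilde{V}, E(\msttilde))$).
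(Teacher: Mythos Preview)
Your proof is correct and follows essentially the same reasoning as the paper, which justifies the observation in the single sentence preceding it (``Since a non-isolated cluster contains at least one non-virtual vertex, which is the endpoint of an edge in $E(H^\sigma_i)$\ldots''); you simply phrase the same argument contrapositively and supply more detail. One minor quibble: the fact that $H^{\sigma}_i$ is induced by a subset of $E^{\sigma}_i$ appears in the main statement of \Cref{lm:LSConstruction}, not in its item~(1).
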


Following the same idea of the construction in \Cref{sec:sparseRAM}, our goal is to construct a set of cluster $\mathcal{C}_{i+1}$ such that (in addition to properties in \Cref{lm:LSConstruction}) $|\mathcal{N}_{i}| - |\mathcal{N}_{i+1}| = \Omega(|\mathcal{Y}_i|)$. For notational convenience, we  define $\mathcal{N}_{i_{\max}+1} = \emptyset$.  By the same argument in  \Cref{sec:sparseRAM} and using \Cref{lm:HiSizeYi}, we could show that $|E(H^{\sigma})| = O_{\eps}(n^{1/k})|\mathcal{N}_1|$. Recall by the definition of non-virtual clusters that $|\mathcal{N}_1|\leq n$. It follows that  $|E(H^{\sigma})| = O_{\eps}(n^{1+1/k})$, which implies the desired sparsity bound. These ideas are formalized in the following lemma, whose proof is provided in \Cref{subsec:clustering}.

 \begin{lemma} \label{lm:NewConstruction} For each level $i\geq 1$, there is an algorithm that can compute a subgraph $H^{\sigma}_i$ \emph{induced by a subset of $ E_i^{\sigma}$}, as well as the set of level-$(i+1)$ clusters $\mathcal{C}_{i+1}$ satisfying properties \hyperlink{P1L}{(1)}-\hyperlink{P3L}{(3)}  given a set of clusters $\mathcal{C}_i$ at level $i$,  such that:  
 	\begin{enumerate}[noitemsep]
 		\item[(1)] $w(H^{\sigma}_i) =  O_{\eps}(n^{1/k}) \Delta_{i+1} + a_i$ for some $a_i\geq 0$ such that $\sum_{1\leq i\leq i_{\max}} a_i = O_{\eps}(n^{1/k})w(\mst)$.
 		\item[(2)] for every $(u,v)\in E^{\sigma}_i$, $d_{H^{\sigma}_{\leq i}}(u,v)\leq (2k-1)(1+(10g+1)\epsilon)w(u,v)$.
 		\item[(3)] $|E(H^\sigma_{i})| = O_{\eps}(n^{1/k})|\mathcal{Y}_i|$. 
 		\item[(4)] $|\mathcal{N}_{i}| - |\mathcal{N}_{i+1}| \geq |\mathcal{Y}_i|/2$. 
 	\end{enumerate}
 	Furthermore, the total running time of the construction of all levels is $O_{\eps}(m\alpha(m,n))$ in the pointer-machine model.
 \end{lemma}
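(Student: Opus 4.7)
The plan is to keep the Le--Solomon construction intact for everything that concerns properties (1) and (2) of the lemma, i.e., the edges we put into $H^{\sigma}_i$ and the potential-based lightness bookkeeping, inheriting both immediately from \Cref{lm:LSConstruction}. The entire modification will happen inside the formation of $\mathcal{C}_{i+1}$ from $\mathcal{C}_i$, whose description in~\cite{LS21} proceeds in five steps. As flagged in~\Cref{tech}, exactly two of those five steps build new clusters without guaranteeing a drop in $|\mathcal{C}_i|$, and — crucially — both of them merge clusters along edges of the subdivided cluster--MST $\msttilde_i$. These are the two steps we replace with the MST-clustering idea from \Cref{sec:sparseRAM}. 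The hardest thing, as discussed, will be making this replacement compatible with LS's potential function $\Phi$, because $\msttilde_i$ carries weights on both edges and vertices and contains virtual vertices.

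The core ingredient is an analog of~\Cref{lm:XTree} for $\msttilde_i$. Using that $\msttilde_i$ is derived from $\mst$ via subdivision by virtual vertices, I would first establish a cycle property for $\msttilde_i$ saying that for any non-$\msttilde_i$ edge $(u,v)$ of the representative graph of weight at most $L_i$, every edge on the $\msttilde_i$-path between $u$ and $v$ has weight at most $L_i$. Given this, defining $\mathcal{A}_i$ as the set of edges of $\msttilde_i$ of weight at most $L_i$ and $\mathcal{Y}$ as the nodes incident to $\mathcal{A}_i$, the same argument as in~\Cref{lm:XTree} gives $\mathcal{Y}_i \subseteq \mathcal{Y}$ and guarantees that each connected component of the induced subforest has at least two nodes. (This is where the observation $\mathcal{Y}_i \subseteq \mathcal{N}_i$ enters: by the way $\mathcal{Y}_i$ is defined from endpoints of edges in $E(H^\sigma_i)$, it suffices to consolidate non-virtual clusters.) Then, exactly as in the two-step construction of~\Cref{sec:sparseRAM} (greedy stars followed by a cleanup of orphan nodes), we form a collection of subtrees $\mathbb{U}$ of hop-diameter at most $4$, each merging at least two nodes of $\mathcal{Y}$, and turn each such subtree into a single level-$(i+1)$ cluster by \union ing along $\msttilde_i$ edges.

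Property (3) is then inherited directly from~\Cref{lm:HiSizeYi}, which is already implicit in the LS bookkeeping. Property (4) is the easy count: every node of $\mathcal{Y}_i$ is swept into a subtree of $\mathbb{U}$ of size at least $2$, and each such subtree contributes exactly one cluster to $\mathcal{N}_{i+1}$; hence $|\mathcal{N}_i| - |\mathcal{N}_{i+1}| \ge |\mathcal{Y}_i|/2$. To be sure we have not broken properties (1) and (2) — and this is the main obstacle — we must verify that the two LS-steps we replaced (i) still produce clusters that induce subgraphs of diameter at most $gL_i$ in $H^\sigma_{\leq i}$, satisfying property (\hyperlink{P3L}{3}), and (ii) still afford at least the same drop $\Delta_{i+1}$ in the LS potential $\Phi$ as the original LS-steps did, so that the weight bound of property (1) survives. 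For (i) this is immediate because edges along subtrees of $\mathbb{U}$ have weight at most $L_i$ and hop-diameter at most $4$, mirroring \Cref{lm:RamCi1}. For (ii) the key point is that our replacement is strictly more aggressive than LS: every merge we perform is also a legitimate LS-merge along $\msttilde_i$, so the argument bounding $\Phi_i - \Phi_{i+1}$ in~\cite{LS21} carries over verbatim, up to re-accounting for the already-existing corrective term $a_i$.

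Running time follows the pattern of~\Cref{sec:sparseRAM} adapted to the pointer-machine model. At level $i$, identifying $\mathcal{A}_i$ and building the cluster forest takes $O(|\mathcal{Y}| + |E^\sigma_i|)\alpha(m,n)$ using \djset, and the two-step greedy grouping is linear in $|\mathcal{Y}|$. Summing across levels, $\sum_i |\mathcal{Y}_i|$ telescopes against $|\mathcal{N}_1| \leq n$ via property (4) combined with the geometric shrinkage of $|\mathcal{C}_i|$ from property (\hyperlink{P2L}{2}), so the total clustering cost is $O_\epsilon(n\alpha(m,n))$; adding LS's $O(m\alpha(m,n))$ for the spanner edges yields the claimed $O_\epsilon(m\alpha(m,n))$ total.
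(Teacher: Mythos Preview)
Your proposal has a genuine gap centered on property~(4), and a secondary one on property~(1)/(P2').

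On property~(4): the count you give, ``every node of $\mathcal{Y}_i$ is swept into a subtree of $\mathbb{U}$ of size at least $2$,'' establishes only $|\mathcal{C}_i|-|\mathcal{C}_{i+1}|\ge|\mathcal{Y}_i|/2$, not $|\mathcal{N}_i|-|\mathcal{N}_{i+1}|\ge|\mathcal{Y}_i|/2$. The latter is what you need, since $\mathcal{N}_i$ counts \emph{non-virtual} clusters only. A subtree of $\mathbb{U}$ of size~$2$ that contains one non-isolated (hence non-virtual) node and one \emph{virtual} node becomes a single non-virtual cluster at level $i{+}1$ but consumed only one non-virtual cluster at level~$i$, so contributes nothing to $|\mathcal{N}_i|-|\mathcal{N}_{i+1}|$. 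Your set $\mathcal{Y}$ (nodes incident to light $\msttilde_i$-edges) certainly contains virtual nodes, so this configuration arises. The paper's fix is not to replace the two offending LS steps wholesale with the \Cref{sec:sparseRAM} greedy-star construction, but to stay inside the five-step LS framework and modify Steps~2 and~5B minimally so that every subgraph $\mathcal{X}$ formed is \emph{good}: either it contains no non-isolated node, or it contains at least two non-virtual nodes. The cycle-property analog you mention is exactly the tool (\Cref{lm:cycle-Prop}), but it is used in a more delicate way than in \Cref{lm:XTree}: it shows that any non-isolated node has a nearby (weight $\le L_i$) non-virtual neighbor along $\msttilde_i$, and the modified Steps~2 and~5B are engineered so that this neighbor ends up in the same subgraph.

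On property~(1) and (P2'): your hop-diameter-$4$ subtrees have $O(1)$ nodes each, which violates the LS requirement (P2')~$|\mathcal{V}(\mathcal{X})|=\Omega(1/\eps)$; this requirement is what drives the geometric decay of $|\mathcal{C}_i|$ and is also entangled with the potential accounting (Item~(1) of \Cref{lm:Clustering}), since $\Phi(C_{\mathcal{X}})=\adm(\mathcal{X})$ depends on the subgraph you actually form. The claim that ``every merge we perform is also a legitimate LS-merge, so the potential argument carries over verbatim'' is unsupported --- the LS potential drop (Equation~\eqref{eq:averagePotential}) is proved step-by-step for the specific subgraphs LS forms, and does not survive a wholesale replacement. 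The paper avoids this by keeping the LS subgraph structure (trees of augmented diameter $\Theta(L_i)$ with $\Omega(1/\eps)$ nodes) and only tweaking how branching nodes (Step~2) and long-path segments (Step~5B) are grouped; the augmented-diameter bounds inflate slightly (hence $g$ grows from $31$ to $42$), but all items of \Cref{lm:Clustering} are re-verified.
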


In the next section, we prove \Cref{thm:2}, assuming that \Cref{lm:NewConstruction} holds.

\subsection{Proof of \Cref{thm:2}}\label{subsec:proofThm2}

Recall that $H = H_{\light} \cup H_{\heavy}\cup \mst$, where $H_{\light}$ is a $(2k-1)(1+\eps)$-spanner of $E_{\light}$. By \Cref{thm:1},  $H_{\light}$ can be constructed in  $O(m\alpha(m,n)\poly(1/\eps) + \srt(m))$ in the pointer-machine model, and in $O(m\poly(1/\eps))$ time  in the \wrdram model.  $\mst$ can be constructed in $O(m\alpha(m,n))$ by the pointer-machine model by Chazelle's algorithm~\cite{Chazelle00}. By \Cref{lm:NewConstruction}, the running time to construct $H^{\sigma}$ is $O(m\alpha(m,n)\poly(1/\eps))$, which implies the running time to construct $H$ is $O(m\alpha(m,n)\poly(1/\eps))\mu_{\eps} = O(m\alpha(m,n)\poly(1/\eps))$. Thus, the running time to construct $H$ is $O(m\alpha(m,n)\poly(1/\eps) + \srt(m))$ in the pointer-machine model and is $O(m\alpha(m,n)\poly(1/\eps))$  in the \wrdram model.

We now focus on bounding the sparsity and lightness of $H$. By Item (1) in \Cref{lm:NewConstruction}, we have that:
\begin{equation}\label{eq:HsigmaPreciseBound}
	\begin{split}
			w(H^{\sigma}) &= \sum_{i=0}^{i_{\max}}w(H^{\sigma}_i) =   \sum_{i=0}^{i_{\max}} O_{\eps}(n^{1/k})\Delta_{i+1} + a_i \\
			&= O_{\eps}(n^{1/k}) (\Phi_1) + \sum_{i=0}^{i_{\max}} a_i = O_{\eps}(n^{1/k})w(\mst),
	\end{split}
\end{equation}
by \Cref{eq:level1TotalPotential} and Item (2) of \Cref{lm:NewConstruction}. Furthermore, by Item (4) of \Cref{lm:NewConstruction}, it follows that:
\begin{equation}\label{eq:HsigmaSize}
	\begin{split}
			|E(H^{\sigma})|  &= \sum_{i=0}^{i_{\max}} |E(H^{\sigma}_i)| =  \sum_{i=0}^{i_{\max}} O_{\eps}(n^{1/k})|\mathcal{Y}_i| \quad \mbox{(by Item (3) of \Cref{lm:NewConstruction})} \\
			&=   \sum_{i=0}^{i_{\max}} O_{\eps}(n^{1/k}) (|\mathcal{N}_{i}| - |\mathcal{N}_{i+1}|)  \quad \mbox{(by Item (4) of \Cref{lm:NewConstruction})}\\
			&= O_{\eps}(n^{1/k})  |\mathcal{N}_1| = O_{\eps}(n^{1+1/k}).
	\end{split}
\end{equation}
It follows that $w(H_{\heavy}) = \sum_{\sigma \in [1,\mu_\eps]}w(H^{\sigma}) = O_{\eps}(n^{1/k})w(\mst)$ and $|E(H_{\heavy})| = \sum_{\sigma \in [1,\mu_\eps]}|E(H^{\sigma})| =  O_{\eps}(n^{1+1/k})$ since $\mu_\eps = O(\log(1/\eps)1/\eps)$.

Observe that $w(H_{\light})\leq w(E_{\light}) \leq w(\mst)/\eps$ by \Cref{eq:weightElight}. Furthermore, $|E(H_{\light})| = O_{\eps}(n^{1+1/k})$ by \Cref{thm:2}. We then conclude that:
\begin{equation*}
	\begin{split}
		w(H) &\leq w(H_{\light}) + w(H_{\heavy}) = O_{\eps}(n^{1/k})w(\mst)\\
		|E(H)|&= |E(H_{\light})| + |E(H_{\heavy})| = O_{\eps}(n^{1+1/k}).
	\end{split}
\end{equation*}
That is, the sparsity and lightness of $H$ are both $O_{\eps}(n^{1/k})$.

We now bound the stretch of $H$. Let $(u,v)$ be any edge in $G$. If $(u,v)$ is in $E_{\light}$, then the stretch of $(u,v)$ is $(2k-1)(1+\eps)$ in $H_{\light}$. If $(u,v)\in \mst$, then the stretch is $1$ since $H$ contains $\mst$ as a subgraph. Otherwise, $(u,v)\in E_{\heavy}$, and this means there exist $\sigma \in [1,\mu_\eps]$ and  $i\in [1,i_{\max}]$ such that $(u,v)\in E^{\sigma}_i$. By Item (1) in \Cref{lm:NewConstruction}, the stretch of $(u,v)$ in $H^{\sigma}_{\leq i}$, and hence in $H_{\heavy}$, is $(2k-1)(1+(10g+1)\epsilon)$. In summary, the stretch in $H$ of any edge $(u,v)\in E(G)$ is at most $(2k-1)(1+(10g+1)\epsilon)$. By scaling $\eps \leftarrow \eps/(10g+1)$, we obtain a spanner of stretch $(2k-1)(1+\epsilon)$, and with the same lightness and sparsity bounds.  \qed

\subsection{Construction of $H^{\sigma}_i$ and $\mathcal{C}_{i+1}$}\label{subsec:clustering}
 
In this section, we construct $H^{\sigma}_i$ and $\mathcal{C}_{i+1}$ with properties claimed in \Cref{lm:NewConstruction}. Without loss of generality, we assume that $\eps$ is sufficiently small, and in particular, $\eps$ is smaller than $1/(c\cdot g)$ for any constant $c$. We now introduce new notation used in this section.

\paragraph{Notation.~} We consider graphs with weights on both \emph{edges and vertices} in this section.  We define the \emph{augmented weight} of a path to be the total weight of all edges and vertices along the path. The \emph{augmented distance} between two vertices in $G$ is defined as the minimum augmented weight of a path between them in $G$. The augmented diameter of $G$ is denoted by  $\adm(G)$, which is the maximum pairwise augmented distance in $G$.

\paragraph{Cluster graphs.~}  The construction of $\mathcal{C}_{i+1}$ is done via a \emph{cluster graph} $\mg_i(\mv_i, \me_i',\omega)$ that has weights on both edges and nodes (we use nodes to refer to vertices of $\mg_i$). Each node $\varphi_C \in \mv_i$ corresponds to a level-$i$ cluster $C$ and has weight:
\begin{equation}\label{eq:nodeWeight}
	\omega(\varphi_C) = \Phi(C)
\end{equation}
That is, the weight of each node is the \emph{potential value} of its corresponding cluster.   The edge set $\me'_i$ is the union of two edge sets $\me_i\cup \msttilde_i$:
\begin{itemize}[noitemsep]
	\item Each edge $\mbe = (\varphi_{C_u}, \varphi_{C_v})\in \me_i$ corresponds to an edge $(u,v)\in E^{\sigma}_i\cup E(\msttilde)$ where $C_u$ and $C_v$ are level-$i$ clusters containing $u$ and $v$, respectively. Furthermore, $\omega(\mbe) = w(u,v)$.
	\item $\me_i$ corresponds to a subset of edges of $E^{\sigma}_i$ and $\msttilde_i$ corresponds to a subset of edges of $\msttilde$, the subdivided $\mst$. $\msttilde_i$ induces a minimum spanning tree of $\mg_i$, and we abuse notation by denoting $\msttilde_i$ the spanning tree of $\mg_i$ by edges in $\msttilde_i$.
\end{itemize}  
 We refer readers to Lemma 3.16 in \cite{LS21} for the construction of $\mg_i$. At a high level, the construction removes edges that are self-loops, parallel edges, and those that have stretch at most $(2k-1)(1+6g\eps)$ in $\msttilde_i$ as these edges already have a good stretch. 

\begin{lemma}[Lemma 3.16 and Lemma 3.22~\cite{LS21}]\label{lm:GiConstr} $\mg_i$ can be constructed in $O((|\mv_i| + |E^{\sigma}_i|)\alpha(m,n))$ time. Furthermore, if the subset of edges of $E^{\sigma}_i$ corresponding to  $\me_i$ has stretch $(2k-1)(1+s\eps)$ in $H^{\sigma}_{\leq i}$ for some constant $s$ that only depends on $g$, then every edge in $E^{\sigma}_i$ has stretch $(2k-1)(1 + \max\{s+4g, 10g\}\eps)$ in  $H^{\sigma}_{\leq i}$ when $\eps \leq\frac{1}{\max\{s+4g, 10g\}}$. 
\end{lemma}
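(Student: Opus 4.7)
The plan is to handle the two parts of the lemma separately: first the $O((|\mv_i| + |E^{\sigma}_i|)\alpha(m,n))$ construction time, and then the conditional stretch preservation. The former is a purely algorithmic / data-structural argument, while the latter is a short case analysis on the reason an edge may have been deleted from $E^{\sigma}_i$ when forming $\me_i$.

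For the construction, I would keep the level-$i$ clusters in a standard \djset structure, so that every \textsc{Find} lookup costs $O(\alpha(m,n))$ amortized. The node set $\mv_i$ is produced by one pass over the edges incident to level-$i$ clusters: for each endpoint a \textsc{Find} identifies its cluster $C$, and the node $\varphi_C$ with weight $\Phi(C)$ is materialized the first time $C$ is encountered. The spanning structure $\msttilde_i$ is obtained by scanning the edges of $\msttilde$ and keeping those whose endpoints lie in distinct level-$i$ clusters; because $\msttilde$ is a tree of $\tilde G$, its contraction along clusters is automatically a tree and, by the cycle property together with the node-weight definition, coincides with the MST of $\mg_i$. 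Finally, $\me_i$ is obtained from $E^{\sigma}_i$ by discarding (a) self-loops, tested by comparing representatives; (b) parallel edges, keeping only the lightest per unordered representative pair via bucketing on the pair of representatives; and (c) edges whose endpoints are already $(2k-1)(1+6g\eps)$-close along $\msttilde_i$, tested with a linear-time LCA / path-maximum data structure built over $\msttilde_i$. Each of these operations costs $O(\alpha(m,n))$ per edge, yielding the claimed total.

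For the stretch preservation, fix any $(u,v)\in E^{\sigma}_i$ and let $C_u, C_v$ be its level-$i$ clusters. If $(u,v)$ survives into $\me_i$ the hypothesis directly gives stretch at most $(2k-1)(1+s\eps)$. If $(u,v)$ is a self-loop then $C_u = C_v$, and property \hyperlink{P3L}{(3)} supplies a path inside $H^{\sigma}_{\leq i-1}[C_u]$ of length at most $gL_{i-1} = g\eps L_i$; combined with $w(u,v)\ge L_i/(1+\eps)$ this is well below $(2k-1)(1+10g\eps)w(u,v)$. If $(u,v)$ was deleted as parallel to some $(u',v')\in\me_i$ with $u'\in C_u$, $v'\in C_v$, concatenating an intra-$C_u$ path, the spanner path from $u'$ to $v'$, and an intra-$C_v$ path yields
\begin{equation*}
d_{H^{\sigma}_{\leq i}}(u,v)\;\le\;2g\eps L_i+(2k-1)(1+s\eps)w(u',v')\;\le\;4g\eps\,w(u,v)+(2k-1)(1+s\eps)(1+\eps)\,w(u,v),
\end{equation*}
using $w(u',v')\le L_i\le(1+\eps)w(u,v)$; linearizing and absorbing the additive and cross terms into multiplicative slack gives $(2k-1)(1+(s+4g)\eps)$ once $\eps\le 1/(s+4g)$. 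Finally, edges removed because $\msttilde_i\subseteq H^{\sigma}_{\leq i-1}$ already witnesses stretch $(2k-1)(1+6g\eps)$ are handled since $6g\le 10g$. Taking the maximum over the cases gives the claimed $(2k-1)(1+\max\{s+4g,10g\}\eps)$ bound.

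The main obstacle is the parallel-edges case: the combination of the multiplicative slack $(1+s\eps)(1+\eps)$ with the additive $O(\eps L_i)$ cluster-diameter contributions must be linearized into a single factor $1+(s+4g)\eps$, and it is exactly this linearization that forces the condition $\eps\le 1/\max\{s+4g,10g\}$ and explains why both summands appear in the final stretch bound. A secondary, more implementation-flavored difficulty is realizing parallel-edge removal and the $\msttilde_i$-stretch filter in the pointer-machine model without recourse to hashing; bucket-sorting on representative pairs and using a linear-time LCA / path-maximum structure on $\msttilde_i$ keeps everything within $O((|\mv_i|+|E^{\sigma}_i|)\alpha(m,n))$, but careful bookkeeping is required to make sure no hidden per-edge factor sneaks in.
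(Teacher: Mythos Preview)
The paper does not actually prove this lemma; it is quoted verbatim from \cite{LS21} (Lemmas~3.16 and~3.22 there), and the only in-paper content is the one-sentence description preceding the statement: ``the construction removes edges that are self-loops, parallel edges, and those that have stretch at most $(2k-1)(1+6g\eps)$ in $\msttilde_i$.'' Your reconstruction follows exactly this outline, so at the level of approach there is nothing to compare.

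Two small points in your stretch analysis deserve tightening. In the parallel-edge case you bound $w(u',v')\le L_i\le(1+\eps)w(u,v)$, but the construction keeps the \emph{lightest} edge in each parallel class, so in fact $w(u',v')\le w(u,v)$. With your weaker bound the linearization to $(2k-1)(1+(s+4g)\eps)$ actually fails for $k=1$ (you would need $4g+1+s\eps\le 4g$); using $w(u',v')\le w(u,v)$ the inequality $4g\eps\le(2k-1)\cdot 4g\eps$ is immediate for all $k\ge 1$. In the $\msttilde_i$-filter case, ``$6g\le 10g$'' is too quick: the filter is stated in the cluster graph $\mg_i$, whose distances are augmented (node weights equal potential values, which upper-bound cluster diameters). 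You need to argue that an augmented $\msttilde_i$-path of weight at most $(2k-1)(1+6g\eps)\omega(\mbe)$ lifts to a genuine $u$--$v$ path in $H^{\sigma}_{\le i}$ of at most that length, and then possibly add the two endpoint-cluster traversals; this is precisely where the slack from $6g$ to $10g$ is spent. Once you spell that out, your case analysis matches the intended argument.
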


\Cref{lm:GiConstr} implies that it suffices for the construction to focus on constructing a spanner for the subset of edges of $E^{\sigma}_i$ that correspond to edges in $\me_i$.

\paragraph{Level-$(i+1)$ clusters.~} Instead of constructing level-$(i+1)$ directly from level-$i$ clusters, we construct a collection $\mathbb{X}$ of vertex-disjoint subgraphs of $\mg_i$. Each subgraph $\mx\in \mathbb{X}$ has the vertex set denoted by $\mv(\mx)$ and the edge set denoted by $\me(\mx)$, and is mapped to a level-$(i+1)$ cluster, denoted by $C_{\mx}$, as follows:
\begin{equation}\label{eq:XtoCluster}
	C_{\mx} = \cup_{\varphi_C\in \mv(\mx)}C
\end{equation}
That is, $C_{\mx}$ is the union of all level-$i$ clusters corresponding to the nodes of $\mx$. Note that $\mx$ has weights on both edges and nodes. We then define the potential value of $C_{\mx}$ as follows.
\begin{equation}\label{eq:CXPotential}
	\Phi(C_{\mx}) = \adm(\mx)	
\end{equation}
That is, the potential value  of $C_{\mx}$ is the augmented diameter of the corresponding subgraph. Recall that the potential value will then be the weight of the node corresponding to $C_{\mx}$ in the cluster graph $\mg_{i+1}$ in the construction of  the next level, see \Cref{eq:nodeWeight}. Furthermore, inductively, we can show that, if $\omega(\varphi_C)$ is an upper bound on $\dm(H^{\sigma}_{\leq i-1}[C])$, then $\adm(\mx)$ is an upper bound on $\dm(H^{\sigma}_{\leq i}[C_{\mx}])$. As a result, guaranteeing properties \hyperlink{P1L}{(1)}-\hyperlink{P3L}{(3)} for level-$(i+1)$ clusters can be translated into guaranteeing the following properties for subgraphs in $\mathbb{X}$:

\begin{itemize}[noitemsep]
	\item \textbf{(P1').~} \hypertarget{P1'L}{}  $\{\mv(\mx)\}_{\mx \in \mathbb{X}}$ is a partition of $\mv_i$.
	\item \textbf{(P2').~} \hypertarget{P2'L}{} $|\mv(\mx)| = \Omega(\frac{1}{\eps})$.
	\item \textbf{(P3').~} \hypertarget{P3'L}{} $L_i \leq \adm(\mx) \leq gL_{i}$.
\end{itemize}

\begin{lemma}[Lemma 3.14~\cite{LS21}]\label{lm:XInvariants} Let $\mx$ be any subgraph in $\mathbb{X}$ satisfying properties \hyperlink{P1'L}{(P1')}-\hyperlink{P3'L}{(P3')}. Suppose that every edge $(\varphi_{C_u},\varphi_{C_v})\in \me(\mx)$ corresponds to an edge $(u,v)$ that is added to $H^{\sigma}_{i}$. Then, $C_{\mx}$  satisfies all properties \hyperlink{P1L}{(1)}-\hyperlink{P3L}{(3)}.
\end{lemma}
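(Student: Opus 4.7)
The plan is to verify properties \hyperlink{P1L}{(1)}--\hyperlink{P3L}{(3)} for $C_{\mx}$ in turn. Properties \hyperlink{P1L}{(1)} and \hyperlink{P2L}{(2)} will follow almost by unwinding definitions, while property \hyperlink{P3L}{(3)} will be the real content of the lemma and will rely on a structural invariant linking the node weights of $\mg_i$ to diameters of clusters in $H^{\sigma}_{\leq i-1}$.

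For \hyperlink{P1L}{(1)}, by the inductive hypothesis the clusters in $\mathcal{C}_i$ partition $\tilde V$, and \hyperlink{P1'L}{(P1')} says $\{\mv(\mx)\}_{\mx\in\mathbb{X}}$ partitions $\mv_i$. Hence $\{C_{\mx}\}_{\mx\in\mathbb{X}} = \bigl\{\bigcup_{\varphi_C\in \mv(\mx)} C\bigr\}_{\mx\in\mathbb{X}}$ is a partition of $\tilde V$, giving \hyperlink{P1L}{(1)}. For \hyperlink{P2L}{(2)}, $C_{\mx}$ is the union of exactly $|\mv(\mx)|$ clusters of $\mathcal{C}_i$, and \hyperlink{P2'L}{(P2')} provides $|\mv(\mx)| = \Omega(1/\eps)$. (The case $i=0$ of \hyperlink{P2L}{(2)} is not required.)

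The heart of the argument is the diameter bound \hyperlink{P3L}{(3)}, which I will obtain from the stronger structural claim
\begin{equation}\label{eq:keyClaim}
\dm\bigl(H^{\sigma}_{\leq i}[C_{\mx}]\bigr) \;\leq\; \adm(\mx),
\end{equation}
proved by induction on the level. Given \eqref{eq:keyClaim}, the upper half of \hyperlink{P3'L}{(P3')}, namely $\adm(\mx)\leq gL_i$, immediately yields \hyperlink{P3L}{(3)} at level $i+1$. To prove \eqref{eq:keyClaim}, pick any two vertices $x,y \in C_{\mx}$ and locate the level-$i$ clusters $C_x,C_y$ containing them, so $\varphi_{C_x},\varphi_{C_y}\in \mv(\mx)$. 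Take an augmented-shortest path
$$
\pi \;=\; \varphi_{C_0},\,\mbe_1,\,\varphi_{C_1},\,\mbe_2,\,\ldots,\,\mbe_t,\,\varphi_{C_t}
$$
in $\mx$ from $\varphi_{C_x}$ to $\varphi_{C_y}$. By hypothesis each edge $\mbe_j=(\varphi_{C_{j-1}},\varphi_{C_j})$ corresponds to an actual edge $(a_{j-1},b_j)$ that has been added to $H^{\sigma}_i$, with $a_{j-1}\in C_{j-1}$, $b_j\in C_j$, and $\omega(\mbe_j)=w(a_{j-1},b_j)$. Concatenating a shortest $x$-to-$a_0$ path inside $H^{\sigma}_{\leq i-1}[C_0]$, the edge $\mbe_1$, a shortest $b_1$-to-$a_1$ path inside $H^{\sigma}_{\leq i-1}[C_1]$, and so on up to $y\in C_t$, gives an $x$-$y$ walk in $H^{\sigma}_{\leq i}[C_{\mx}]$ of weight at most $\sum_{j=1}^t \omega(\mbe_j) + \sum_{j=0}^t \dm\bigl(H^{\sigma}_{\leq i-1}[C_j]\bigr)$. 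Using the node-weight invariant $\omega(\varphi_C)=\Phi(C)\geq \dm(H^{\sigma}_{\leq i-1}[C])$ (established at the base level $i=1$ by \eqref{eq:level1Potential} and $H^{\sigma}_0=\msttilde$, and preserved to each subsequent level by \eqref{eq:CXPotential} combined with the inductive version of \eqref{eq:keyClaim}), this walk has weight at most $\adm(\mx)$, proving \eqref{eq:keyClaim}.

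The principal obstacle is maintaining the inductive invariant $\omega(\varphi_C)\geq \dm(H^{\sigma}_{\leq i-1}[C])$ across the recursion, because node weights are defined in terms of augmented diameters of cluster graphs (which include virtual vertices and use node-weighted paths), whereas we need a bound on an ordinary graph diameter in $H^{\sigma}_{\leq i-1}$. The argument above handles this by routing intra-cluster portions of the witness path through $H^{\sigma}_{\leq i-1}[C_j]$ (which is well-defined on the full vertex set $\tilde V$, including virtual vertices introduced in $\msttilde$), so that the same bookkeeping works uniformly whether $C_j$ contains real vertices, virtual vertices, or both. A secondary subtlety is verifying \hyperlink{P3L}{(3)} also for the level-$(i+1)$ clusters inherited unchanged from $\mathcal{C}_i$ (singleton $\mathbb{X}$-subgraphs), but for those $\adm(\mx) = \omega(\varphi_C) \leq gL_{i-1} \leq gL_i$ and \eqref{eq:keyClaim} degenerates to the inductive hypothesis, so nothing new is needed.
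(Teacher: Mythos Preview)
Your proof is correct and follows exactly the approach the paper sketches just before stating the lemma: ``inductively, we can show that, if $\omega(\varphi_C)$ is an upper bound on $\dm(H^{\sigma}_{\leq i-1}[C])$, then $\adm(\mx)$ is an upper bound on $\dm(H^{\sigma}_{\leq i}[C_{\mx}])$.'' The paper does not spell out a proof (the lemma is imported from \cite{LS21}), but your path-concatenation argument and the maintenance of the invariant $\omega(\varphi_C)\geq \dm(H^{\sigma}_{\leq i-1}[C])$ via \eqref{eq:level1Potential} and \eqref{eq:CXPotential} is precisely the intended one.
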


We remark that \Cref{lm:XInvariants} is based on the assumption that  $(u,v)$ is added to $H^{\sigma}_{i}$, which we have not constructed yet. 

\paragraph{Constructing level$(i+1)$ clusters.~} \Cref{lm:XInvariants} translates the construction of clusters in $\mc_{i+1}$ to the construction of the set of subgraphs $\mathbb{X}$ satisfying \hyperlink{P1'L}{(P1')}-\hyperlink{P3'L}{(P3')}. The main difficulty is not only to satisfy these properties; but also to guarantee that the weight of $H^{\sigma}_i$ is bounded by the potential change $\Delta_{i+1}$ (and a small additive term) as claimed in Item (1) of \Cref{lm:NewConstruction}. Recall by \Cref{eq:nodeWeight} and \Cref{eq:CXPotential} that: 
\begin{equation}\label{eq:Phii}
	\begin{split}
			\Phi_i &= \sum_{C\in \mathcal{C}_{i}} \Phi(C) = \sum_{\varphi_C \in \mv_i}\omega(\varphi_C)\\
			\Phi_{i+1} &= \sum_{C_{\mx}\in \mathcal{C}_{i+1}} \Phi(C_{\mx}) = \sum_{\mx\in \mathbb{X}}\adm(\mx)
	\end{split}
\end{equation}
Thus, if we define the \emph{local potential change} of $\mx$ as follows:
\begin{equation}\label{eq:localChangeX}
	\Delta_{i+1}(\mx) = \sum_{\varphi_C\in \mv(\mx)}\omega(\varphi_{\mx})-\adm(\mx), 
\end{equation} 

\noindent then it follows that:

\begin{claim}[Claim 3.15~\cite{LS21}]\label{clm:PotentialDecompos} $\Delta_{i+1} = \sum_{\mx\in \mathbb{X}}\Delta_{i+1}(\mx)$.
\end{claim}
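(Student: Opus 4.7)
The claim follows by a short algebraic manipulation, using the definitions given just before the statement together with the partition property (P1'). The plan is to expand $\Delta_{i+1} = \Phi_i - \Phi_{i+1}$ using the two expressions recorded in \Cref{eq:Phii}, regroup the sum over $\mv_i$ according to the partition $\{\mv(\mx)\}_{\mx \in \mathbb{X}}$, and then identify the telescoped summand with $\Delta_{i+1}(\mx)$.

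First, I would recall from \Cref{eq:Phii} that
\[
\Phi_i \;=\; \sum_{\varphi_C \in \mv_i} \omega(\varphi_C), \qquad
\Phi_{i+1} \;=\; \sum_{\mx \in \mathbb{X}} \adm(\mx),
\]
and from property (\hyperlink{P1'L}{P1'}) that the vertex sets $\{\mv(\mx)\}_{\mx \in \mathbb{X}}$ form a partition of $\mv_i$. This partition lets me rewrite the first sum as an iterated sum indexed first by $\mx \in \mathbb{X}$ and then by $\varphi_C \in \mv(\mx)$, without double-counting or omissions.

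Combining these two observations yields
\[
\Delta_{i+1} \;=\; \Phi_i - \Phi_{i+1}
\;=\; \sum_{\mx \in \mathbb{X}} \sum_{\varphi_C \in \mv(\mx)} \omega(\varphi_C) \;-\; \sum_{\mx \in \mathbb{X}} \adm(\mx)
\;=\; \sum_{\mx \in \mathbb{X}} \Biggl(\sum_{\varphi_C \in \mv(\mx)} \omega(\varphi_C) - \adm(\mx)\Biggr),
\]
and the parenthesized expression is precisely $\Delta_{i+1}(\mx)$ by definition \Cref{eq:localChangeX}. This gives the desired identity.

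There is no real obstacle here: the only thing to keep in mind is that the second equality of \Cref{eq:Phii}, namely $\Phi_{i+1} = \sum_{\mx \in \mathbb{X}} \adm(\mx)$, relies on the correspondence $\mx \mapsto C_\mx$ of \Cref{eq:XtoCluster} being a bijection between $\mathbb{X}$ and $\mathcal{C}_{i+1}$ together with the definition $\Phi(C_\mx) = \adm(\mx)$ in \Cref{eq:CXPotential}; these are already established before the claim, so the argument is essentially bookkeeping. I would therefore present the proof as a three-line derivation immediately following the statement, with a brief pointer to (\hyperlink{P1'L}{P1'}), \Cref{eq:Phii}, and \Cref{eq:localChangeX} as the only ingredients.
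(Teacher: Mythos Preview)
Your proof is correct; this is a pure bookkeeping identity, and expanding $\Phi_i - \Phi_{i+1}$ via \Cref{eq:Phii}, regrouping using the partition property (\hyperlink{P1'L}{P1'}), and matching against \Cref{eq:localChangeX} is exactly the (only natural) argument. The paper itself does not give a proof here --- it simply quotes the claim from \cite{LS21} --- so there is nothing further to compare.
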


That is, the potential change $\Delta_{i+1}$  can be decomposed into local potential changes of subgraphs in $\mathbb{X}$. This meanss we could bound the weight of $H_i^{\sigma}$ \emph{locally} via  bounding the total weight of edges incident to nodes in $\mx$ by the local potential change of $\mx$. 

\paragraph{Partitioning $\mv_i$ and $\mathbb{X}$.~}
We say that a partition $\mathbb{V} = \{\mv_i^{\high}, \mv_i^{\lowp},\mv_i^{\lowm}\}$ of $\mv_i$ is a \emph{degree-specific} partition if  every node $\varphi_C \in \mv_i^{\high}$ is incident to  $\Omega(1/\eps)$ edges in $\me_i$ and every node $\varphi_C \in \mv_i^{\lowp} \cup \mv_i^{\lowm}$ is incident to $O(1/\eps)$ edges in $\me_i$.  That is, $\mv^{\high}_i$ is the set of high-degree nodes of $\mv_i$ and $\mv_i^{\lowp} \cup \mv_i^{\lowm}$ is the set of low-degree nodes of $\mv_i$. The difference between $\mv_i^{\lowp}$ and $\mv_i^{\lowm}$ will be made clear later.
	
We say that a partition  $\{\mathbb{X}^{\high}, \mathbb{X}^{\lowp}, \mathbb{X}^{\lowm}\}$ of a  collection $\mathbb{X}$ of subgraphs of $\mg_i$ \emph{conforms} with a degree-specific partition $\mathbb{V}$ if
\begin{itemize}[noitemsep]
	\item[(i)] Every subgraph $\mx \in \mathbb{X}^{\lowm}$ has $\mv(\mx)\subseteq \mv_i^{\lowm}$, and $\cup_{\mx \in \mathbb{X}^{\lowm}}\mv(\mx) = \mv_i^{\lowm}$.
	\item[(ii)] For every node $\varphi_C \in \mv_i^{\high}$, there exists a subgraph $\mx \in \mathbb{X}^{\high}$ such that $\varphi_C \in \mv(\mx)$, and that every subgraph $\mx \in \mathbb{X}^{\high}$ contains at least one node in $\mv_i^{\high}$.
\end{itemize}

Observe that property (ii) implies that $\mv(\mx)\subseteq \mv_i^{\lowp}$ for every $\mx \in \mathbb{X}^{\lowp}$. Also, it is possible that a subgraph $\mx \in \mathbb{X}^{\high}$ contains a node in $\mv_i^{\lowp}$. 

The construction of $\mathbb{X}$ in~\cite{LS21} is described by the following lemma. 

\begin{restatable}[Lemma 3.17~\cite{LS21}]{lemma}{Clustering}
	\label{lm:Clustering}  Given $\mg_i$, we can construct in time $O((|\mv_i| + |\me_i|)\eps^{-1})$ (i) a degree-specific partition $\mathbb{V} = \{\mv_i^{\high}, \mv_i^{\lowp},\mv_i^{\lowm}\}$  of $\mv_i$ and (ii) a collection $\mathbb{X}$ of subgraphs of $\mg_i$ along with a partition  $\{\mathbb{X}^{\high}, \mathbb{X}^{\lowp}, \mathbb{X}^{\lowm}\}$ conforming $\mathbb{V}$ such that:
	\begin{enumerate}
		\item[(1)] Let $\Delta_{i+1}^+(\mx) = \Delta(\mx) + \sum_{\mbe \in \msttilde_i\cap \me(\mx)}w(\mbe)$.  Then, $\Delta_{i+1}^+(\mx) \geq 0$ for every $\mx \in \mathbb{X}$, and
		\begin{equation}\label{eq:averagePotential}
			\sum_{\mx \in\mathbb{X}^{\high}\cup \mathbb{X}^{\lowp}} \Delta_{i+1}^+(\mx) = \sum_{\mx \in \mathbb{X}^{\high}\cup \mathbb{X}^{\lowp}} \Omega(|\mv(\mx)|\eps^2 L_i). 
		\end{equation}
		\item[(2)] There is no edge in $\me_i$ between a node in $\mv^{\high}_i$ and a node in $\mv^{\lowm}_i$. Furthermore, if there exists an edge   $(\varphi_{C_u},\varphi_{C_v}) \in \me_i$ such that both $\varphi_{C_u}$ and $\varphi_{C_v}$ are in 
		$\mv_i^{\lowm}$, then  $\mv_i^{\lowm} = \mv_i$ and $|\me_i| = O(\frac{1}{\epsilon^2})$; this case is called the \emph{degenerate case}.
		\item[(3)] For every subgraph $\mx \in \mathbb{X}$, $\mx$ satisfies the three properties (\hyperlink{P1'L}{P1'})-(\hyperlink{P3'L}{P3'}) with constant $g=31$,  and $|\me(\mx)\cap \me_i| = O(|\ma_{\mx}|)$ where $\ma_{\mx}$ is the set of nodes in $\mx$ incident to an edge in $\me(\mx)\cap \me_i$.
	\end{enumerate}	
\end{restatable}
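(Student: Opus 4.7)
The plan is to construct the three sets of the degree-specific partition $\mathbb{V}$ first, then build $\mathbb{X}^{\high}$, $\mathbb{X}^{\lowp}$, $\mathbb{X}^{\lowm}$ in that order, and finally verify the potential inequality \Cref{eq:averagePotential}. I would start by defining a threshold $D = c/\eps$ for a suitable constant $c$ and scanning the adjacency lists of $\mg_i$ in $O(|\mv_i|+|\me_i|)$ time to mark $\varphi_C \in \mv_i^{\high}$ whenever its $\me_i$-degree exceeds $D$. Among the remaining (low-degree) nodes, I would put $\varphi_C \in \mv_i^{\lowp}$ iff $\varphi_C$ has an $\me_i$-neighbor in $\mv_i^{\high}$, and the rest into $\mv_i^{\lowm}$. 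Property (2) of the lemma is immediate from this definition, except for the degenerate case, which I would detect at the outset: if $\mv_i^{\high}=\emptyset$ and there is any $\me_i$-edge, then $|\me_i|=O(1/\eps^2)$ because every node has $\me_i$-degree below $D$ and a careful pigeonhole plus the structure of $\mg_i$ will give the quadratic bound.

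Next I would form $\mathbb{X}^{\high}$ greedily: scan $\mv_i^{\high}$; for each still-unassigned high-degree node $\varphi_C$, pick $\Theta(1/\eps)$ of its unassigned $\me_i$-neighbors and create a star-like subgraph rooted at $\varphi_C$ using those edges. The diameter bound needs care: $\me_i$-edges have weight at most $L_i$, every node has weight $\omega(\varphi_C) = \Phi(C) \leq gL_{i-1} = g\eps L_i$, and the star has hop-radius one, so its augmented diameter is at most $2L_i + O(1/\eps)\cdot g\eps L_i = O(gL_i)$, which is at most $gL_i$ when one is careful with constants (this dictates the constant $g=31$ quoted in item~(3)). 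The lower bound $\adm(\mx)\geq L_i$ follows from including at least one $\me_i$-edge whose weight is $\geq L_i/(1+\eps)$. After this pass, add each unassigned $\varphi_{C'} \in \mv_i^{\lowp}$ to one adjacent star, provided it still respects the diameter cap; this is how $\mathbb{X}^{\high}$ may end up containing $\mv_i^{\lowp}$-nodes.

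For $\mathbb{X}^{\lowp}$, any remaining $\mv_i^{\lowp}$-nodes are bundled using $\msttilde_i$: traverse $\msttilde_i$ restricted to those nodes and pack maximal connected subtrees whose augmented weight lies in $[L_i, gL_i]$, splitting off a subtree whenever the running augmented weight crosses $L_i$. For $\mathbb{X}^{\lowm}$ I would do the same on $\msttilde_i$ restricted to $\mv_i^{\lowm}$; this is legal because property~(2) guarantees no $\me_i$-edges enter $\mv_i^{\lowm}$, so $\mathbb{X}^{\lowm}$-subgraphs really only need MST edges. In all three cases, $|\mv(\mx)| = \Omega(1/\eps)$ holds — for $\mathbb{X}^{\high}$ by the choice of $\Theta(1/\eps)$ leaves, and for the $\msttilde_i$-packing because node weights are $\leq g\eps L_i$ and edge weights are $\leq L_i$, so accumulating augmented weight $L_i$ needs $\Omega(1/\eps)$ nodes (unless a single heavy $\msttilde_i$-edge dominates, which is handled by the $\Delta_{i+1}^+$ correction). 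The whole construction runs in $O((|\mv_i|+|\me_i|)/\eps)$ using adjacency lists and standard linked-list bookkeeping.

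The hardest step, which I expect to be the real work, is the potential inequality \Cref{eq:averagePotential}. For each $\mx \in \mathbb{X}^{\high} \cup \mathbb{X}^{\lowp}$ I would separate $\me(\mx)$ into $\me_i$-edges and $\msttilde_i$-edges and argue
\begin{equation*}
\sum_{\varphi_C \in \mv(\mx)} \omega(\varphi_C) \;\geq\; \adm(\mx) + \Omega(|\mv(\mx)|\eps^2 L_i) - \sum_{\mbe \in \msttilde_i \cap \me(\mx)} w(\mbe),
\end{equation*}
which is equivalent to $\Delta_{i+1}^+(\mx) \geq \Omega(|\mv(\mx)|\eps^2 L_i)$. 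The intuition is that $\adm(\mx)$ is dominated either by (i) $\me_i$-edges, in which case each such edge ``saves'' $L_i$ against the many nodes of weight $\leq g\eps L_i$, giving a net gain of $\Omega(L_i)$ per star and thus $\Omega(\eps L_i) = \Omega(|\mv(\mx)|\eps^2 L_i)$ per node; or (ii) $\msttilde_i$-edges, whose weights are subtracted via $\Delta^+_{i+1}$, so they do not hurt the lower bound. Nonnegativity of $\Delta_{i+1}^+(\mx)$ is the easier direction: the total node weight is at least the portion of $\adm(\mx)$ contributed by nodes, and the $\msttilde_i$-edge-weight correction absorbs the rest, so $\Delta_{i+1}^+(\mx) \geq 0$ holds by construction. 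Combining the per-$\mx$ bound across $\mathbb{X}^{\high} \cup \mathbb{X}^{\lowp}$ yields \Cref{eq:averagePotential}; the analogous bound for $\mathbb{X}^{\lowm}$ is unnecessary since that set is excluded from the sum, which is essential because $\mathbb{X}^{\lowm}$-subgraphs may carry almost all their augmented weight on $\msttilde_i$-edges and so contribute very little net potential decrease.
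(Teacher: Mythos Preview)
Your partition of $\mv_i$ is defined too early and too crudely, and this breaks Item~(2). You put a low-degree node into $\mv_i^{\lowp}$ exactly when it has an $\me_i$-neighbor in $\mv_i^{\high}$, and into $\mv_i^{\lowm}$ otherwise. Now take any instance where $\mv_i^{\high}\neq\emptyset$ but there also exist two low-degree nodes $\varphi_1,\varphi_2$ joined by an $\me_i$-edge, neither of which is adjacent to a high-degree node. Both land in $\mv_i^{\lowm}$, the $\me_i$-edge between them survives, yet $\mv_i^{\lowm}\neq\mv_i$, so this is \emph{not} the degenerate case --- Item~(2) is violated. Your ``careful pigeonhole'' only handles the case $\mv_i^{\high}=\emptyset$, but that is not the only way $\me_i$-edges can appear inside $\mv_i^{\lowm}$ under your definition.

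In the paper (following \cite{LS21}) the partition is not declared up front; it is \emph{read off after} a five-step construction of $\mathbb{X}$. Step~1 builds $\mathbb{X}_1$ around high-degree nodes and their neighbors, much as you do. The remaining nodes induce a forest $\Ftilde^{(2)}_i\subseteq\msttilde_i$, and Steps~2--3 collect its branching nodes into subgraphs $\mathbb{X}_2$ so that what remains is (essentially) a collection of long $\msttilde_i$-paths. The step you are missing entirely is Step~4: on each long path one applies a Red/Blue coloring (red near the endpoints, blue in the interior), and then every $\me_i$-edge joining two blue nodes is used to form a subgraph in $\mathbb{X}_4$ containing exactly that one $\me_i$-edge plus $\Theta(1/\eps)$ surrounding path nodes. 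Only \emph{after} this are the internal segments of the surviving paths declared to be $\mathbb{X}^{\lowm}=\mathbb{X}_5^{\internal}$, and $\mv_i^{\lowm}$ is defined as their node set. It is precisely Step~4 that guarantees no $\me_i$-edge remains between two $\mv_i^{\lowm}$-nodes in the non-degenerate case; you have no mechanism doing that job. This same step is also what makes the potential inequality work for $\mathbb{X}^{\lowp}$: the subgraphs in $\mathbb{X}_4$ each carry one $\me_i$-edge, and that edge is what yields $\Delta_{i+1}^+(\mx)=\Omega(|\mv(\mx)|\eps^2 L_i)$ (Item~(4) of \Cref{lm:Clustering-Step4}). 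Your $\mathbb{X}^{\lowp}$ is built purely by $\msttilde_i$-packing with no $\me_i$-edges, so for such an $\mx$ the correction term $\sum_{\mbe\in\msttilde_i\cap\me(\mx)}w(\mbe)$ may exactly cancel $\adm(\mx)-\sum_{\varphi_C}\omega(\varphi_C)$, leaving $\Delta_{i+1}^+(\mx)$ only nonnegative, not $\Omega(|\mv(\mx)|\eps^2 L_i)$.
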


We call $\Delta_{i+1}(\mx)$ the \emph{corrected potential change} of $\mx$. We remark that $\Delta_{i+1}(\mx)$ could be negative but $\Delta_{i+1}(\mx)$  is always positive by Item (1) of \Cref{lm:Clustering}. Furthermore, Item (1) in \Cref{lm:Clustering} only tells us about the corrected potential changes of subgraphs in $\mathbb{X}^{\high}\cup \mathbb{X}^{\lowp}$; there is no guarantee on the corrected potential changes of subgraphs in $\mathbb{X}^{\lowm}$ other than non-negativity, and as a result, we could not bound the total weight of edges incident to a subgraph $\mx \in \mathbb{X}^{\lowm}$ by the local potential change of $\mx$. However, Item (2) means that subgraphs in  $\mathbb{X}^{\lowm}$ do not need to ``pay for'' their incident edges (by their corrected potential changes)---these edges can be paid for by subgraphs in $\mathbb{X}^{\high}\cup \mathbb{X}^{\lowp}$---unless the degenerate case happens, which only incurs a small weight (of $O(1/\eps^2)$ edges). Furthermore,  subgraphs in $\mathbb{X}^{\lowm}$ do not contain any edge in $\me_i$ by Item (2) of \Cref{lm:Clustering} unless the degenerate case happens.

\begin{observation}[Observation 3.20 in \cite{LS21}]\label{obs:LowmStructure} If the degenerate case does not happen, for every edge $(\varphi_{1},\varphi_{2})$ with one endpoint in $\mv_i^{\lowm}$, the other endpoint must be in $\mv_i^{\lowp}$, and hence, $\me(\mx)\cap \me_i = \emptyset$ if $\mx \in \mathbb{X}^{\lowm}$.
\end{observation}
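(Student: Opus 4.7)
The plan is to deduce both parts of the observation directly from Item (2) of \Cref{lm:Clustering}, together with the definition of a conforming partition. The argument is essentially a three-way case analysis on the part of $\mathbb{V}$ that contains the ``other'' endpoint, so no new structural insights are required.

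First I would fix an arbitrary edge $(\varphi_1,\varphi_2)\in \me_i$ with, say, $\varphi_1\in \mv_i^{\lowm}$. Since $\{\mv_i^{\high},\mv_i^{\lowp},\mv_i^{\lowm}\}$ partitions $\mv_i$, the node $\varphi_2$ lies in exactly one of these three sets. The first clause of Item~(2) of \Cref{lm:Clustering} states that $\me_i$ contains no edge between $\mv_i^{\high}$ and $\mv_i^{\lowm}$, so $\varphi_2\notin \mv_i^{\high}$. The second clause states that the existence of an $\me_i$-edge with both endpoints in $\mv_i^{\lowm}$ forces the degenerate case, which we have excluded by hypothesis, so $\varphi_2\notin \mv_i^{\lowm}$. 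The only remaining possibility is $\varphi_2\in \mv_i^{\lowp}$, which is precisely the first half of the claim.

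For the consequence, I would invoke condition (i) in the definition of a conforming partition: if $\mx\in \mathbb{X}^{\lowm}$ then $\mv(\mx)\subseteq \mv_i^{\lowm}$. Hence every edge $\mbe\in \me(\mx)$ has both its endpoints in $\mv_i^{\lowm}$, and by the first half of the observation, no such edge can belong to $\me_i$ (in the non-degenerate case). Therefore $\me(\mx)\cap \me_i=\emptyset$, which establishes the second half. Note that edges of $\mx$ coming from $\msttilde_i\subseteq \me'_i\setminus \me_i$ are not excluded, consistent with the claim being only about intersection with $\me_i$.

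The hard part, if any, would be bookkeeping rather than insight: one must be careful that $\me(\mx)\subseteq \me'_i=\me_i\cup \msttilde_i$ and that the claim only rules out the $\me_i$ component, and one must remember that conformance guarantees $\mv(\mx)\subseteq \mv_i^{\lowm}$ only for subgraphs in $\mathbb{X}^{\lowm}$ (not, e.g., for subgraphs in $\mathbb{X}^{\lowp}$, which may still touch $\mv_i^{\lowm}$). Aside from these two pitfalls, the observation is an immediate logical corollary of the structural guarantees already packaged into \Cref{lm:Clustering}.
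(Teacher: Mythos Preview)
Your proposal is correct and matches the paper's reasoning: the paper does not give a separate proof but simply remarks, just before stating the observation, that it follows from Item~(2) of \Cref{lm:Clustering}, which is exactly the case analysis you carry out. One minor inaccuracy in your closing parenthetical: since condition~(i) of conformance says $\cup_{\mx\in\mathbb{X}^{\lowm}}\mv(\mx)=\mv_i^{\lowm}$ and the subgraphs in $\mathbb{X}$ partition $\mv_i$, subgraphs in $\mathbb{X}^{\lowp}$ in fact cannot touch $\mv_i^{\lowm}$ either---but this does not affect your argument.
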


We remark that Item (3) in \Cref{lm:Clustering} is slightly different from the corresponding item in Lemma 3.17~\cite{LS21}, which is Item (5), in that  $|\me(\mx)\cap \me_i|$ is bounded by $O(|\mv(\mx)|)$. Here we need a slightly stronger bound, and Item (3) can be seen directly from the construction of ~\cite{LS21}. For completeness, we will show this item in the construction in \Cref{subsubsec:clustering}.

While the construction in \Cref{lm:Clustering} provides a mean to construct $H_i^{\sigma}$ and bounding its weight by (corrected) potential changes via Item (1), it does not give us a sufficient reduction in the number of non-virtual clusters  as claimed by Items (3) and (4) in \Cref{lm:NewConstruction}. The reduction in the number of non-virtual clusters was used to bound  the total number of edges of $H^{\sigma}$ in \Cref{subsec:proofThm2}. Our  main contribution is a modification of the construction by Le and Solomon~\cite{LS21} using the cycle property of $\mst$ to achieve the reduction in the number of non-virtual clusters. 

We call a node $\varphi_C$ \emph{virtual} if it corresponds to a virtual cluster $C$; otherwise, we call $\varphi_C$ \emph{non-virtual}.  We say that $\varphi_C$ is isolated if $C$ is isolated, and otherwise, is non-isolated. By definition, a non-isolated node is a non-virtual node.

We abuse notation by denoting $\mathcal{N}_i$ and $\mathcal{M}_i$ the sets of non-virtual nodes and virtual nodes of $\mv_i$, respectively. We denote by $\my_i$ the set of non-isolated nodes in $\mv_i$. We will show later that $\my_i$ is exactly the set of nodes defined in \Cref{lm:HiSizeYi}. That is, every node in $\my_i$ corresponds to a level-$i$ cluster that contains at least one endpoint of an edge in $H^{\sigma}_i$.  

We say that a subgraph $\mx \in \mathbb{X}$ is non-virtual if it contains at least one non-virtual node, and otherwise, is virtual.  A non-virtual subgraph corresponds to a non-virtual level-$(i+1)$ cluster. Our main contribution is the construction of $\mathbb{X}$ described by the following lemma.

\begin{lemma}\label{lm:additional-prop}We can construct in  $O((|\mv_i| + |\me_i|)\eps^{-1})$ a  degree-specific partition $\mathcal{V}$ of $\mv_i$ and a collection $\mathbb{X}$ of subgraphs of $\mg_i$ that satisfy all properties in \Cref{lm:Clustering} with $g = 42$. Furthermore, if we denote by $\mathcal{N}_{i+1}$ the set of non-virtual subgraphs in $\mathbb{X}$, then $|\mathcal{N}_i| - |\mathcal{N}_{i+1}| \geq |\my_i|/2$.
\end{lemma}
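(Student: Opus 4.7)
The plan is to build $\mathbb{V}$ and $\mathbb{X}$ by invoking the construction of \Cref{lm:Clustering} as a starting point and then refining only the $\mathbb{X}^{\lowm}$ portion to additionally guarantee the reduction $|\mathcal{N}_i|-|\mathcal{N}_{i+1}| \ge |\my_i|/2$. Writing $n_{\text{nv}}(\mx) := |\mv(\mx) \cap \mathcal{N}_i|$ and $k_{\my}(\mx) := |\mv(\mx) \cap \my_i|$, note that $|\mathcal{N}_i|-|\mathcal{N}_{i+1}| = \sum_{\mx} \max(n_{\text{nv}}(\mx)-1, 0)$, and since $n_{\text{nv}}(\mx) \ge k_{\my}(\mx)$, this right-hand side is at least $|\my_i|/2$ provided the following single structural condition holds: every subgraph $\mx$ with $k_{\my}(\mx) = 1$ satisfies $n_{\text{nv}}(\mx) \ge 2$. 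The case $k_{\my}(\mx) \ge 2$ is automatic since then $n_{\text{nv}}(\mx)-1 \ge k_{\my}(\mx)-1 \ge k_{\my}(\mx)/2$. Hence the whole task reduces to enforcing this one property.

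Because every $\me_i$-edge corresponds to an $E^{\sigma}_i$-edge whose endpoints are original vertices of $V$, both its endpoints are non-virtual nodes. This handles $\mathbb{X}^{\high}$ and $\mathbb{X}^{\lowp}$ at once: each $\mx \in \mathbb{X}^{\high}$ contains a node $\varphi_h \in \mv_i^{\high}$ together with its $\me_i$-star of size $\Omega(1/\eps)$, all non-virtual, so $n_{\text{nv}}(\mx) = \Omega(1/\eps) \ge 2$ for small $\eps$; and each $\mx \in \mathbb{X}^{\lowp}$ has $\mv(\mx) \subseteq \mv_i^{\lowp}$ consisting of non-virtual nodes (the LS construction places a node in $\mv_i^{\lowp}$ only as an $\me_i$-neighbor of a $\mv_i^{\high}$-node), so again $n_{\text{nv}}(\mx) = |\mv(\mx)| = \Omega(1/\eps) \ge 2$ by Property (\hyperlink{P2'L}{P2'}). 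The genuinely new work is confined to $\mathbb{X}^{\lowm}$: by Item~(2) of \Cref{lm:Clustering} (non-degenerate case), the $\me_i$-partner of any $\varphi \in \my_i \cap \mv_i^{\lowm}$ lies in $\mv_i^{\lowp}$ and so cannot end up in the same $\mathbb{X}^{\lowm}$-subgraph as $\varphi$.

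For $\mathbb{X}^{\lowm}$ we plug in the MST-clustering idea of \Cref{sec:sparseRAM}. For every $\varphi \in \my_i \cap \mv_i^{\lowm}$ with witness edge $\mbe = (\varphi, \varphi') \in \me_i$ of weight $\le L_i$, the cycle property of $\msttilde_i$ implies that every edge on the $\msttilde_i$-path between $\varphi$ and $\varphi'$ has weight $\le L_i$; combined with Property (\hyperlink{P3L}{3}) giving node weights $\omega(\cdot) = \Phi(\cdot) \le gL_{i-1} = O(\eps L_i)$, the whole path has augmented weight $O(L_i)$. Using this, we construct the $\mathbb{X}^{\lowm}$-subgraphs by a two-step greedy MST-clustering analogous to \Cref{lm:XTree}/\Cref{lm:RamCi1} applied to the forest of $\msttilde_i$-edges of augmented weight $\le L_i$ inside $\mv_i^{\lowm}$, ensuring each non-virtual $\mv_i^{\lowm}$-node is grouped with at least one further non-virtual neighbor; any $\varphi \in \my_i \cap \mv_i^{\lowm}$ that is MST-isolated from other non-virtual nodes at this threshold is instead reassigned to $\mv_i^{\lowp}$ and absorbed into the subgraph (in $\mathbb{X}^{\high}$ or $\mathbb{X}^{\lowp}$) containing its $\me_i$-partner $\varphi'$, which preserves Item~(2) of \Cref{lm:Clustering} (since $\varphi$ had no $\me_i$-edge to any $\mv_i^{\high}$-node) and inflates the augmented diameter of that subgraph by only $O(L_i)$. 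The main obstacle is that these local regroupings must simultaneously preserve Item~(1) of \Cref{lm:Clustering}—the corrected potential-change bound $\Delta_{i+1}^+(\mx) = \Omega(|\mv(\mx)|\eps^2 L_i)$—as well as Item~(3) (subgraph diameter and size) and the running-time budget; each reassignment adds only $O(\eps L_i)$ of node weight to the affected subgraph, absorbed into the existing $\poly(1/\eps)$ hidden constants, the diameter constant grows from $g = 31$ to $g = 42$, and the extra MST-clustering step runs in $O((|\mv_i|+|\me_i|)\eps^{-1})$ time via a standard \djset data structure, matching the bound of \Cref{lm:Clustering}.
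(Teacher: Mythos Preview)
Your reduction to the ``good'' property (every subgraph containing a non-isolated node must contain at least two non-virtual nodes) is exactly the right invariant, and matches what the paper isolates. However, your argument that $\mathbb{X}^{\lowp}$ already has this property in the LS construction is incorrect, and this is where the proposal breaks down.

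You write that ``the LS construction places a node in $\mv_i^{\lowp}$ only as an $\me_i$-neighbor of a $\mv_i^{\high}$-node,'' but that describes $\mv_i^{\high+}\setminus\mv_i^{\high}$, not $\mv_i^{\lowp}$. In the paper's partition, $\mathbb{X}^{\lowp} = \mathbb{X}_2\cup\mathbb{X}_4\cup\mathbb{X}_5^{\prefix}$ and $\mv_i^{\lowp}$ is defined as the complement $\mv_i\setminus(\mv_i^{\high}\cup\mv_i^{\lowm})$. In particular, the Step~2 subgraphs $\mathbb{X}_2$ are subtrees of $\msttilde_i$ built around branching nodes; they can contain many virtual nodes and possibly a single non-isolated node with no second non-virtual node. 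The same issue arises for the prefix/suffix paths in $\mathbb{X}_5^{\prefix}$. The paper states this explicitly: among the five LS steps, ``only subgraphs formed in Steps 2 and 5 (more precisely, Step 5B) may not be good,'' and both of these feed into $\mathbb{X}^{\lowp}$ (and $\mathbb{X}^{\lowm}$). So restricting your repair to $\mathbb{X}^{\lowm}$ misses the Step~2 case entirely.

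The paper's fix is therefore not a post-hoc regrouping of $\mathbb{X}^{\lowm}$ but a redesign of Steps~2 and~5B themselves. For Step~5B it constructs an auxiliary path $\mathcal{P}$ over the non-virtual nodes (edges weighted by the corresponding $\msttilde$-subpath lengths), uses the cycle-property lemma to show each non-isolated node has an incident edge of weight $\le L_i$ in $\mathcal{P}$, and breaks $\mathcal{P}$ along edges of weight $\le 2L_i$ into groups of $2$--$3$ non-virtual nodes. Step~2 is rebuilt analogously on trees, with an additional merging rule (Claim~\ref{clm:adj}) for bad branching subtrees. Your MST-clustering sketch for $\mathbb{X}^{\lowm}$ gestures toward this, but the claim ``the whole path has augmented weight $O(L_i)$'' is also false in general: the $\msttilde_i$-path between $\varphi$ and its $\me_i$-partner can have arbitrarily many edges of weight up to $L_i$, so its total weight is unbounded; what the cycle property actually buys you is control over the \emph{parent edge} weight of each virtual node on that path, which is what the paper exploits via the auxiliary graph $\mathcal{P}$.
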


In the following section, we prove \Cref{lm:NewConstruction} assuming that \Cref{lm:additional-prop} holds. The proof of \Cref{lm:additional-prop} is deferred to \Cref{subsubsec:clustering}.

\subsubsection{Proof of \Cref{lm:NewConstruction}}

We use the same algorithm in \cite{LS21} to construct $H_i^{\sigma}$. The algorithm has three steps. Initially $H_i^{\sigma}$ has no edge.  
\begin{itemize}[noitemsep]
	\item \textbf{Step 1.~} For every subgraph $\mx \in \mathbb{X}$, we add to $H_i^{\sigma}$ every edge in $E^{\sigma}_i$ that corresponds to an edge in $\me(\mx)\cap \me_i$. The purpose of this step is to guarantee the assumption of \Cref{lm:XInvariants}.
	\item \textbf{Step 2.~}  Wee use Halperin-Zwick algorithm (\Cref{thm:HalperinZwick}) to construct  a $(2k-1)$-spanner for edges between $\mv_i^{\high}$ only. Specifically, we create an \emph{unweighted} graph $\mathcal{K}_i$  that has $\mv_i^{\high}$ as the vertex set and the subset of edges of $\me_i$ between $\mv_i^{\high}$ as the edge set. Then, we run Halperin-Zwick algorithm~\cite{HZ96} on $\mathcal{K}_i$ to obtain an edge set $\me^{\prune}_i$. We then add every edge in $E^{\sigma}_i$ corresponding to an edge in $\me^{\prune}_i$ to $H^{\sigma}_i$.
	\item \textbf{Step 3.~} We add to $H^{\sigma}_i$  every edge that corresponding to an edge of $\me_i$ incident to a node in $\mv_i^{\lowp}\cup \mv_i^{\lowm}$.
\end{itemize}
 
 Le and Solomon (Lemma 3.22 and Lemma 4.5 in ~\cite{LS21}) showed that $w(H^{\sigma}_i) = O_{\eps}(n^{1/k})\Delta_{i+1} + a_i$ for $a_i$ satisfying \Cref{lm:NewConstruction}, and that the stretch of every edge $(u,v) \in E^{\sigma}_i$ is at most $(2k-1)(1+(10g+1)\epsilon)$ in $H^{\sigma}_{i}$. Since their proof only uses properties stated in \Cref{lm:Clustering}, and that our construction in \Cref{lm:additional-prop} also satisfies \Cref{lm:Clustering}, Items (1) and (2) in \Cref{lm:NewConstruction} hold in our construction as well. We remark that the additive term $a_i$ is used to handle the degenerate case in Item (2) of \Cref{lm:Clustering}, since in that case, $\Delta_{i+1} \leq  0$. 
 
 We now focus on proving Items (3) and (4) of \Cref{lm:NewConstruction}. First, we observe that 
  for every node $\varphi_C$ that is incident to an edge  $\mbe \in \me_i$, the corresponding edge of $\mbe$ in $E^{\sigma}_i$ is added to $H^{\sigma}_i$, unless $\varphi_C\in \mv_i^{\high}$ and  Halperin-Zwick algorithm does not pick $\mbe$ to $\me^{\prune}_i$.  In this exceptional case, another edge incident to $\varphi_C$ must be picked to  $\me^{\prune}_i$; otherwise, $\varphi_C$ is  not connected to any node in the graph induced by $\me^{\prune}_i$, contradicting that the output is a spanner. It follows that $\my_i$ corresponds to level-$i$ clusters that have at least one incident edge in $H^{\sigma}_i$. Thus, Item (4) of \Cref{lm:NewConstruction} follows from \Cref{lm:additional-prop}. 
 
 By Item (3) in \Cref{lm:Clustering}, the total number of edges added in Step 1 is $O(1)\sum_{\mx \in \mathbb{X}}\ma_{\mx} = O(1)|\my_i|$. The number of edges added in Step 2 is  $|\me_i^{\prune}|= O(|\mv_i^{\high}|^{1+1/k}) = O(n^{1/k})|\mv_i^{\high}| = O(n^{1/k})|\my_i|$ since $\mv_i^{\high}\subseteq \my_i$ by the definition of non-isolated nodes. In Step 3, for each node $\varphi_C \in \mv_i^{\lowp}\cup \mv_i^{\lowm}$, we add at most $O(1/\eps)$ incident edges to $H^{\sigma}_i$ since nodes in $\mv_i^{\lowp}\cup \mv_i^{\lowm}$ have degree $O(1/\eps)$. Thus, the total number of edges added in Step 3 is $O(1/\eps)|\my_i|$. Item (3) of \Cref{lm:NewConstruction} now follows.   
 
 For the running time, we first note that constructing $\mg_i$ takes $O((|\mv_i| + |\me_i|)\alpha(m,n))$ time by \Cref{lm:GiConstr}. The set of subgraphs $\mathbb{X}$ is constructed in $O_{\eps}(|\mv_i| + |\me_i|)$ time by \Cref{lm:additional-prop}. In the construction of $H^{i}_{\sigma}$, Steps 1 and 3 take $O(|\mv_i| + |\me_i|)$ by a straightforward implementation. Step 2 takes $O(|\mv_i| + |\me_i|)$ time by \Cref{thm:HalperinZwick}. Thus, the total running time of the construction at level $i$ is $O((|\mv_i| + |\me_i|)\alpha(m,n))$. It follows that the total running time over all levels is:
 \begin{equation*}
 	\begin{split}
 		 	\sum_{i\geq 1} O_{\eps}\left((|\mv_i| + |\me_i|)\alpha(m,n)\right) & =	 O_{\eps}\left(( \sum_{i\geq 1}|\mv_i| + m )\alpha(m,n)\right)\\
 		 	& =  O_{\eps}\left((|\tilde{V}| + m )\alpha(m,n)\right)\qquad \mbox{(by \hyperlink{P2L}{property (P2)})} \\ 
 		 	& =   O_{\eps}\left(m\alpha(m,n)\right)  \qquad \mbox{(by \Cref{obs:virtualNum})}
 	\end{split}
 \end{equation*}
 \Cref{lm:NewConstruction} now follows. \qed

\subsubsection{The construction of $\mathbb{X}$}\label{subsubsec:clustering}

Recall that $\msttilde$ is the tree obtained by subdividing $\mst$ edges by virtual vertices. For each edge $e \in \mst$, we denote by $P_e$ the corresponding path of $\mst$ subdivided from $e$. We call $P_e$ the \emph{subdivided path} of $e$.  Since each virtual cluster $C\in \mathcal{C}_i$ only contains virtual vertices, $C$ induces a subpath of the subdivided path $P_e$ of some edge $e \in \mst$. We call $P_e$ the \emph{parent path} of $C$, and $e$ the \emph{parent edge} of $C$; see Figure~\ref{fig:cycleProp}(a). We also refer to $P_e$ as the parent path and to $e$ as the parent edge of the virtual node $\varphi_{C}$ corresponding to $C$.

\begin{figure}[!h]
	\begin{center}
		\includegraphics[width=1.0\textwidth]{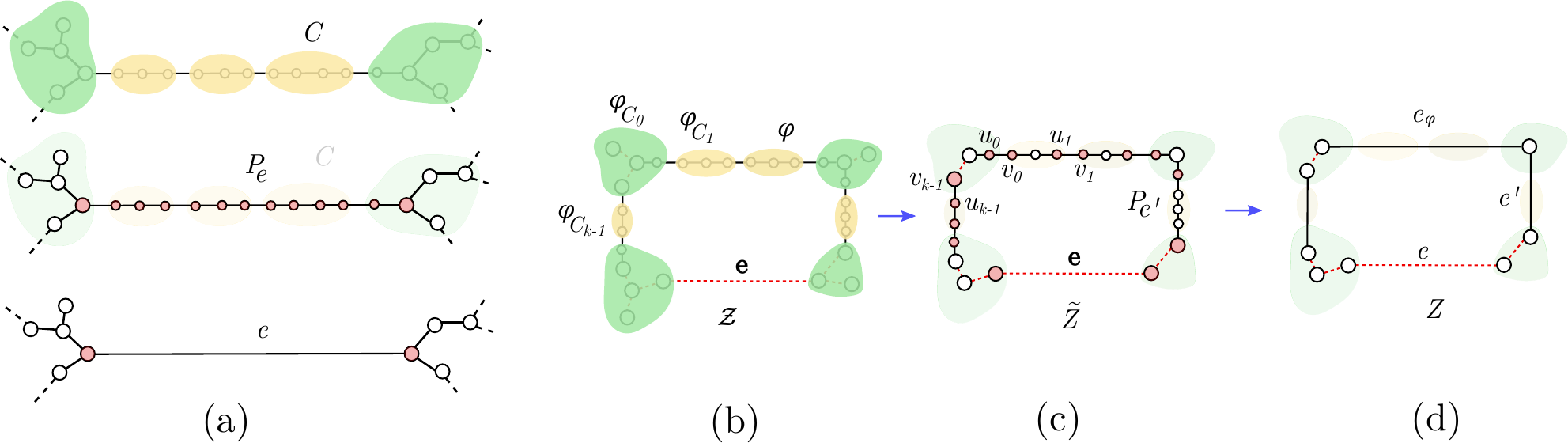}
	\end{center}
	\caption{Virtual clusters are yellow shaded and non-virtual clusters are green shaded. (a) A virtual cluster $C$, its parent path $P_e$, and the edge $e$ that corresponds to $P_e$. (b) The fundamental cycle $\mathcal{Z}$ of $\msttilde_{i}$ formed by an edge $\mbe$. (c) The corresponding cycle $\tilde{Z}$ in $\tilde{G}_{\heavy}$ corresponding to $\mathcal{Z}$. Shaded nodes are $u_0,v_0,u_1,v_1,\ldots,u_{k-1},v_{k-1}$. (d) The cycle $Z$  of $G_{\heavy}$ corresponding to $\tilde{Z}$ obtained by replacing each subdivided path $P_{e'}$ with the corresponding edge $e'$ in $\mst$. Solid (black) edges are $\mst$ edges, and red (dashed) edges are non-$\mst$ edges.}
	\label{fig:cycleProp}
\end{figure}

Our goal is to construct $\mathbb{X}$ satisfying all properties in \Cref{lm:Clustering}, and such that there is a significant reduction in the number of non-isolated clusters as claimed in \Cref{lm:NewConstruction}. To guarantee this additional constraint, we rely on a specific structure of  $\mg_i$ described in the following lemma, which is an analogous version of the cycle property of the minimum spanning tree.

\begin{lemma}\label{lm:cycle-Prop}  Let $\mbe = (\varphi_1,\varphi_2)$ be any edge in $\me_i$, and $\mz$ the fundamental cycle of $\msttilde_i$ formed by $\mbe$. For any virtual node $\varphi \in \mz$, $w(e_{\varphi}) \leq \omega(\mbe)$ where $e_{\varphi}$ is the parent edge of $\varphi$.  
\end{lemma}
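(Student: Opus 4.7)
My plan is to lift the cycle $\mz$ in $\mg_i$ back to the underlying graph $G$ and then invoke the classical cycle property of $\mst$. The crux is to show that the parent edge $e_\varphi \in \mst$ of any virtual node $\varphi$ on $\mz$ lies on the tree path $\mst[u,v]$, where $(u,v)$ is the underlying edge corresponding to $\mbe$.

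First I would unpack the primal data. The edge $\mbe \in \me_i$ corresponds to some $(u,v) \in E_i^{\sigma}$ with $\omega(\mbe) = w(u,v)$. Since $E_i^{\sigma} \subseteq E_{\heavy} \subseteq E(G)$, both $u$ and $v$ are non-virtual vertices in $V$; in particular $u$ lies in the cluster corresponding to $\varphi_1$ and $v$ in the cluster corresponding to $\varphi_2$, and both $\varphi_1$ and $\varphi_2$ are non-virtual nodes of $\mg_i$.

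Next I would locate the parent edge $e_\varphi$ within $\mst[u,v]$. Let $\varphi = \varphi_C$ be any virtual node on $\mz$. Because $\varphi_1,\varphi_2$ are non-virtual, $\varphi$ lies strictly in the interior of the $\msttilde_i$-path $\msttilde_i[\varphi_1,\varphi_2]$; write this path as $\varphi_{C_1}=\varphi_1,\varphi_{C_2},\ldots,\varphi_{C_s}=\varphi_2$ with $C = C_j$ for some interior index $j$. Using that each edge of the $\msttilde_i$-path corresponds to an inter-cluster $\msttilde$-edge, that the clusters $\{C_\ell\}$ are vertex-disjoint and partition $\tilde{V}$, and that $\msttilde$ is a tree so that $\msttilde[u,v]$ is unique, I would conclude that $\msttilde[u,v]$ threads through the clusters $C_1,\ldots,C_s$ in this order, and in particular visits $C$. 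Since $C$ consists solely of virtual vertices, each of which is an interior vertex of $P_{e_\varphi}$ whose only two $\msttilde$-neighbors lie on $P_{e_\varphi}$, the path $\msttilde[u,v]$ reaches the interior of $P_{e_\varphi}$. Because the only non-virtual vertices of $P_{e_\varphi}$ are its two endpoints in $V$ (namely the endpoints of $e_\varphi$), the path must in fact traverse $P_{e_\varphi}$ from one endpoint to the other; recalling that $\msttilde[u,v]$ is obtained from $\mst[u,v]$ by replacing each $\mst$-edge by its subdivided path, this forces $e_\varphi \in \mst[u,v]$.

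Finally, since $(u,v) \in E(G)$ and $e_\varphi \in \mst[u,v]$, the cycle property of $\mst$ applied to the fundamental cycle formed by $(u,v)$ yields $w(e_\varphi) \leq w(u,v) = \omega(\mbe)$, as required. The main conceptual hurdle is justifying the structural correspondence between the $\msttilde_i$-path and the sequence of clusters visited by $\msttilde[u,v]$; once this correspondence is pinned down (exploiting that clusters partition $\tilde{V}$ and that $\msttilde$ is a tree), the rest is a direct reduction to the classical MST cycle property and no further estimates are needed.
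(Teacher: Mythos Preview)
Your reduction to the classical cycle property on the fundamental cycle of $(u,v)$ in $\mst$ is appealing, but the key structural claim --- that $\msttilde[u,v]$ must visit the virtual cluster $C$ --- is not justified by the facts you cite. For the $\msttilde_i$-path $\varphi_{C_1},\ldots,\varphi_{C_s}$ to coincide with the sequence of clusters visited by $\msttilde[u,v]$, you need that each level-$i$ cluster induces a connected subtree of $\msttilde$ (so that contracting clusters in $\msttilde$ yields exactly the tree $\msttilde_i$). That is the situation in Section~\ref{sec:sparseRAM} (Property~(P2')), but it is \emph{not} guaranteed here: Property~(\hyperlink{P3L}{3}) only asserts connectivity of $H^{\sigma}_{\leq i-1}[C]$, not of $\msttilde[C]$, and indeed the cluster construction in Steps~1 and~4 (\Cref{lm:Clustering-Step1,lm:Clustering-Step4}) merges level-$i$ clusters using $\me_i$-edges, so level-$(i+1)$ clusters need not be $\msttilde$-subtrees. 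Once clusters are not subtrees, there can be inter-cluster $\msttilde$ edges that are not in $\msttilde_i$, the contracted graph has cycles, and $\msttilde[u,v]$ may traverse a completely different sequence of clusters than $C_1,\ldots,C_s$; in particular it may miss $C$, and then $e_\varphi\notin \mst[u,v]$.

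The paper circumvents this by \emph{not} reducing to the fundamental cycle of $(u,v)$. Instead it lifts $\mz$ to a closed walk $\tilde Z$ in $\tilde G$ by stitching, inside each cluster $C_\ell$ on $\mz$, a shortest path in $H^{\sigma}_{\leq i}[C_\ell]$ (which is connected by Property~(\hyperlink{P3L}{3})), and then contracts subdivided paths to get a cycle $Z$ in $G_{\heavy}$. Since the virtual cluster $C$ contains only degree-$2$ interior vertices of $P_{e_\varphi}$, $\tilde Z$ must traverse all of $P_{e_\varphi}$, so $e_\varphi\in Z$. The extra ingredient your approach does not use --- the diameter bound $\dm(H^{\sigma}_{\leq i}[C_\ell])\le g\eps L_i< L_i/(1+\eps)\le w(e)$ --- is then invoked to show every non-$\mst$ edge on $Z$ has weight at most $w(e)$, after which the cycle property on $Z$ (not on the fundamental cycle of $(u,v)$) yields $w(e_\varphi)\le w(e)$.
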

\begin{proof}
	Recall that $\tilde{G}_{\heavy}$ is obtained from $G_{\heavy}$ by subdividing $\mst$ edges, and that $G_{\heavy} = (V, E_{\heavy}\cup E(\mst))$. Let $e$ be the edge in $G_{\heavy}$ corresponding to $\mbe$. We construct a cycle $\tilde{Z}$ of $\tilde{G}_{\heavy}$ from $\mathcal{Z}$ as follows.  Write $$\mathcal{Z} = (\varphi_{C_0},\mbe_0,\varphi_{C_1},\mbe_1,\ldots,\varphi_{C_{k-1}}, \mbe_{k-1}, \varphi_{C_0})$$ as an alternating sequence of nodes and edges that starts from and ends at the same node $\varphi_{C_0}$. (See \Cref{fig:cycleProp}(a) and (b) for an illustration.)  For notational convenience, we regard the last node $\varphi_{C_0}$ as $\varphi_{C_k}$ with the subscript modulo $k$.   Let $(u_i,v_i)$ be the edge in $\tilde{G}_{\heavy}$ corresponding to $\mbe_i$, and $Q_i$ be the shortest path from $v_{(i-1)\mod k}$  to $u_i$ in $H^{\sigma}_{\leq i}[C_i]$ for any $0\leq i \leq k-1$. Then $\tilde{Z} = (u_0,v_0)\circ Q_1\circ  (u_1,v_1) \circ Q_2 \circ\ldots \circ (u_{k-1},v_{k-1})\circ Q_{k-1}$  is a cycle of $\tilde{G}$; here $\circ$ is the path concatenation operator. Observe that $\tilde{Z}$ contains the parent path, say $P_e$,  of $\varphi$. Let $Z$ be the cycle of $G_{\heavy}$ obtained from $\tilde{Z}$ by replacing each subdivided path say $P_{e'}$ in $\tilde{Z}$ with the corresponding $\mst$ edge $e'$; see Figure~\ref{fig:cycleProp}(d).  Note that both $e$ and $e_{\varphi}$ belong to $Z$. 
		
	Observe by property \hyperlink{P3L}{(P3)} that $\dm(H^{\sigma}_{\leq i}[C_i]) \leq g\eps L_i < L_i/(1+\eps) \leq \omega(\mbe) = w(e)$ when $\eps \leq 1/(2g)$. Thus, the weight of any non-$\mst$ edge in $Z$ is at most $w(e)$. That is, any edge of weight larger than $w(e)$ in $Z$ must be an $\mst$ edge. If there exists such an edge, then the edge of maximum weight in $Z$ is an $\mst$ edge, contradicting the cycle property of $\mst$. Thus, $e$ is  an edge of maximum weight in $Z$, which gives $w(e_{\varphi}) \leq w(e) = \omega(\mbe)$ as claimed. \qed
\end{proof}

Note by definition that a non-isolated node is a non-virtual node.  We say that a subgraph $\mx$ is \emph{good} if it either contains no non-isolated node or if it contains one non-isolated node, it has at least two non-virtual nodes (one of which is the non-isolated node). If every subgraph in $\mathbb{X}$ is good, then we could show that the number of non-virtual  clusters is reduced by at least $|\my_i|/2$. In LS construction, which has five steps, only subgraphs formed in Steps 2 and 5 (more precisely, Step 5B) may not be good. For Step 5B,  only need to make a minor modification and argue that the resulting subgraph is good  using \Cref{lm:cycle-Prop}.  For Step 2, we need an entirely different construction. As a result, our construction also has five steps. Steps 1,3, 4 and 5A are the same as the LS construction, and are taken verbatim from~\cite{LS21} for completeness. Notation introduced in this section is summarized in the following table.

\renewcommand{\arraystretch}{1.3}
\begin{longtable}{| l | l|} 
	\hline
	\textbf{Notation} & \textbf{Meaning} \\ \hline
	$E_{\light}$ &$ \{e \in E(G) : w(e)\le w/\varepsilon\}$\\ \hline 
	$E_{\heavy}$ & $E \setminus E^{light}$ \\\hline
	$E^{\sigma} $ & $\bigcup_{i \in \mathbb{N}^{+}}E_{i}^{\sigma}$\\\hline
	$E_{i}^{\sigma} $ & $\{e \in E(G) : \frac{L_i}{1+\eps} < w(e) \le L_i\}$\\\hline
	$H^\sigma_i$ & A spanner constructed for edges in $E^{\sigma}_i$\\ \hline	
	$H^\sigma_{\leq i}$ & $H^\sigma_{\leq i} =\cup_{j\leq i}H^{\sigma}_i$\\ \hline
	$g$ & constant in \hyperlink{P3L}{property (P3)}, $g = 42$ \\\hline
	Non-virtual cluster& A cluster containing at least one non-virtual vertex\\\hline
	Non-virtual node& A node in $\mv_i$ corresponding to a non-virtual cluster\\\hline
	$\mathcal{N}_i$ & the set of non-virtual clusters (nodes) at level $i$\\\hline
	$\mathcal{M}_i$ & the set of virtual clusters (nodes) at level $i$\\\hline
	Non-isolated cluster& A cluster containing an endpoint of an edge added to $H^\sigma_i$\\ \hline 
	Non-isolated node& A node in $\mv_i$ corresponding to a non-isolated cluster\\\hline
	$\mathcal{Y}_i$ & the set of non-isolated clusters (nodes) at level $i$; $\my_i\subseteq \mathcal{N}_i$\\\hline
	$\mathcal{G}_i = (\mv_i, \msttilde_{i} \cup \mathcal{E}_i, \omega)$ & cluster graph \\\hline
	$\me_i$ & corresponds to a subset of edges of $E^{\sigma}_i$\\\hline
	$\mathbb{X}$ & a collection of subgraphs of $\mathcal{G}_i$\\\hline
	$\mx, \mv(\mx), \me(\mx)$ & a subgraph in $\mathbb{X}$, its vertex set, and its edge set\\\hline
	Good subgraph $\mx$& $\mx$ contains no non-isolated node or at least two non-virtual nodes  \\\hline
	$\Phi_i$ & $\sum_{c \in C_i}\Phi(c)$ \\\hline
	$\Delta_{i+1} $&$ \Phi_i - \Phi_{i+1}$\\\hline
	$\Delta_{i+1}(\mx)$ & $(\sum_{\phi_C\in \mx }\Phi(C) ) - \Phi(C_{\mx})$\\\hline
	$\Delta_{i+1}^+(\mx)$ & $\Delta_{i+1}(\mx) + \bigcup_{e \in \me(\mx)\cap \msttilde_i}w(e)$\\\hline
	$C_\mx$ & $\bigcup_{\phi_C \in \mx}C$ \\\hline
	$\{\mv^{\high}_i,\mv^{\lowp},\mv^{\lowm}_{i}\}$ & a degree-specific partition of $\mv_i$ \\\hline
	$\{\mathbb{X}^{\high}, \mathbb{X}^{\lowp}, \mathbb{X}^{\lowm}\}$ & A partition of $\mathbb{X}$ conforming a degree-specific partition.  \\\hline
	\caption{Notation introduced in  this section}
	\label{table:notation}
\end{longtable}
\renewcommand{\arraystretch}{1}

\begin{lemma}[Step 1, Lemma 5.1~\cite{LS21}]\label{lm:Clustering-Step1} Let $\mv^{\high}_i$ be the set of nodes incident to at least $2g/\eps$ edges in $\me_i$, and $\mv^{\high+}_i$ be the set of all nodes in $\mv^{\high}_i$ and their neighbors that are connected via edges in $\me_i$. We can construct in $O(|\mv_i| + |\me_i|)$ time a collection of node-disjoint subgraphs $\mathbb{X}_1$ of $\mg_i$ such that:
	\begin{enumerate}[noitemsep]
		\item[(1)] Each subgraph $\mx \in \mathbb{X}_1$ is a tree.
		\item[(2)] $\cup_{\mx \in \mathbb{X}_1}\mv(\mx) = \mv^{\high+}_i$.
		\item[(3)] $L_i \leq \adm(\mx) \leq 13L_i$, assuming that $\eps \leq 1/g$. 
		\item[(4)] $|\mv(\mx)|\geq \frac{2g}{\eps}$.
	\end{enumerate}
\end{lemma}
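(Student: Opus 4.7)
I would build $\mathbb{X}_1$ via a greedy two-phase procedure that exploits the abundant neighborhood of each high-degree node. In the first phase, scan the nodes of $\mv^{\high}_i$ in an arbitrary order, maintaining a \emph{claimed/unclaimed} flag on every node of $\mv_i$ (all unclaimed at the start). When an unclaimed $\varphi \in \mv^{\high}_i$ is encountered, inspect its $\me_i$-adjacency list: if at least $2g/\eps$ of $\varphi$'s $\me_i$-neighbors are currently unclaimed, form a new subgraph $\mx \in \mathbb{X}_1$ consisting of $\varphi$ together with all such unclaimed neighbors (connected by their $\me_i$-edges to $\varphi$), and mark all of these nodes claimed; otherwise defer $\varphi$ to Phase 2.

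In the second phase, I would attach the stragglers---nodes of $\mv^{\high+}_i$ that are still unclaimed---to existing subgraphs. Each straggler $u$ is either (a) a deferred high-degree node, in which case at least one of $u$'s $\ge 2g/\eps$ neighbors must have been claimed by some $\mx \in \mathbb{X}_1$; or (b) a non-high-degree member of $\mv^{\high+}_i$ that was skipped, which must have a high-degree neighbor $\varphi$ that was processed in Phase 1 (either as a center or as a claimed leaf). In either case, pick a claimed $\me_i$-neighbor $\psi$ of $u$---preferring a Phase-1 node whenever available---and attach $u$ to $\mx_\psi$ by the edge $(u,\psi)$. Processing all category-(a) stragglers before category-(b) keeps every straggler at hop-distance one either from a Phase-1 node, or from a category-(a) straggler which is itself one hop from a Phase-1 node.

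For the verification: every $\mx \in \mathbb{X}_1$ begins as a star (a tree) and each subsequent attachment grafts a fresh leaf onto an already-present node, so $\mx$ remains a tree (Item (1)). Coverage of $\mv^{\high+}_i$ (Item (2)) is enforced by construction, since Phase 2 explicitly handles every unclaimed node in $\mv^{\high+}_i$. The size bound $|\mv(\mx)| \geq 2g/\eps$ (Item (4)) holds because each $\mx$ contains a Phase-1 star of size $\geq 2g/\eps + 1$. For the upper bound in Item (3), each $\me_i$-edge has weight at most $L_i$ and each node has weight $\omega(\varphi_C) = \Phi(C) \leq g L_{i-1} = g\eps L_i$; with the above attachment policy the hop-diameter of $\mx$ is at most $6$ (the worst-case path is newcomer--category-(a)--Phase-1 leaf--center--Phase-1 leaf--category-(a)--newcomer), so
\[
\adm(\mx) \;\leq\; 6 L_i \,+\, 7 g\eps L_i \;\leq\; 13 L_i
\]
whenever $\eps \leq 1/g$. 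The lower bound $\adm(\mx) \geq L_i$ follows from any single $\me_i$-edge $\mbe=(\varphi_1,\varphi_2)$ of $\mx$: its augmented weight is $\omega(\mbe) + \omega(\varphi_1) + \omega(\varphi_2) \geq L_i/(1+\eps) + 2 L_{i-1} \geq L_i$, using that node weights at level $i$ are at least $L_{i-1}$ (immediate at $i=1$ by \Cref{lm:level1Const}, and by induction from the same $\adm \geq L_j$ lower bound at lower levels).

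The running time is $O(|\mv_i| + |\me_i|)$, since Phase 1 scans each $\me_i$-adjacency list once and Phase 2 touches every straggler through a single neighbor. I expect the main obstacle to be the tight hop-diameter control after Phase 2---specifically, preventing chains of Phase-2 newcomers from cascading and blowing past the $13L_i$ ceiling. The category ordering, together with the fact that every deferred high-degree node has abundant Phase-1 neighbors to attach to, keeps every newcomer at hop-distance one from either a Phase-1 node or a category-(a) node, which suffices for the hop-diameter bound of $6$ and hence for the required augmented-diameter bound.
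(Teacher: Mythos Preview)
The paper does not supply its own proof of this lemma; it is quoted verbatim as Lemma~5.1 of~\cite{LS21}. Your greedy two-phase star construction is correct and is the natural approach one expects the cited reference to take: center a star on each unclaimed high-degree node with enough free neighbors, then graft the remaining nodes of $\mv^{\high+}_i$ as leaves while controlling attachment depth. The one delicate point---preventing Phase-2 chains---you handle properly: processing deferred high-degree nodes (category (a)) before low-degree stragglers (category (b)), and using that every deferred high-degree node must have a Phase-1 neighbor (deferral means some neighbor was already claimed in Phase~1), caps every node's hop-distance to the star center at~$3$, giving hop-diameter~$\le 6$ and hence $\adm(\mx)\le 6L_i+7g\eps L_i\le 13L_i$. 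Your lower-bound argument via a single $\me_i$-edge together with the inductive node-weight bound $\omega(\varphi_C)\ge L_{i-1}$ is also sound.
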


 Let $\Ftilde^{(2)}_i$ be the forest obtained from $\msttilde_{i}$ by removing every node in $\mv^{\high+}_i$ (defined in \Cref{lm:Clustering-Step1}). LS algorithm deals with branching nodes of $\Ftilde^{(2)}$ in Step 2. We say that a node in a tree $\Ttilde$ is \emph{$\Ttilde$-branching}  if it has degree at least $3$ in $\Ttilde$. A node in a forest $\Ftilde$ is $\Ftilde$-branching if it is $\Ttilde$-branching in some tree $\Ttilde$ of $\Ftilde$. We will omit the prefixes $\Ttilde$ and $\Ftilde$ in the branching notation whenever the tree and the forest are clear from the context.

 Similar to LS algorithm, our goal is to group all branching nodes of  $\Ftilde^{(2)}_i$ into subgraphs.  However, we need to guarantee that subgraphs formed in this step are good, which a priori,  are not guaranteed to be good in LS construction.

\begin{restatable}{lemma}{ClusteringStepTwo}
	\label{lm:Clustering-Step2}   We can construct in $O(|\mv_i|)$ time a collection $\mathbb{X}_2$ of subtrees of $\Ftilde^{(2)}_i$ and a subset of nodes $\mz$ of $\Ftilde^{(2)}_i$ such that, for every $\mx \in \mathbb{X}_2$:
	\begin{enumerate}[noitemsep]
		\item[(1)] $\mx$ is a tree, has an $\mx$-branching node, and is good.
		\item[(2)] $L_i \leq \adm(\mx)\leq 20L_i$.
		\item[(3)] $|\mv(\mx)| = \Omega(\frac{1}{\epsilon})$  when $\epsilon \leq 2/g$. 
		\item[(4)] Let $\Ftilde^{(3)}_i$ be obtained from $\Ftilde^{(2)}_i$ by removing every node contained in subgraphs of $\mathbb{X}_2$ and in $\mz$. Then, for every tree $\Ttilde \subseteq \Ftilde^{(3)}_i$, either (4a) $\adm(\Ttilde)\leq 6L_i$ or (4b) $\Ttilde$ is a path.
		\item[(5)] Nodes in $\mz$ are augmented to subgraphs in $\mathbb{X}_1$ such that for every subgraph $\my \in \mathbb{X}_1$ that are augmented, $\my^{\aug}$ remains a tree and $\adm(\my^{\aug}) \leq 24L_i$ where $\my^{\aug}$ is $\my$ after the augmentation.  
	\end{enumerate}
\end{restatable}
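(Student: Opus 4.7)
The plan is to adapt the construction of the analogous step in~\cite{LS21}, with one critical modification tailored to the new good-property requirement of Item~(1). The construction processes branching nodes of $\Ftilde^{(2)}_i$ in a bottom-up fashion over the forest. A crucial structural observation is that every branching node of $\msttilde_i$, and hence of $\Ftilde^{(2)}_i$, must be non-virtual: a virtual node corresponds to a subpath of a single subdivided $\mst$ path $P_e$, and therefore has $\msttilde_i$-degree at most $2$. So the center $\varphi$ of every subtree that I form is already non-virtual. For each such $\varphi$, I greedily aggregate pendant subpaths of $\Ftilde^{(2)}_i$ attached to $\varphi$ into a subtree $\mx$ until its augmented weight exceeds $L_i$, ensuring at least three branches are absorbed so that $\varphi$ remains an $\mx$-branching node. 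Since every virtual node has weight at most $gL_{i-1} = g\eps L_i$ by property~\hyperlink{P3L}{(3)}, crossing the $L_i$ threshold requires $\Omega(1/\eps)$ nodes, giving Item~(3); a bookkeeping argument in the spirit of \cite{LS21} bounds $\adm(\mx)$ by a small constant multiple of $L_i$, well below the $20L_i$ target.

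The key new step is enforcing the good property on each freshly formed $\mx$. If $\mx$ already contains at least two non-virtual nodes, it is good. If its only non-virtual node is $\varphi$ itself and $\varphi$ is isolated, $\mx$ has no non-isolated node and is again good. The only non-trivial case is that $\varphi$ is non-isolated; then there exists an edge $\mbe = (\varphi, \varphi'')\in \me_i$ with $\omega(\mbe) \le L_i$, and $\varphi''$ is non-virtual because edges of $\me_i$ join non-virtual clusters. I invoke Lemma~\ref{lm:cycle-Prop}: on the fundamental $\msttilde_i$-cycle formed by $\mbe$, every virtual node has parent-edge weight $\le L_i$ and hence node weight $\le g\eps L_i$, and the underlying $\mst$ edges are bounded in weight by $L_i$ via the cycle property of $\mst$ invoked inside the proof of Lemma~\ref{lm:cycle-Prop}. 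This bounds the augmented length of the $\msttilde_i$-path from $\varphi$ to $\varphi''$ by $O(L_i)$. I then extend $\mx$ along this path within $\Ftilde^{(2)}_i$ up to the first non-virtual node encountered; the extension increases $\adm(\mx)$ by $O(L_i)$, keeping it below $20L_i$, and makes $\mx$ good.

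The residual forest $\Ftilde^{(3)}_i$ consists of trees whose branching nodes have all been absorbed, so each is either a path (case (4b)) or a short tree of augmented diameter at most $6L_i$ (case (4a)), using the threshold argument on the aggregation phase. Nodes that fall on short $\msttilde_i$-chains adjoining already-formed subgraphs of $\mathbb{X}_1$ are collected into $\mz$ and then attached to their neighboring $\mathbb{X}_1$-subgraphs; the cycle property bounds each attachment by $O(L_i)$ in augmented length, keeping $\adm(\my^{\aug}) \le 24L_i$ and preserving the tree structure of $\my^{\aug}$, yielding Item~(5). A single traversal of $\Ftilde^{(2)}_i$ with amortized $O(1)$ work per node delivers the $O(|\mv_i|)$ runtime. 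The main obstacle will be sequencing the good-property extensions cleanly: the nearest non-virtual node along the $\msttilde_i$-path from a non-isolated $\varphi$ might lie in territory already claimed by another subtree formed earlier in the bottom-up sweep. The short $O(L_i)$ length of each extension, guaranteed by the cycle property, together with a fixed bottom-up order on branching nodes, resolves these potential collisions and simultaneously maintains Items~(1)--(5).
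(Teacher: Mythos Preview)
Your proposal identifies the right structural ingredient (the cycle property via \Cref{lm:cycle-Prop}) and the correct observation that every branching node is non-virtual. However, the mechanism you propose for enforcing the good property has a genuine gap, and it differs from the paper's approach in a way that matters.

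The paper does \emph{not} extend a non-good subtree along the fundamental-cycle path to reach the nearest non-virtual node. Instead, it proceeds in two phases. In Step~(i), it repeatedly picks a branching node $\varphi$ (preferring a non-isolated one when available), performs BFS to augmented radius $2L_i$, and carves out the resulting subtree $\Ttilde'$; this yields a preliminary collection $\mathbb{A}$. In Step~(ii), it post-processes each $\Ttilde'\in\mathbb{A}$ that is not good. The key structural claim (the paper's Claim~\ref{clm:adj}) is: if $\Ttilde'$ has a single non-isolated node (its center $\varphi$), no connecting node, and no other non-virtual node, then the non-virtual neighbor $\varphi'$ guaranteed by \Cref{lm:cycle-Prop} at distance $\le L_i$ must already have been grabbed by an \emph{adjacent} tree $\Ttilde''\in\mathbb{A}$, and $\Ttilde''$ necessarily contains at least two non-virtual nodes (its own branching center plus $\varphi'$; a short argument rules out $\varphi'$ being the center of $\Ttilde''$). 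The fix is then to \emph{merge} $\Ttilde'$ into $\Ttilde''$ via the connecting $\msttilde_i$ edge. If instead $\Ttilde'$ has a connecting node, its nodes go into $\mz$ and are augmented to the neighboring subgraph in $\mathbb{X}_1$.

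Your extension step, by contrast, attempts to grow $\mx$ to swallow the nearest non-virtual node $\varphi'$. You explicitly flag the obstacle that $\varphi'$ may already lie in another subtree, and then assert that ``a fixed bottom-up order on branching nodes resolves these potential collisions.'' This assertion is not justified and, as stated, does not work: the fundamental-cycle path from $\varphi$ can head toward the root just as easily as toward the leaves, so bottom-up ordering gives no control over whether $\varphi'$ is already claimed. Extending into claimed territory violates disjointness (property~(\hyperlink{P1'L}{P1'})), and stopping short leaves $\mx$ still not good. The paper's merge-with-neighbor resolution is precisely what closes this gap: rather than claiming $\varphi'$, you merge with the tree that already owns it, which is guaranteed to be good. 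Your description of $\mz$ (``short $\msttilde_i$-chains adjoining $\mathbb{X}_1$'') is also too vague; in the paper, $\mz$ arises exactly from non-good trees $\Ttilde'$ that contain a connecting node, and the $24L_i$ bound in Item~(5) follows from $\adm(\my)\le 13L_i$ plus star-like attachment of such trees, not from the cycle property per se.
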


There are two differences in the construction of Step 2 in our construction compared to the construction in LS algorithm. First, the graphs constructed are good. Second, for some edges cases where we could not group branching nodes into subgraphs satisfying Item (1), we show that they could be augmented to subgraphs in $\mathbb{X}_1$. These nodes are in  $\mz$ in Item (5), and our construction guarantees that the augmentation does not change the structure of subgraphs in $\mathbb{X}_1$. That is, subgraphs in $\mathbb{X}$ remain trees, and their diameters are not increased by much. The increase in the diameter from $13 L_i$ in \Cref{lm:Clustering-Step1} to $24L_i$ in Item (5) in \Cref{lm:Clustering-Step2} does not affect the overall argument of Le and Solomon~\cite{LS21}; this only affects the choice of $g$, which we have the freedom to choose as large as we want.  The augmented diameter of $\mx$ in Item (2) in \Cref{lm:Clustering-Step2} is also slightly larger than the diameter of subgraphs in~\cite{LS21}, which is at most $2L_i$. This change also only affects the choice of $g$. The proof of \Cref{lm:Clustering-Step2} will be delayed to \Cref{subsec:proof}.

\paragraph{Step 3: Augmenting $\mathbb{X}_1\cup \mathbb{X}_2$.~} We say that a path of augmented diameter at least $6L$ in the forest $\Ftilde^{(3)}_i$ in Item (4) of \Cref{lm:Clustering-Step2} a \emph{long path}.  In this step, we further augment graphs formed in Steps 1 and 2. The purpose is to guarantee that for any long path after this step, at least one endpoint of the path is connected to a node in a subgraph of $\mathbb{X}_1\cup \mathbb{X}_2$ via an $\msttilde_{i}$ edge. 
\begin{quote}
\textbf{The construction.~} Let $\mathcal{A}$ be the set of all nodes in a long path of   $\Ftilde^{(3)}_i$  that is $\msttilde_{i}$-branching.  For each node $\varphi \in \mathcal{A}$, let $\mx \in \mathbb{X}_1\cup \mathbb{X}_2$ be (any) subgraph such that $\varphi$ is connected to a node in $\mx$ via an $\msttilde_{i}$ edge $\mbe$.  We then add $\varphi$ and $\mbe$ to $\mx$.  
\end{quote}

\begin{lemma}[Lemma 5.3.~\cite{LS21}]\label{lm:Clustering-Step3} The augmentation in Step 3 can be implemented in $O(|\mv_i|)$ time, and increases the augmented diameter of each subgraph in  $\mathbb{X}_1\cup \mathbb{X}_2$ by at most $4L_i$ when $\eps \leq 1/g$. \\
	Furthermore, let $\Ftilde^{(4)}_i$ be the forest obtained from $\Ftilde^{(3)}_i$ by removing every node in $\mathcal{A}$. Then, for every tree $\Ttilde \subseteq \Ftilde^{(4)}_i$, either:
	\begin{enumerate}[noitemsep]
		\item[(1)]$\adm(\Ttilde)\leq 6L_i$ or
		\item[(2)] $\Ttilde$ is a path such that (2a)  every node in $\Ttilde$ has \emph{degree at most $2$} in $\msttilde_{i}$ and (2b) at least one endpoint $\varphi$ of  $\Ttilde$ is connected via an $\msttilde_{i}$ edge to a node $\varphi'$ in a subgraph of $\mathbb{X}_1\cup \mathbb{X}_2$, unless $\mathbb{X}_1\cup \mathbb{X}_2 = \emptyset$. 
	\end{enumerate}
\end{lemma}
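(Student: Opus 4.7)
The plan is to verify the three assertions---runtime, diameter increase, and structure of $\Ftilde^{(4)}_i$---in turn, leveraging the path structure given by Item~(4) of~\Cref{lm:Clustering-Step2} and the tree property of the spanning tree $\msttilde_i$.

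For the runtime, a single DFS over $\Ftilde^{(3)}_i$ that computes augmented diameters identifies all long paths; precomputing $\msttilde_i$-degrees lets ``$\msttilde_i$-branching'' be tested in $O(1)$ per node, so $\mathcal{A}$ is assembled in $O(|\mv_i|)$ time. For each $\varphi \in \mathcal{A}$ I scan $\varphi$'s $\msttilde_i$-neighborhood for a neighbor in some $\mx \in \mathbb{X}_1 \cup \mathbb{X}_2$. Such a neighbor must exist: $\varphi$ has $\msttilde_i$-degree at least $3$ yet $\Ftilde^{(3)}_i$-degree at most $2$ (since $\varphi$ lies on a path of $\Ftilde^{(3)}_i$), so some $\msttilde_i$-neighbor of $\varphi$ was removed in forming $\Ftilde^{(3)}_i$, and every removed node lies in $\mv^{\high+}_i \cup \mathbb{X}_2 \cup \mz \subseteq \mathbb{X}_1 \cup \mathbb{X}_2$ (using Item~(5) of~\Cref{lm:Clustering-Step2} for the $\mz$-augmentation). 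Across $\mathcal{A}$ these scans amortize to $O(|\mv_i|)$ since $\msttilde_i$ is a tree.

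For the diameter increase, each augmented $\varphi$ attaches as a leaf to $\mx$ via one $\msttilde_i$-edge $\mbe$. The edge weight satisfies $\omega(\mbe) \le \bar{w} \le L_0 \le L_i$ because $\msttilde_i$-edges correspond to $\msttilde$-edges; the node weight satisfies $\omega(\varphi) = \Phi(C_\varphi) \le g\eps L_i$, which for $i=1$ follows from \Cref{lm:level1Const} after choosing $g \ge 14$, and for $i \ge 2$ follows from property~(\hyperlink{P3'L}{P3'}) applied at level $i-1$ (using $L_{i-1} = \eps L_i$). The new augmented diameter between any two attached leaves is therefore at most $\adm(\mx) + 2(L_i + g\eps L_i)$, so the increase is at most $2L_i(1+g\eps) \le 4L_i$ whenever $\eps \le 1/g$.

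The structural claim is the heart of the proof. Let $\Ttilde$ be a connected component of $\Ftilde^{(4)}_i$. Any subtree of a short component of $\Ftilde^{(3)}_i$ inherits $\adm \le 6L_i$, giving~(1); otherwise $\Ttilde$ is a subpath of some long path $P \subseteq \Ftilde^{(3)}_i$ by Item~(4b) of~\Cref{lm:Clustering-Step2}, so $\Ttilde$ is itself a path. Since $\mathcal{A}$ removed every $\msttilde_i$-branching node of $P$, each surviving node of $\Ttilde$ has $\msttilde_i$-degree at most $2$, which is~(2a). For~(2b), assume $\mathbb{X}_1 \cup \mathbb{X}_2 \ne \emptyset$. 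An endpoint of $\Ttilde$ that is \emph{not} an endpoint of $P$ must be $\msttilde_i$-adjacent to a node of $\mathcal{A}$, which Step~3 has inserted into some $\mx \in \mathbb{X}_1 \cup \mathbb{X}_2$. An endpoint that \emph{is} an endpoint of $P$ is either an $\msttilde_i$-leaf or $\msttilde_i$-adjacent to a node removed while constructing $\Ftilde^{(2)}_i$ or $\Ftilde^{(3)}_i$, which sits in $\mv^{\high+}_i \cup \mathbb{X}_2 \cup \mz \subseteq \mathbb{X}_1 \cup \mathbb{X}_2$. The sole troubling case is when both endpoints of $\Ttilde$ are $\msttilde_i$-leaves; ruling this out is the main obstacle of the proof. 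Under~(2a) every $\msttilde_i$-edge incident to a node of $\Ttilde$ stays inside $\Ttilde$, so $\Ttilde$ has no $\msttilde_i$-edge leaving it. Since $\msttilde_i$ is a connected spanning tree, this forces $\Ttilde = \msttilde_i$, meaning no node was ever removed, so $\mathbb{X}_1 \cup \mathbb{X}_2 = \emptyset$, contradicting our assumption.
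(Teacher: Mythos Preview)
Your proof is correct. The paper itself does not provide a proof of this lemma; it is quoted verbatim from~\cite{LS21} (Lemma~5.3 there) and simply invoked. Your argument follows the natural line: the existence of a neighbor in $\mathbb{X}_1\cup\mathbb{X}_2$ for every $\varphi\in\mathcal{A}$ comes from the degree discrepancy between $\msttilde_i$ and the path $P\subseteq\Ftilde^{(3)}_i$; the diameter bound follows from the star-like attachment together with $\omega(\mbe)\le\bar{w}$ and $\omega(\varphi)\le g\eps L_i$; and the structural dichotomy for $\Ftilde^{(4)}_i$ is handled by the case analysis on whether an endpoint of $\Ttilde$ is or is not an endpoint of the ambient long path $P$, with the connectedness of $\msttilde_i$ disposing of the residual ``both endpoints are $\msttilde_i$-leaves'' case. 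One minor remark: your edge-weight bound $\bar{w}\le L_0\le L_i$ is looser than necessary (in fact $\bar{w}\le L_0=\eps L_1\le \eps L_i$, as used elsewhere in the paper), but the looser bound already suffices for the stated $4L_i$ increase under $\eps\le 1/g$.
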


We emphasize that in Item (2a) of \Cref{lm:Clustering-Step3}, the degree bound is in $\msttilde_{i}$. This is important for the construction in Step 5. Step 4 deals with long paths of $\Ftilde^{(4)}_i$, the forest in \Cref{lm:Clustering-Step3}. The construction uses Red/Blue Coloring. The coloring guarantees that for any long path in $\Ftilde^{(4)}_i$, the nodes in the prefix/suffix of augmented length at most $L_i$ get red color, while other nodes get blue color.  

\begin{quote}
	\textbf{Red/Blue Coloring.~}\hypertarget{RBColoring}{}  The coloring applies to each long path $\Ptilde \in  \Ftilde^{(4)}_i$. Specifically, a node gets red color if its augmented distance to at least one of the two endpoints of $\Ptilde$ is at most $L_i$; otherwise, it gets blue color. 
\end{quote}

\begin{lemma}[Step 4, Lemma 5.4~\cite{LS21}]\label{lm:Clustering-Step4} We can construct in $O((|\mv_i| + |\me_i|)\epsilon^{-1})$ time a collection $\mathbb{X}_4$ of subgraphs of $\mg_i$ such that every $\mx\in \mathbb{X}_4$:
	\begin{enumerate}[noitemsep]
		\item[(1)] $\mx$ contains a single edge in $\me_i$.
		\item[(2)] $L_i \leq \adm(\mx)\leq 5L_i$.
		\item[(3)]  $|\mv(\mx)| = \Theta(\frac{1}{\epsilon})$ when $\epsilon \ll \frac{1}{g}$. 
		\item[(4)] $\Delta_{i+1}^{+}(\mx) = \Omega(\eps^2 |\mv(\mx)| L_i)$.
		\item[(5)] Let $\Ftilde^{(5)}_i$ be obtained from $\Ftilde^{(4)}_i$ by removing every node contained in subgraphs of $\mathbb{X}_4$. If we apply \hyperlink{RBColoring}{Red/Blue Coloring} to each path of augmented diameter at least $6L_i$ in $\Ftilde^{(5)}_i$, then there is no edge in $\me_i$ that connects two blue nodes in $\Ftilde^{(5)}_i$.
	\end{enumerate}
\end{lemma}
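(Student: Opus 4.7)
The plan is to identify all edges in $\me_i$ whose endpoints are both blue under Red/Blue Coloring applied to the long paths of $\Ftilde^{(4)}_i$, and then greedily build a subgraph $\mx\in \mathbb{X}_4$ around each such edge by attaching a prefix of one of the long paths. The existence of a usable prefix is guaranteed by the coloring itself: a blue node $\varphi$ lies at augmented distance strictly greater than $L_i$ from each endpoint of its long path, so there is room to walk an augmented distance of $L_i$ along the path away from $\varphi$ while staying within the long path.

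Concretely, for each blue-blue edge $\mbe=(\varphi_1,\varphi_2)$ (processed in some fixed order), I would traverse the long path containing $\varphi_1$ in one direction, accumulating consecutive nodes and $\msttilde_i$ edges until the augmented weight of the collected segment first reaches $L_i$; call this segment $\mathcal{S}$. Property \hyperlink{P3L}{(P3)} from the previous level gives that every node has weight $\omega(\varphi)\le g\eps L_i$, and since the edges of $\msttilde_i$ along the long path also have weight at most $L_i$, the segment contains $\Theta(1/\eps)$ nodes. Define $\mx = \mathcal{S}\cup\{\mbe,\varphi_2\}$. To maintain node-disjointness between elements of $\mathbb{X}_4$, I skip any edge whose prospective segment would overlap a previously-chosen one; using a pointer along each long path that advances as segments are carved off ensures that each node is visited $O(1)$ times globally and gives total runtime $O((|\mv_i|+|\me_i|)\eps^{-1})$.

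Properties (1)-(3) then follow immediately from the construction: $\mx$ contains exactly the single $\me_i$ edge $\mbe$, $\adm(\mx)\le L_i + \omega(\mbe) + O(\eps L_i) \le 5L_i$ since $\omega(\mbe)\le L_i$, and $|\mv(\mx)|=\Theta(1/\eps)$. The lower bound $\adm(\mx)\ge L_i$ is forced by the augmented length of $\mathcal{S}$ itself. The main technical obstacle is property (4), the potential-change bound $\Delta_{i+1}^{+}(\mx)=\Omega(\eps^{2}|\mv(\mx)|L_i)=\Omega(\eps L_i)$. The key observation here is that $\mx$ is not a plain path in $\mg_i$: attaching $\mbe$ at an interior-like node $\varphi_1$ of $\mathcal{S}$ creates a branching structure, so $\adm(\mx)$ strictly undercounts the total augmented weight $\sum_{\varphi}\omega(\varphi)+\sum_{\mbe'}\omega(\mbe')$. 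Since $\omega(\mbe)\ge L_i/(1+\eps)$ while the segment and $\varphi_2$ contribute only $O(L_i)$ to $\adm(\mx)$, a careful decomposition of the augmented-distance formula on this tree produces a surplus of order $L_i$ after subtracting $\adm(\mx)$; adding back the $\msttilde_i$ edge weights in $\mx$ then gives the required $\Omega(\eps L_i)$ bound.

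Finally, property (5) follows from a \emph{structural argument} about what happens to a long path when a segment is carved out. Removing $\mathcal{S}$ splits its long path into at most two sub-paths; by construction the removed segment is adjacent to a blue endpoint of $\mbe$, so one of the two resulting sub-paths is a \emph{prefix} of the original long path of augmented length at most $L_i$, which cannot contain a blue node in $\Ftilde^{(5)}_i$ because blue nodes are at distance $>L_i$ from each long-path endpoint. The other sub-path inherits a shorter length, and either becomes too short to be called a long path ($<6L_i$) or its blue region shrinks correspondingly. Any hypothetical blue-blue $\me_i$ edge remaining in $\Ftilde^{(5)}_i$ would have been eligible at the time of the greedy loop and thus contradicts termination. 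The expected difficulty concentrates in (4); properties (1)-(3) and (5) are routine consequences of the local geometry of the construction, but showing that the branching induced by $\mbe$ buys enough potential savings requires tight manipulation of the augmented-distance formula, in the same spirit as the corresponding step in \cite{LS21}.
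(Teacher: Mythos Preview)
The paper does not prove this lemma; it is imported verbatim from~\cite{LS21}. So there is no in-paper proof to compare against, and your attempt must stand on its own. Unfortunately, the construction you describe has a genuine gap that breaks property~(4) outright.

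You take the segment $\mathcal{S}$ by traversing the long path ``in one direction'' starting from $\varphi_1$; this places $\varphi_1$ at an \emph{endpoint} of $\mathcal{S}$. Consequently $\mx=\mathcal{S}\cup\{\mbe,\varphi_2\}$ is a \emph{path} $\varphi_2,\mbe,\varphi_1,\ldots$, not a branching tree. For a path, the augmented diameter equals the total augmented weight, i.e.\ $\adm(\mx)=\sum_{\varphi\in\mv(\mx)}\omega(\varphi)+\sum_{\mbe'\in\me(\mx)}\omega(\mbe')$. Plugging this into the definition gives
\[
\Delta^+_{i+1}(\mx)=\Bigl(\sum_{\varphi}\omega(\varphi)-\adm(\mx)\Bigr)+\sum_{\mbe'\in\me(\mx)\cap\msttilde_i}\omega(\mbe')=-\omega(\mbe)\le -\tfrac{L_i}{1+\eps}<0,
\]
which not only fails the $\Omega(\eps L_i)$ bound but violates even the nonnegativity required by Item~(1) of \Cref{lm:Clustering}. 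Your own text later asserts that ``attaching $\mbe$ at an interior-like node $\varphi_1$ of $\mathcal{S}$ creates a branching structure'', but this contradicts your one-directional traversal. The fix in~\cite{LS21} is precisely to make $\varphi_1$ (and in fact $\varphi_2$) genuinely interior by taking path pieces on \emph{both} sides, so that one whole branch of length $\Theta(L_i)$ is omitted from the diameter path and contributes the required surplus; this is also why the upper bound in Item~(2) is $5L_i$ rather than the $\sim 2L_i$ your construction would yield.

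Your argument for Item~(5) also does not follow from the construction as written. You claim that removing $\mathcal{S}$ leaves ``a prefix of the original long path of augmented length at most $L_i$'', but since $\varphi_1$ is blue it sits at augmented distance strictly greater than $L_i$ from \emph{both} endpoints, and a length-$L_i$ segment starting at $\varphi_1$ cannot reach either endpoint; both residual sub-paths can therefore still be long and contain blue nodes. Moreover you remove $\varphi_2$ as well (it is in $\mv(\mx)$), splitting \emph{its} long path, and you do not analyze the effect there. The termination sentence (``any remaining blue--blue edge would have been eligible and hence processed'') is not valid either, because you explicitly skip edges whose segments would overlap earlier ones; a skipped edge can perfectly well survive with both endpoints blue in $\Ftilde^{(5)}_i$.
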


Item (5)  of \Cref{lm:Clustering-Step4} guarantees that for any edge with one endpoint in a long path of $\Ftilde^{(5)}_i$, at least one of the endpoints must have red color. $\Ftilde^{(5)}_i$ has the following structure.

\begin{observation}[Observation 5.7~\cite{LS21}]\label{obs:Clustering-F5} Every tree $\Ttilde \subseteq \Ftilde^{(5)}_i$ of augmented diameter at least $6L_i$ is connected via $\msttilde_{i}$ edge to a node in some subgraph $\mx \in \mathbb{X}_1 \cup \mathbb{X}_2\cup \mathbb{X}_4$, unless there is no subgraph formed in Steps 1-4, i.e., $ \mathbb{X}_1 \cup \mathbb{X}_2\cup \mathbb{X}_4 = \emptyset$.
\end{observation}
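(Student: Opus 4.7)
My plan is to trace the tree $\Ttilde \subseteq \Ftilde^{(5)}_i$ back to its parent tree in $\Ftilde^{(4)}_i$ and then apply \Cref{lm:Clustering-Step3}. Since $\Ftilde^{(5)}_i$ is obtained from $\Ftilde^{(4)}_i$ by deleting precisely the nodes contained in subgraphs of $\mathbb{X}_4$, let $\Ttilde'$ be the tree of $\Ftilde^{(4)}_i$ that contains $\Ttilde$. Removing nodes cannot increase augmented diameter, so $\adm(\Ttilde') \ge \adm(\Ttilde) \ge 6L_i$. Item~(2) of \Cref{lm:Clustering-Step3} therefore applies to $\Ttilde'$: it is a path, every node has $\msttilde_i$-degree at most $2$, and (unless $\mathbb{X}_1 \cup \mathbb{X}_2 = \emptyset$) at least one endpoint $\varphi$ of $\Ttilde'$ is joined by an $\msttilde_i$ edge to a node of some subgraph in $\mathbb{X}_1 \cup \mathbb{X}_2$. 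Because $\Ttilde$ is a connected subgraph of the path $\Ttilde'$, it is itself a subpath of $\Ttilde'$.

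I would then split into two cases. If $\Ttilde = \Ttilde'$, no node of $\Ttilde'$ was deleted in Step~4, and I simply invoke Item~(2b) of \Cref{lm:Clustering-Step3} to obtain an $\msttilde_i$ edge from an endpoint of $\Ttilde$ to a node of $\mathbb{X}_1 \cup \mathbb{X}_2$. Otherwise $\Ttilde$ is a strict subpath of $\Ttilde'$, so at least one endpoint $\psi$ of $\Ttilde$ must have a neighbor $\varphi^{\star}$ in $\Ttilde'$ that was removed while forming $\Ftilde^{(5)}_i$. Every removed node belongs to some subgraph $\mx \in \mathbb{X}_4$, and the edge $(\psi,\varphi^{\star})$ lies in $\Ftilde^{(4)}_i \subseteq \msttilde_i$, so this exhibits the required $\msttilde_i$ edge from $\Ttilde$ to a node of $\mathbb{X}_4$.

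Finally I need to dispose of the degenerate possibility that Item~(2b) does not apply because $\mathbb{X}_1 \cup \mathbb{X}_2 = \emptyset$. In that situation Steps~1--3 perform no removals, so $\Ftilde^{(4)}_i = \msttilde_i$ is a single tree and $\Ttilde' = \msttilde_i$. If moreover $\Ttilde = \Ttilde'$, then Step~4 deleted no node of $\Ttilde'$, which forces $\mathbb{X}_4 = \emptyset$; combined with $\mathbb{X}_1 \cup \mathbb{X}_2 = \emptyset$ this is exactly the ``unless $\mathbb{X}_1 \cup \mathbb{X}_2 \cup \mathbb{X}_4 = \emptyset$'' escape clause of the observation. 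If instead $\Ttilde \subsetneq \Ttilde'$, the second case above applies verbatim to produce a connection to $\mathbb{X}_4$. The only mild subtlety I foresee is the boundary $\adm(\Ttilde') = 6L_i$ in \Cref{lm:Clustering-Step3}; since the lemma is a disjunction, one needs to invoke Item~(2) by interpreting the threshold inclusively (or otherwise check that the path conclusion is robust at equality), and this is the only piece of the argument that requires care beyond the routine monotonicity and case analysis.
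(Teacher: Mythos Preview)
Your argument is correct and is exactly the natural one: trace $\Ttilde$ back to its parent tree $\Ttilde'$ in $\Ftilde^{(4)}_i$, invoke Item~(2b) of \Cref{lm:Clustering-Step3} when $\Ttilde=\Ttilde'$, and otherwise use the $\msttilde_i$-edge from an endpoint of $\Ttilde$ to a deleted node in $\mathbb{X}_4$. The paper does not supply its own proof of this observation---it is quoted verbatim from~\cite{LS21}---so there is nothing to compare against beyond noting that your reconstruction matches the intended reasoning. Your handling of the degenerate case $\mathbb{X}_1\cup\mathbb{X}_2=\emptyset$ is also right: one small point you glossed is that Step~3 could in principle remove nodes of $\mathcal{A}$ even when $\mathbb{X}_1\cup\mathbb{X}_2=\emptyset$, but in that situation $\Ftilde^{(3)}_i=\msttilde_i$ is either a path (so has no branching nodes) or has augmented diameter at most $6L_i$ (so contains no ``long path''), hence $\mathcal{A}=\emptyset$ either way and $\Ftilde^{(4)}_i=\msttilde_i$ as you claim. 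The boundary issue at $\adm=6L_i$ that you flag is a genuine sloppiness in the thresholds used throughout the construction, not a defect of your proof.
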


We observe that any tree $\Ttilde \subseteq \Ftilde^{(5)}_i$  of diameter at least $6L_i$ must be a path, and that, by Item (2a) in \Cref{lm:Clustering-Step3}, only endpoints of $\Ttilde$ could have an edge in $\msttilde_{i}$ to a node outside $\Ttilde$. We call such an endpoint a \emph{connecting endpoint} of $\Ttilde$. Note that $\Ttilde$ could have up to two connecting endpoints. 

Step 5 has two smaller steps. In Step 5A, we augment trees of $\Ftilde^{(5)}_i$ of low augmented diameter to existing subgraphs. In Step 5B, we form new subgraphs from long paths, and augment the prefix/suffix to an existing subgraph in previous steps.

\paragraph{Step 5.~}  Let $\Ttilde$ be  a path in  $\Ftilde^{(5)}_i$ obtained by Item (5) of \Cref{lm:Clustering-Step4}. We construct two sets of subgraphs, denoted by $\mathbb{X}^{\internal}_5$ and $\mathbb{X}^{\prefix}_5$.
\begin{itemize}
	\item (Step 5A)\hypertarget{5A}{}  If $\Ttilde$ has augmented diameter at most $6L_i$, let $\mbe$ be an $\widetilde{\mst}_i$ edge connecting $\Ttilde$  and a node in some subgraph $\mx \in \mathbb{X}_1\cup \mathbb{X}_2 \cup \mathbb{X}_4$, assuming that $\mathbb{X}_1\cup \mathbb{X}_2 \cup \mathbb{X}_4 \not= \emptyset$. We add both $\mbe$ and $\Ttilde$ to $\mx$.
	\item (Step 5B)\hypertarget{5B}{} 	Otherwise, $\Ttilde$  is a path. We break $\Ttilde$ into subpaths of augmented diameter at least $L_i$ and at most $7L_i$ by applying the construction in \Cref{lm:Clustering-Step5B} below. For any subpath $\Ptilde$ broken from $\Ttilde$, if $\Ptilde$ is connected to a node in a subgraph $\mx$ via an  edge $\mbe\in \msttilde_{i}$, we add $\Ptilde$ and $\mbe$ to $\mx$; 	otherwise,  $\Ptilde$ becomes a new subgraph.  We add $\Ptilde$ to $\mathbb{X}^{\prefix}_5$ if it is a prefix/suffix of $\Ttilde$; otherwise, we add $\Ptilde$ to $\mathbb{X}^{\internal}_5$. 
\end{itemize}

\begin{figure}[!h]
	\begin{center}
		\includegraphics[width=1.0\textwidth]{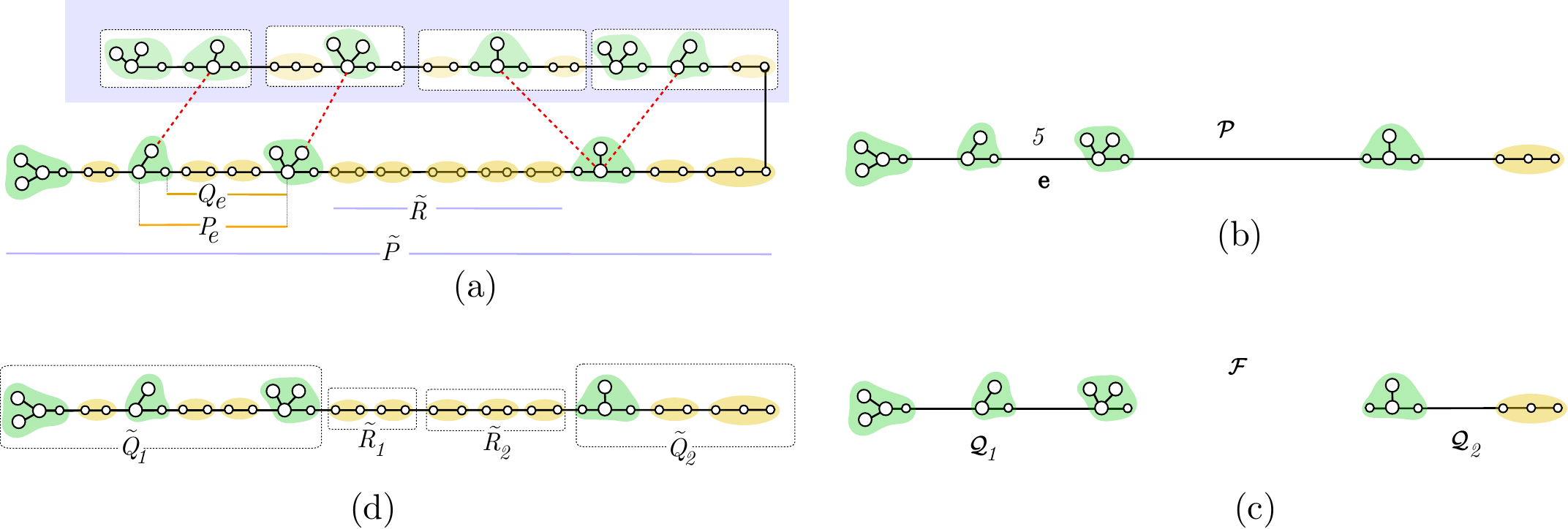}
	\end{center}
	\caption{An illustration for the proof of \Cref{lm:Clustering-Step5B}. Small circles are virtual vertices; black (solid) edges are $\msttilde$ edges and red (dashed) edges are edges in $\me_i$. (a) Non-isolated nodes in $\Ptilde$ are those incident to red edges.   Nodes grouped in previous steps are in the blue-shaded region. The path $Q_e$ corresponds to an $\mbe$ in $\mathcal{P}$ is highlighted. $Q_e$ is a subpath of the parent path $P_e$ of the virtual clusters in the construction of $\mbe$. (b) The path $\mathcal{P}$ obtained from $\Ptilde$ in figure (a) by the construction in the proof  of \Cref{lm:Clustering-Step5B}; the only virtual node in $\mathcal{P}$ is the (connecting) endpoint of $\mathcal{P}$. Suppose that every edge in $Q_e$ in figure (a) has weight $1$, then $\mbe$ has weight $5$ since $Q_e$ has 5 edges. In general, $\omega(\mbe) = w(Q_e)$. (c) Forest $\mathcal{F}$ obtained from $\mathcal{P}$ by removing every edge of weight at least $2L_i$. $\mathbb{A}$ in this case includes two paths $\mathcal{Q}_1$ and $\mathcal{Q}_2$. (d) Two paths $\tilde{Q}_1$ and $\tilde{Q}_2$ in $\mathbb{P}$ constructed from $\mathcal{Q}_1$ and $\mathcal{Q}_2$, respectively. Two other paths $\tilde{R}_1$ and $\tilde{R}_2$ are broken from the path $\tilde{R}$ in (a).}
	\label{fig:Step5B}
\end{figure}

\begin{restatable}{lemma}{ClusteringLastStep}
	\label{lm:Clustering-Step5B}  Let  $\Ptilde$  be a path of augmented diameter at least $6L_i$  in  $\Ftilde^{(5)}_i$.  We can break $\Ptilde$ into a collection of paths $\mathbb{P}$ such that each path $\Ptilde' \in \mathbb{P}$ has two properties:
	\begin{enumerate}[noitemsep]
		\item[(1)] $L_i \leq \adm(\Ptilde')\leq 7L_i$.
		\item[(2)] If $\Ptilde'$ contains a non-isolated node, then it contains at least two non-virtual nodes, or a connecting endpoint of $\Ptilde$.
	\end{enumerate}
	The running time of the construction is $O(|\mv(\Ptilde)|)$. 
\end{restatable}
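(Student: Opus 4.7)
The plan is to follow the four panels of Figure~\ref{fig:Step5B}. First I would build an auxiliary path $\mathcal{P}$ from $\Ptilde$ by keeping only the \emph{anchors}---the non-virtual nodes of $\Ptilde$ together with the (at most two) connecting endpoint(s) of $\Ptilde$---retaining their node weights. For each maximal subpath $Q_{\varphi_1\varphi_2}$ of $\Ptilde$ strictly between two consecutive anchors $\varphi_1,\varphi_2$ (all internal nodes of which are virtual), I would place a single edge between $\varphi_1$ and $\varphi_2$ in $\mathcal{P}$ whose weight equals the augmented weight of $Q_{\varphi_1\varphi_2}$ minus the endpoint node weights; hence $\adm(\mathcal{P}) = \adm(\Ptilde) \geq 6 L_i$. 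I would then form a forest $\mathcal{F}$ by deleting from $\mathcal{P}$ every edge of weight at least $2L_i$ and let $\mathbb{A}$ denote the resulting components. Call each such deleted edge a \emph{heavy bridge} (referring ambiguously to the edge and to the corresponding subpath of $\Ptilde$).

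Next I would greedily partition each component $\mathcal{Q}\in \mathbb{A}$ into subpaths $\tilde Q_1,\ldots,\tilde Q_t$, insisting that every piece either (a) contains at least two anchors and has augmented weight in $[L_i,7L_i]$, or (b) contains a connecting endpoint of $\Ptilde$. This is feasible because every edge of $\mathcal{F}$ has weight $<2L_i$ and every anchor has node weight $O(\eps L_i)$ by \hyperlink{P3L}{property (P3)}, so any two consecutive anchors contribute augmented weight at most $3L_i$, which means we can always extend a short or single-anchor piece by one additional anchor without exceeding $7L_i$. In parallel, each heavy bridge $\tilde R$ is itself a subpath of $\Ptilde$ which is entirely virtual except possibly at a connecting endpoint; applying Lemma~\ref{lm:cycle-Prop} to any $\me_i$ edge spanning the two anchors of $\tilde R$ bounds every $\msttilde_i$ edge along $\tilde R$ by $L_i$, so a greedy scan partitions $\tilde R$ into subpaths $\tilde R_1,\ldots,\tilde R_s$ of augmented weight in $[L_i,7L_i]$. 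The collection $\mathbb{P}$ is obtained by lifting each $\tilde Q_j$ back to $\Ptilde$ (replacing every anchor-to-anchor edge by its underlying bridge in $\Ptilde$) and taking the union with all the $\tilde R_j$'s.

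Property~(1) is immediate by construction. For Property~(2), any $\tilde R_j$ is virtual except possibly at a connecting endpoint of $\Ptilde$, so its only potential non-isolated node is such an endpoint; while any lift of $\tilde Q_j$ either contains at least two non-virtual anchors by case (a), or contains a connecting endpoint by case (b). Every step is a single linear pass, so the total running time is $O(|\mv(\Ptilde)|)$.

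The main obstacle I expect is precisely the greedy partition of components of $\mathcal{F}$: a naive weight-based partition can isolate a single non-isolated anchor between two bridges of weight close to $2L_i$, violating Property~(2). The workaround of deferring any would-be partition boundary that isolates a single non-virtual anchor (without a connecting endpoint) works quantitatively because Lemma~\ref{lm:cycle-Prop}, together with the anchor-weight bound $O(\eps L_i)$, guarantees that extending by one more bridge-plus-anchor adds at most $2L_i + O(\eps L_i) \leq 3L_i$ to the augmented weight, which keeps the target window $[L_i,7L_i]$ always attainable.
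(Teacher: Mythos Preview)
Your overall plan matches the paper's: build the anchor path $\mathcal{P}$, delete the heavy ($\geq 2L_i$) edges to get $\mathcal{F}$, partition the components of $\mathcal{F}$ into multi-anchor pieces, and handle the heavy bridges separately. But your deployment of \Cref{lm:cycle-Prop} is off in both directions, and this creates a genuine gap.

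First, the invocation for heavy bridges is both unnecessary and unjustified. Every edge of $\msttilde_i$ corresponds to an edge of $\msttilde$ and hence has weight at most $\bar w \leq \eps L_i$, and every node weight is at most $g\eps L_i$; so breaking any subpath of $\Ptilde$ into pieces of augmented diameter in $[L_i,2L_i]$ is immediate once the subpath has augmented diameter $\geq L_i$ --- no appeal to \Cref{lm:cycle-Prop} is needed. Moreover, there is no reason an $\me_i$ edge should ``span the two anchors of $\tilde R$'', so the hypothesis of the lemma may simply fail there.

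Second, and this is the gap, you never use \Cref{lm:cycle-Prop} for its essential purpose: to rule out the case where a single \emph{non-isolated, non-connecting} anchor is a \emph{singleton component} of $\mathcal{F}$. Such a singleton satisfies neither your (a) nor your (b), and if it is absorbed into an adjacent heavy-bridge piece then that piece has exactly one non-virtual node and Property~(2) fails. Your ``deferral'' workaround only handles boundaries \emph{within} a component; it cannot help if the component itself is a single node. The correct argument (the paper's) is: if $\varphi$ is non-isolated then some $\mbe'\in\me_i$ is incident to $\varphi$, the fundamental cycle of $\msttilde_i$ through $\mbe'$ passes through one of the $\mathcal{P}$-edges at $\varphi$, and by \Cref{lm:cycle-Prop} the parent edge of every virtual node on that cycle has weight $\leq \omega(\mbe')\leq L_i$, so that $\mathcal{P}$-edge has weight $\leq L_i<2L_i$ and survives in $\mathcal{F}$. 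Hence a non-isolated, non-connecting anchor is never a singleton, and singleton components of $\mathcal{F}$ can safely be merged into adjacent heavy-bridge pieces without endangering Property~(2).

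A secondary remark: the paper partitions each $\mathcal{F}$-component by \emph{edge count} (pieces of two or three $\mathcal{P}$-edges), not by weight. This automatically yields $\geq 2$ anchors per piece, and the upper bound $3\cdot 2L_i + 4\cdot g\eps L_i \leq 7L_i$ follows at once --- so the ``main obstacle'' you anticipate never arises.
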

\begin{proof}
	Recall that by Item (2a) in \Cref{lm:Clustering-Step3}, every node in $\Ptilde$ has degree 2 in $\msttilde_{i}$. This means, if an endpoint of $\Ptilde$ is non-connecting, then it is a non-virtual node. Recall by the definition of a virtual node $\varphi_C$, its corresponding cluster $C$ is virtual, and hence, structurally, $C$ induced a subpath of the parent path $P_e$.
	
	We construct path graph $\mathcal{P}$ from $\Ptilde$ that contains non-virtual nodes and the endpoints of $\Ptilde$ as follows.	Each edge $\mbe = (\varphi,\varphi') \in \mathcal{P}$ corresponds to a path between $\varphi$ and  $\varphi'$ in $\Ttilde$ whose internal nodes are virtual. Note that all virtual nodes on the path between $\varphi$ and  $\varphi'$ in $\Ttilde$ share the same parent path $P_e$. Let $Q_e$ be the minimal subpath of $P_e$ whose endpoints are in the clusters corresponding to $\varphi$ and $\varphi'$. We then assign a weight $\omega(\mbe) = w(Q_e)$. Observe that $\omega(\mbe)\leq w(P_e) = w(e)$ where $e$ is the $\mst$ edge from which $P_e$ is subdivided. See \Cref{fig:Step5B}(a) and (b) for an illustration.
	
	Note by Item (2a) of \Cref{lm:Clustering-Step2}, every node in $\Ptilde$ has degree at most $2$ in $\msttilde_{i}$. If $\varphi$ is a non-isolated node in $\Ptilde$, then it is incident to an edge, say $\mbe'$, in $\me_i$ by definition. One of the incident edges of $\varphi$ is part of the fundamental cycle of $\msttilde_{i}$ formed by $\mbe'$.  It follows from \Cref{lm:cycle-Prop} that at least one edge in $\mathcal{P}$ of $\varphi$ must have a weight at most $L_i$.

	Let $\mathcal{F}$ be the forest induced by edges of weight at most $2L_i$ in $\mathcal{P}$. We further remove singletons from $\mathcal{F}$. Observe that a singleton in $\mathcal{F}$ is either a connecting endpoint of $\mathcal{P}$, or an isolated node.  We then greedily break each path in $\mathcal{F}$ that contains at least three edges into subpaths of at least two edges and at most three edges each. As a result, we obtain a collection $\mathbb{A}$ of subpaths of $\mathcal{P}$ that contain at least two nodes each. See \Cref{fig:Step5B}(c). 
	
	We now construct $\mathbb{P}$ as follows.  (Step 1) For each path $\mathcal{Q}\in \mathbb{A}$, we construct the corresponding subpath $\Qtilde$ of $\Ptilde$ by replacing each edge in $\mathcal{Q}$ by the corresponding subpath in $\Ptilde$. We then add $\Qtilde$ to $\mathbb{P}$.  (Step 2) After  Step 1, remaining nodes in $\Ptilde$ that are not grouped to a path in $\mathbb{P}$ induces a collection of subpaths, say $\mathbb{Q}$, of $\Ptilde$. Observe by the construction of $\mathcal{F}$ that,  each subpath in the collection $\mathbb{Q}$ corresponds to a subpath of $\mathcal{P}$, which only contains virtual nodes and isolated nodes, that has at least one edge  of weight at least $2L_i$. Now for each path $\tilde{R} \in \mathbb{Q}$, observe that $\adm(\tilde{R})\geq 2L_i  - 2\bar{w} - 2g\eps L_i \geq 2L_i - 4g\eps L_i\geq L_i$ when $\eps \leq 1/2g$. The negative term $ - 2\bar{w} - 2g\eps L_i$ is due to that the two nodes neighboring the endpoints of $\tilde{R}$ are grouped to subpaths in $\mathbb{P}$. We then break $\tilde{R}$ into subpaths of augmented diameter at least $L_i$ and at most $2L_i$ and add them to $\mathbb{P}$. This completes the construction of $\mathbb{P}$. See Figure~\ref{fig:Step5B}(d) for an illustration.
	
	The running time follows directly from the construction. To bound the augmented diameter of paths in $\mathbb{P}$, we observe that  path $\Qtilde$ in Step 1 has augmented diameter at most $3(2L_i) + 4\eps g L_i \leq 7L_i$ when $\eps \leq 1/4g$. The additive term $4\eps g L_i$ is due to (at most) four endpoints of (at most) three edges in $\Qtilde$. Thus, every path in $\mathbb{P}$ has an augmented diameter of at most $\max\{7L_i,2L_i\} = 7L_i$. The lower bound $L_i$ follows directly from the construction; this implies Item (1). Item (2) follows from the construction of $\mathbb{A}$. \qed
\end{proof}

We note that in Step 5B in LS algorithm,  $\Ttilde$ is broken into subpaths of augmented length at least $L_i$ and at most $2Li$ instead of at least $L_i$ and at most $7L_i$ as in our construction.  The increase in the augmented diameter ultimately affects the choice of $g$.  Other properties of subgraphs in  $ \mathbb{X}_5^{\internal}$ and $\mathbb{X}_5^{\prefix}$  remains the same.

\begin{lemma}[Lemma 5.8~\cite{LS21}]\label{lm:Clustering-Step5} We can implement the construction of $ \mathbb{X}_5^{\internal}$ and $\mathbb{X}_5^{\prefix}$ in $O(|\mv_i|)$ time.	Furthermore, every subgraph $\mx \in \mathbb{X}_5^{\internal} \cup \mathbb{X}_5^{\prefix}$ satisfies:
	\begin{enumerate}[noitemsep]
		\item[(1)] $\mx$ is a subpath of $\msttilde_{i}$.
		\item[(2)] $L_i \leq \adm(\mx)\leq 7 L_i$.
		\item[(3)] $|\mv(\mx)| = \Theta(\frac{1}{\epsilon})$.
	\end{enumerate}
\end{lemma}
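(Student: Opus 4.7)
My plan is to follow the structural outline of the proof of Lemma 5.8 in~\cite{LS21}, with the minor quantitative changes coming from our replacement breaking procedure \Cref{lm:Clustering-Step5B}. First I would handle the runtime: Step 5A attaches each short tree of $\Ftilde^{(5)}_i$ (those of augmented diameter at most $6L_i$) to a previously built subgraph through a single $\msttilde_i$-edge identified by \Cref{obs:Clustering-F5}, which costs time proportional to the tree's size. Step 5B invokes \Cref{lm:Clustering-Step5B} on each long path $\Ttilde$ in $\Ftilde^{(5)}_i$ in $O(|\mv(\Ttilde)|)$ time, and then for each resulting subpath $\Ptilde'$ either attaches it to an existing subgraph (via the $\msttilde_i$-edge guaranteed by Item (2b) of \Cref{lm:Clustering-Step3} and \Cref{obs:Clustering-F5}) or tags it as a new member of $\mathbb{X}^{\internal}_5$ or $\mathbb{X}^{\prefix}_5$ depending on whether it is internal or a prefix/suffix of $\Ttilde$. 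Since the trees and long paths of $\Ftilde^{(5)}_i$ are node-disjoint, the total work sums to $O(|\mv_i|)$.

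Next I would establish Property (1), that each $\mx \in \mathbb{X}^{\internal}_5 \cup \mathbb{X}^{\prefix}_5$ is a subpath of $\msttilde_i$. By Item (2a) of \Cref{lm:Clustering-Step3}, every node in a long tree of $\Ftilde^{(4)}_i$, and hence of $\Ftilde^{(5)}_i$, has $\msttilde_i$-degree at most $2$, so every such long tree is itself a subpath of $\msttilde_i$. The breaking in \Cref{lm:Clustering-Step5B} only cuts at internal nodes of these subpaths, so all resulting pieces remain subpaths of $\msttilde_i$. Property (2) then follows immediately from Item (1) of \Cref{lm:Clustering-Step5B}.

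The crux, and where I expect the main obstacle, is Property (3): proving $|\mv(\mx)| = \Theta(1/\eps)$. The lower bound $|\mv(\mx)| = \Omega(1/\eps)$ is a short counting argument: each edge of $\mx$ is an $\msttilde_i$-edge of weight at most $\bar{w} \leq \eps L_i$ (using $L_0 \geq \bar{w}$ and $L_i = L_0/\eps^i$ with $i \geq 1$), and each node $\varphi_C \in \mv(\mx)$ has weight $\Phi(C) \leq gL_{i-1} = g\eps L_i$ by property \hyperlink{P3L}{(3)}; combining with $\adm(\mx) \geq L_i$ yields $|\mv(\mx)| \geq 1/((g+1)\eps)$. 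The subtle part is the upper bound $O(1/\eps)$: node weights can be $0$ (e.g.\ a singleton virtual cluster), so the augmented-diameter bound $\adm(\mx) \leq 7L_i$ does not on its own control the node count. To deal with this I would exploit the specific structure of the subpaths produced by \Cref{lm:Clustering-Step5B}. In its Step 1, each subpath $\Qtilde$ spans at most three edges of the contracted path graph $\mathcal{P}$, each corresponding to a subpath of a parent path $P_e$ of augmented weight at most $2L_i$; within a single parent path the subdivision places virtual vertices at spacing $\Theta(\bar{w}) = \Theta(\eps L_i)$, so at most $O(1/\eps)$ virtual nodes fit, yielding $O(1/\eps)$ nodes total in $\Qtilde$. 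In its Step 2, each fragment of $\tilde{R}$ is a subpath of a single parent path with augmented diameter at most $2L_i$, and the same spacing argument applies. Combining these cases and noting that Step 5B only places such $\Ptilde'$ into $\mathbb{X}^{\internal}_5 \cup \mathbb{X}^{\prefix}_5$ yields $|\mv(\mx)| = O(1/\eps)$, completing the proof.
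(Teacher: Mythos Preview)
Your argument for the running time and for Properties (1) and (2) is fine and matches what one extracts from the construction together with \Cref{lm:Clustering-Step5B} and \Cref{lm:Clustering-Step3}.

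The gap is in your treatment of the upper bound in Property (3). Your premise that ``node weights can be $0$ (e.g.\ a singleton virtual cluster)'' is false in this setting: by \Cref{lm:level1Const} every level-$1$ cluster has $\Phi(C)=\dm(\msttilde[C])\ge L_0$, and for $i\ge 2$ every level-$i$ cluster $C$ arises from a subgraph satisfying property~(\hyperlink{P3'L}{P3'}), so $\omega(\varphi_C)=\Phi(C)\ge L_{i-1}=\eps L_i$. Thus every node of $\mv_i$ has weight at least $\eps L_i$, and since $\mx$ is a path the augmented diameter is the full sum of node and edge weights along it; hence $|\mv(\mx)|\le \adm(\mx)/(\eps L_i)\le 7/\eps$, which is the intended one-line argument. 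Your detour through the internal structure of \Cref{lm:Clustering-Step5B} is unnecessary and, as written, incorrect: you assert $\bar w=\Theta(\eps L_i)$, but only $\bar w\le L_{i-1}=\eps L_i$ holds, and $\bar w$ can be arbitrarily smaller than $\eps L_i$ as $i$ grows. So the ``spacing $\Theta(\bar w)$'' count bounds the number of virtual \emph{vertices} of $\msttilde$ in a segment, not the number of level-$i$ \emph{nodes} in $\mv_i$, and does not yield the $O(1/\eps)$ bound you claim. Replace that paragraph with the direct node-weight lower bound and the proof goes through.
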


We note that the degenerate case in the above construction happens when  $\mathbb{X}_1 \cup \mathbb{X}_2\cup \mathbb{X}_4 = \emptyset$. When the degenerate case happens, $\Ftilde^{(5)}_i$ has the following structure. 

\begin{lemma}[Lemma 5.10~\cite{LS21}]\label{lm:exception}
	If  $\mathbb{X}_1\cup \mathbb{X}_2\cup \mathbb{X}_4 = \emptyset$, then
	$\Ftilde^{(5)}_i =  \msttilde_{i}$, and $\msttilde_{i}$  is a single (long) path.   Moreover, every edge $\mbe \in \me_i$ must be incident to a  node in $\Ptilde_1\cup \Ptilde_2$,
	where $\Ptilde_1$ and $\Ptilde_2$ are the prefix and suffix subpaths of $\msttilde_{i}$ of augmented diameter at most $L_i$. Furthermore, $|\me_i| = O(1/\epsilon^2)$.
\end{lemma}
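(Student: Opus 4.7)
The plan is to propagate the three emptiness hypotheses through Steps~1--4 in order, then invoke the Red/Blue coloring from Step~4 to localize the edges of $\me_i$ near the two endpoints of $\msttilde_i$, and finally combine a per-node potential lower bound with a per-node degree upper bound to count $|\me_i|$.

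First I would show $\Ftilde^{(5)}_i = \msttilde_i$. Since $\mathbb{X}_1 = \emptyset$, \Cref{lm:Clustering-Step1} forces $\mv_i^{\high+} = \emptyset$, and therefore $\Ftilde^{(2)}_i = \msttilde_i$. The assumption $\mathbb{X}_2 = \emptyset$ then implies that Step~2 packs no nodes, and moreover the augmented set $\mz$ from \Cref{lm:Clustering-Step2} must also be empty, since Item~(5) of that lemma would otherwise require $\mz$ to be absorbed into subgraphs of $\mathbb{X}_1 = \emptyset$, a contradiction. Hence $\Ftilde^{(3)}_i = \msttilde_i$. Step~3 augments only to subgraphs of $\mathbb{X}_1 \cup \mathbb{X}_2 = \emptyset$ and is vacuous, and $\mathbb{X}_4 = \emptyset$ gives $\Ftilde^{(5)}_i = \msttilde_i$. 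To conclude that $\msttilde_i$ is a single path I would combine its connectivity (as a spanning tree of $\mg_i$) with \Cref{lm:Clustering-Step2}: any remaining branching node together with Item~(1) of that lemma would have been absorbed into a subgraph of $\mathbb{X}_2$, contradicting $\mathbb{X}_2 = \emptyset$; hence $\msttilde_i$ contains no branching node and is a single (long) path.

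Next I would apply the Red/Blue Coloring to $\msttilde_i$ viewed as a long path of $\Ftilde^{(5)}_i$. By Item~(5) of \Cref{lm:Clustering-Step4}, no edge of $\me_i$ joins two blue nodes. Because the red nodes are precisely those within augmented distance $L_i$ from one of the two endpoints of $\msttilde_i$, they lie exactly in $\Ptilde_1 \cup \Ptilde_2$, so every $\mbe \in \me_i$ must be incident to such a node.

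Finally, for the bound $|\me_i| = O(1/\eps^2)$, I would combine two estimates. Inductively across levels, each node satisfies $\omega(\varphi_C) = \Phi(C) \ge L_{i-1} = \eps L_i$ (base case \Cref{lm:level1Const}, inductive step follows from property~(P3$'$) together with the definition $\Phi(C_{\mx}) = \adm(\mx)$), so any subpath of augmented diameter at most $L_i$ contains $O(1/\eps)$ nodes; hence $|\mv(\Ptilde_1)|, |\mv(\Ptilde_2)| = O(1/\eps)$. Since $\mathbb{X}_1 = \emptyset$ implies $\mv_i^{\high} = \emptyset$, every node has degree at most $2g/\eps = O(1/\eps)$ in $\me_i$, and multiplying the two estimates yields $|\me_i| = O(1/\eps^2)$. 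The main step requiring care is verifying that the $L_{i-1}$ lower bound on node weights propagates correctly across all levels of the hierarchy; once that invariant is in hand, the remaining counting is straightforward bookkeeping.
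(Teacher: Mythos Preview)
The paper does not prove this lemma; it is quoted verbatim from \cite{LS21} (as Lemma~5.10 there) and used as a black box. So there is no in-paper proof to compare against, and your proposal is effectively a reconstruction of the cited argument.

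Your reconstruction is correct and follows the natural line: trace the emptiness hypotheses through Steps~1--4 to get $\Ftilde^{(5)}_i=\msttilde_i$, use Item~(4) of \Cref{lm:Clustering-Step2} to rule out branching nodes, invoke Item~(5) of \Cref{lm:Clustering-Step4} to pin edges of $\me_i$ to the red prefix/suffix, and finish with the node-weight lower bound $\omega(\varphi_C)\ge L_{i-1}=\eps L_i$ plus the degree cap $<2g/\eps$.

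Two small clean-ups. First, to show $\msttilde_i$ is a path you should cite Item~(4) of \Cref{lm:Clustering-Step2} directly (every tree of $\Ftilde^{(3)}_i$ is either of augmented diameter $\le 6L_i$ or a path) rather than Item~(1); Item~(1) only says each $\mx\in\mathbb{X}_2$ \emph{contains} a branching node, not that every branching node is absorbed. Second, you silently assume $\msttilde_i$ is long; if instead $\adm(\msttilde_i)\le 6L_i$, the Red/Blue argument does not apply, but the conclusion $|\me_i|=O(1/\eps^2)$ still follows immediately since then $|\mv_i|=O(1/\eps)$ and every node has degree $O(1/\eps)$. It is worth saying this explicitly so the case split is complete.
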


We are now ready to prove \Cref{lm:additional-prop}.

\paragraph{Proof of \Cref{lm:additional-prop}.~} The degree-specific partition $\mathbb{V}$ of $\mv_i$ and the partition of $\mathbb{X}$ conforming $\mathbb{V}$ are constructed as follows. If the degenerate case happens, then  $\mv^{\lowm}_i = \mv_i$ (and hence $\mv^{\high}_i = \mv^{\lowp}_i = \emptyset$). In this case, $\mathbb{X}^{\lowm} =  \mathbb{X}^{\internal}_5\cup \mathbb{X}^{\prefix}_5$, while $\mathbb{X}^{\high} = \mathbb{X}^{\lowp} = \emptyset$.  Otherwise, 
$\mv^{\high}_i$ to be the set of all nodes that are incident to at least $2g/\eps$ edges in $\me_i$ in \Cref{lm:Clustering-Step1},  $\mv^{\lowm}_i = \cup_{\mx \in \mathbb{X}^{\internal}_5}\mv(\mx)$ and $\mv^{\lowp}_i = \mv_i\setminus (\mv^{\high}_i \cup \mv^{\lowm}_i )$. The partition of $\mathbb{X}$ is $\{\mathbb{X}^{\high} =\mathbb{X}_1$, $\mathbb{X}^{\lowp} = \mathbb{X}_2\cup \mathbb{X}_4 \cup \mathbb{X}^{\prefix}_5$,$\mathbb{X}^{\lowm} = \mathbb{X}^{\internal}_5\}$.

We note that Items (1) and (2) in \Cref{lm:Clustering} hold by the same proof in~\cite{LS21}. For Item (3), subgraphs in $\mathbb{X}$ satisfy all properties (\hyperlink{P1'L}{P1'})-(\hyperlink{P3'L}{P3'}) with constant $g = 42$ instead of $31$ since the construction of Step 2 in \Cref{lm:Clustering-Step2} increases the augmented diameter of subgraphs in $\mathbb{X}_1$ by $11L_i$ (on top of the upper bound $31L_i$). We remark that the augmented diameter of other subgraphs is smaller than the augmented diameters of subgraphs in $\mathbb{X}_1$, and hence, the increased diameter due to our construction does not affect $g$. The fact that $|\me(\mx)\cap \me_i| = O(|\ma_{\mx}|)$ where $\ma_{\mx}$ is the set of nodes in $\mx$ incident to an edge in $\me(\mx)\cap \me_i$ follows from that $\mx$ is a tree for all cases, except in Step 4 (\Cref{lm:Clustering-Step4}). However, in this case, $\mx$ has a single edge in $\me_i$, and hence $|\me(\mx)\cap \me_i| \leq 1 =  O(|\ma_{\mx}|)$. 

It remains to show the reduction in the number of non-virtual clusters as claimed in \Cref{lm:additional-prop}. All we need to show is that for every subgraph $\mx$ that contains a non-isolated node, it contains at least two non-virtual nodes. That is, $\mx$ is good. This holds for subgraphs in $\mathbb{X}_1\cup \mathbb{X}_4$, since every subgraph in this set contains at least one edge in $\me_i$, whose endpoints are non-isolated by the definition of a non-isolated node.  Every subgraph in $\mathbb{X}_2$ is good by Item (1) in \Cref{lm:Clustering-Step2}. Observe that each subgraph  $\mx \in \mathbb{X}^{\internal}_5\cup \mathbb{X}^{\prefix}_5$ corresponds to a subpath of $\Ttilde$ in Step 5B that does not contain the connecting endpoint. By Item (2) in \Cref{lm:Clustering-Step5B}, $\mx$ contains at least two non-virtual nodes, if it contains at least one non-isolated node, and hence $\mx$ is good.  \Cref{lm:additional-prop} now follows. \qed

\subsubsection{Proof of \Cref{lm:Clustering-Step2}}\label{subsec:proof}
In this section, we provide the proofs of \Cref{lm:Clustering-Step2}, which we restate below.

\begin{figure}[h]
	\begin{center}
		\includegraphics[width=0.8\textwidth]{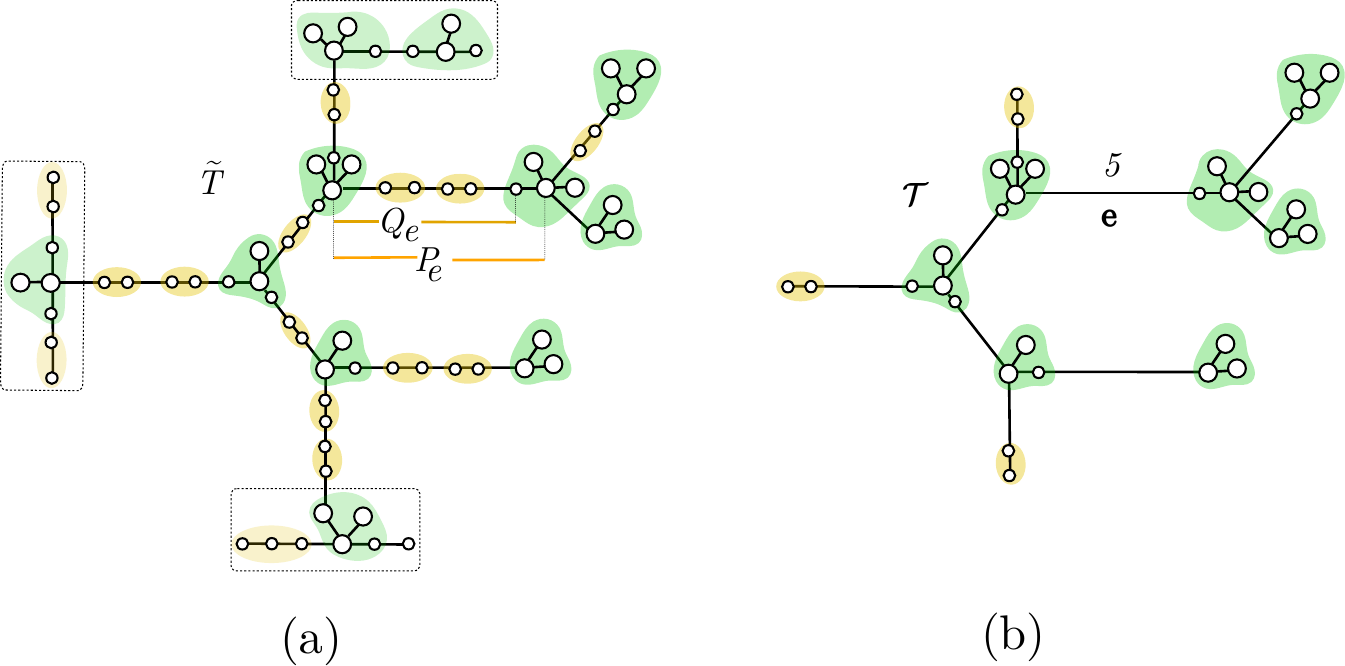}
	\end{center}
	\caption{Virtual clusters are yellow shaded and non-virtual clusters are green shaded. Virtual vertices are small circles. (a) A tree $\Ttilde$ considered in the construction of Step 2 (\Cref{lm:Clustering-Step2}). (b) The tree $\mathcal{T}$ constructed from non-virtual nodes and connecting nodes of $\Ttilde$ in the proof of \Cref{lm:Clustering-Step2}. If every edge in the path $Q_e$ has weight $1$ as in figure (a), then the weight of $\mbe$ in figure (b) is $5$. In general, $\omega(\mbe)  = w(Q_e)$. Every virtual node in $\mathcal{T}$ is a connecting node. Subgraphs in the rectangular dashed curves are subgraphs formed in previous steps.}
	\label{fig:Step2}
\end{figure}

\ClusteringStepTwo*
\begin{proof}
Let $\Ttilde$ be a tree of augmented diameter at least $6L_i$ in $\Ftilde^{(2)}$. We say that a node $\varphi \in \Ttilde$ is a connecting node if it has an $\mst$ edge to a subgraph $\mx \in \mathbb{X}_1$.

We now construct a tree $\mathcal{T}$ in the same way we construct a path $\mathcal{P}$ in \Cref{lm:Clustering-Step5B}. $\mathcal{T}$ is a tree that contains non-virtual nodes and connecting nodes of $\Ttilde$, which may or may not be virtual. Note that branching nodes of $\Ttilde$ are non-virtual. Each edge $\mbe = (\varphi,\varphi') \in \mathcal{T}$ corresponds to a path between $\varphi$ and  $\varphi'$ in $\Ttilde$ whose internal nodes are virtual. Note that all virtual nodes on the path between $\varphi$ and  $\varphi'$ in $\Ttilde$ share the same parent path $P_e$. Let $Q_e$ be the minimal subpath of $P_e$ whose endpoints are in the clusters corresponding to $\varphi$ and $\varphi'$. We then assign a weight $\omega(\mbe) = w(Q_e)$. Observe that $\omega(\mbe)\leq w(P_e) = w(e)$ where $e$ is the $\mst$ edge from which $P_e$ is subdivided. See Figure~\ref{fig:Step2} for an illustration.

\begin{claim}\label{clm:T-Prop} If a node $\varphi$ in $\Ttilde$  is non-isolated and non-connecting, then $\varphi$ is incident to an edge of weight at most $L_i$ in $\mathcal{T}$.
\end{claim}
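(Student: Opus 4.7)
The plan is to exploit the edge $\mbe' = (\varphi, \psi) \in \me_i$ witnessing that $\varphi$ is non-isolated; since $\mbe'$ corresponds to an edge of $E^\sigma_i$, its weight satisfies $\omega(\mbe') \leq L_i$. The fundamental cycle of $\mbe'$ in $\msttilde_i$ is the vehicle for applying \Cref{lm:cycle-Prop}, which bounds the weight of the parent edge of any virtual node on this cycle by $\omega(\mbe')$.

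First, I would trace the $\msttilde_i$ path $\varphi, \varphi_1, \varphi_2, \ldots, \psi$ (the side of the cycle excluding $\mbe'$) and identify the first node $\varphi^*$ after $\varphi$ that is non-virtual or connecting. The crucial point is that since $\varphi$ is non-connecting, its $\msttilde_i$-neighbor $\varphi_1$ cannot lie in $\mv_i^{\high+}$ (using that the subgraphs of $\mathbb{X}_1$ cover exactly $\mv_i^{\high+}$ by Item~(2) of \Cref{lm:Clustering-Step1}); hence $\varphi_1 \in \Ttilde$. As long as we keep passing through virtual, non-connecting nodes, the same argument pushes us deeper into $\Ttilde$ without ever exiting it. The walk must terminate: either we hit a non-virtual or connecting node midway, or we reach $\psi$ itself, which is non-virtual (any endpoint of an edge in $E^\sigma_i$ is a non-virtual vertex of $G$, so non-isolated implies non-virtual) and lies in $\Ttilde$ by the same inductive argument.

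Once $\varphi^*$ is found, the $\Ttilde$-subpath from $\varphi$ to $\varphi^*$ has only virtual internal nodes, so by construction it corresponds to an edge $\mbe = (\varphi, \varphi^*) \in \mathcal{T}$ of weight $\omega(\mbe) = w(Q_e)$, where $P_e$ is the common parent path of these internal virtual nodes. If at least one virtual internal node $\varphi_\ell$ exists, then $\varphi_\ell$ lies on the fundamental cycle of $\mbe'$ in $\msttilde_i$, and \Cref{lm:cycle-Prop} yields $w(e) \leq \omega(\mbe') \leq L_i$; since $Q_e \subseteq P_e$, this gives $\omega(\mbe) = w(Q_e) \leq w(P_e) = w(e) \leq L_i$. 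If no virtual internal node exists, i.e., $\varphi$ and $\varphi^*$ are $\msttilde_i$-adjacent, then $Q_e$ is a single $\msttilde$ edge whose weight is at most $\bar{w} \leq L_0 \leq L_i$.

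The main obstacle is the path-tracing step: carefully arguing that we never exit $\Ttilde$ before meeting $\varphi^*$, and that $\varphi^*$ must exist within $\Ttilde$ (using non-connecting-ness of the intermediate virtual nodes and the non-virtuality of $\psi$). Once this structural claim is established, the weight bound is immediate from \Cref{lm:cycle-Prop} together with the trivial inequality $w(Q_e) \leq w(P_e) = w(e)$, or from $w(Q_e) \leq \bar w \leq L_i$ in the degenerate subcase.
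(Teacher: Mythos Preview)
Your proposal is correct and follows essentially the same approach as the paper: use the edge $\mbe' \in \me_i$ witnessing non-isolation, look at its fundamental cycle in $\msttilde_i$, and invoke \Cref{lm:cycle-Prop} to bound the parent-edge weight of the virtual nodes immediately following $\varphi$ along that cycle. The paper's own proof is a three-sentence sketch that omits exactly the details you supply---namely, why the walk from $\varphi$ toward $\psi$ stays inside $\Ttilde$ until it meets a node of $\mathcal{T}$ (this is where the non-connecting hypothesis is actually used), and the degenerate subcase with no intermediate virtual node---so your write-up is a faithful and more complete version of the intended argument.
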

\begin{proof}
	By definition of a non-isolated node, $\varphi$ is incident to an edge, say $\mbe'$, in $\me_i$ by definition. One of the incident edges of $\varphi$ belongs the fundamental cycle of $\msttilde_{i}$ formed  by $\mbe'$.  It follows from \Cref{lm:cycle-Prop} that at least one edge in $\mathcal{T}$ of $\varphi$ must have a weight at most $\omega(\mbe')\leq L_i$. \qed
\end{proof}

We first apply the following construction to obtain a collection of trees, say $\mathbb{A}$, and then we will post-process the trees to obtain $\mathbb{X}_2$ as claimed in \Cref{lm:Clustering-Step2}.  We say that a tree $\Ttilde$ in $\Ftilde^{(2)}$ a long tree if its augmented diameter is at least $6L_i$. The construction of $\mathbb{A}$ is similar to Step 2 in LS algorithm, except that the radius of the BFS step in our construction is slightly larger. 

\begin{itemize}
	\item (Step i) Pick a long tree  $\Ttilde$ of $\Ftilde^{(2)}_i$ with at least one $\Ttilde$-branching node, say $\varphi$. If $\Ttilde$ has a $\Ttilde$-branching node that is non-isolated, we then choose $\varphi$ to be a non-isolated node.  We traverse $\Ttilde$ by BFS starting from $\varphi$ and {\em truncate} the traversal at nodes whose augmented distance from $\varphi$ is at least $2L_i$. The augmented radius (with respect to the center $\varphi$) of the subtree induced by the visited nodes is at least $L_i$ and at most $2L_i + \bar{w} + g\epsilon L_i \leq 2L_i + 2g\eps L_i$.  We then create a new tree $\Ttilde'$ induced by the visited nodes. 
\end{itemize}

After the construction in Step i, every tree in $\Ttilde$ either has augmented diameter at most $6L_i$ or is a path. 

An important property that we would like to have is that every tree in $\mathbb{A}$ either contains no non-isolated node or at least two non-virtual nodes. To this end, we need to post-process $\mathbb{A}$. Our postprocessing relies on the following structure of trees in $\mathbb{A}$.

\begin{figure}[h]
	\begin{center}
		\includegraphics[width=0.8\textwidth]{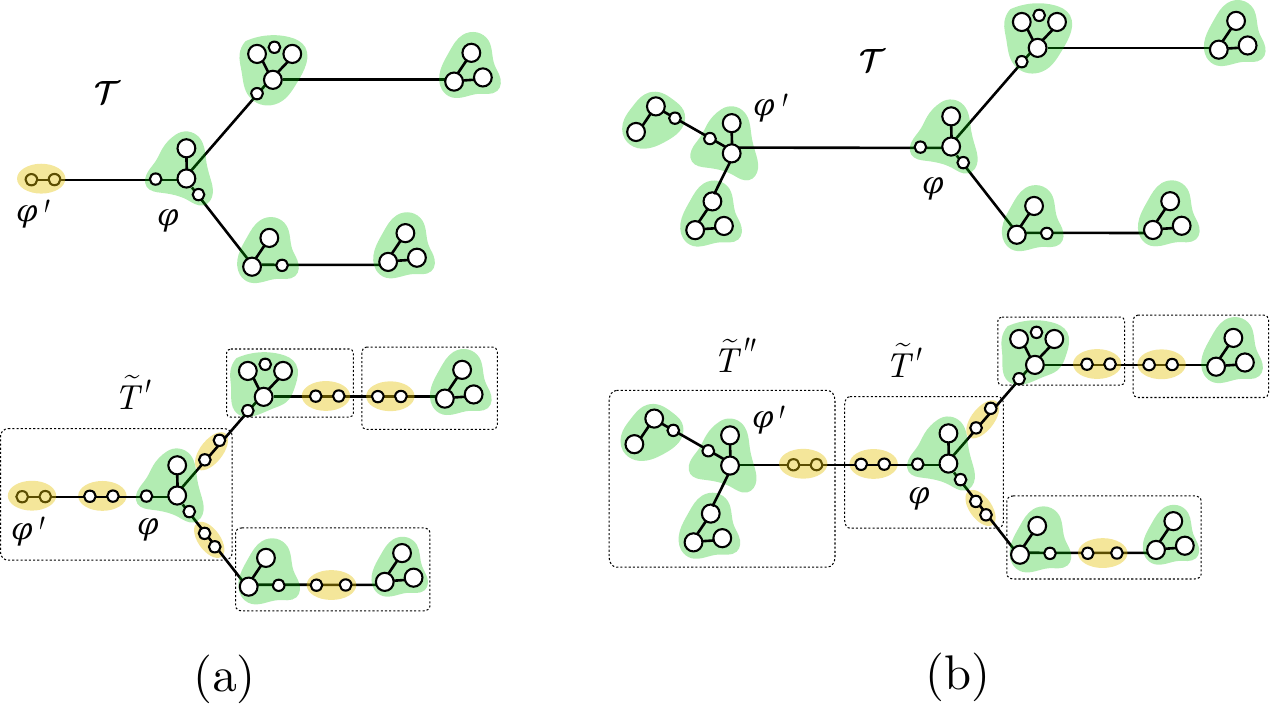}
	\end{center}
	\caption{Two cases in the proof of \Cref{clm:adj}. (a) $\varphi'$ is a virtual node in $\mathcal{T}$. Then it is a connecting node, and is grouped to $\Ttilde'$. (b) $\varphi'$ is a non-virtual node. Then it is grouped in $\Ttilde''$ that is adjacent to $\Ttilde'$. $\Ttilde''$ contains at least two non-virtual nodes (3 non-virtual nodes in this figure).}
	\label{fig:Step2-Proof} 
\end{figure}

\begin{claim}\label{clm:adj}  Let $\Ttilde' \in \mathbb{A}$ be a tree that contains exactly one non-isolated node, no connecting node, and no other non-virtual node. Then $\Ttilde'$ is adjacent to a tree $\mathcal{T}''\in \mathbb{A}$ that has at least two non-virtual nodes. 
\end{claim}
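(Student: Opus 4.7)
My plan is to exploit the cycle property of the minimum spanning tree (encapsulated by~\Cref{lm:cycle-Prop} and its consequence~\Cref{clm:T-Prop}), applied once at $\Ttilde'$ and once again at its soon-to-be-identified neighbor. Let $\varphi$ denote the unique non-isolated node of $\Ttilde'$. By hypothesis $\varphi$ is non-virtual, and because the BFS center is chosen to be non-isolated whenever such a branching node exists, $\varphi$ must in fact be the BFS center of $\Ttilde'$; in particular it is $\Ttilde$-branching. Applying~\Cref{clm:T-Prop} yields a $\mathcal{T}$-edge $\mbe=(\varphi,\varphi')$ with $\omega(\mbe)\le L_i$. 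I would then bound the augmented distance from $\varphi$ to $\varphi'$ along the $\Ttilde$-path $R$ realizing $\mbe$ by $L_i+O(\eps L_i)$: since $\omega(\mbe)=w(Q_e)$ for a subpath $Q_e$ of the corresponding subdivided $\mst$ edge, and the intermediate virtual clusters on $R$ partition a subpath of $Q_e$ so that their potentials sum to at most $w(Q_e)\le L_i$, while $\omega(\varphi),\omega(\varphi')\le g\eps L_i$, the total augmented length of $R$ is comfortably below the BFS cutoff of $2L_i$ for sufficiently small $\eps$.

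Since $\Ttilde'$ contains no connecting node and no non-virtual node other than $\varphi$, and since every node of $\mathcal{T}$ is either non-virtual or virtual-connecting by construction, neither $\varphi'$ nor any node of $R$ that belongs to $\mathcal{T}$ can lie in $\Ttilde'$. Were $R$ entirely unclaimed at the moment $\Ttilde'$ was formed, the BFS from $\varphi$ would absorb all of $R$ (every node of $R$ being within augmented distance $<2L_i$ of $\varphi$), including $\varphi'$, a contradiction. Hence some node of $R$ must have been claimed by an earlier tree $\Ttilde''\in\mathbb{A}$, and choosing the earliest-claimed node on $R$ shows that $\Ttilde''$ is adjacent to $\Ttilde'$ via an $\msttilde_i$-edge lying on $R$.

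The main obstacle is to show that $\Ttilde''$ contains at least two non-virtual nodes. Its BFS center $\varphi''$ is $\Ttilde$-branching and thus non-virtual, and because $\varphi$ was still present in the connected component being processed when $\Ttilde''$ was formed, the center-selection rule forces $\varphi''$ to be non-isolated as well. Reapplying~\Cref{clm:T-Prop} to $\varphi''$ produces a $\mathcal{T}$-edge of weight at most $L_i$ reaching some $\varphi^{\sharp}$ at augmented $\Ttilde$-distance $<2L_i$ from $\varphi''$; the crucial sub-observation is that $\varphi^{\sharp}\ne\varphi$, since otherwise BFS of $\Ttilde''$ (which predates that of $\Ttilde'$) would have absorbed $\varphi$, contradicting $\varphi\in\Ttilde'$. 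If $\varphi^{\sharp}$ is non-virtual and was absorbed into $\Ttilde''$, then together with $\varphi''$ it supplies the desired second non-virtual node. Otherwise---when $\varphi^{\sharp}$ is virtual-connecting, or was itself preempted by a yet earlier tree---I expect to iterate the argument by selecting a \emph{different} light $\mathcal{T}$-edge incident to $\varphi''$ (using that $\varphi''$, as a branching non-isolated node, possesses sufficiently many short $\mathcal{T}$-edges by the cycle property), with termination guaranteed by the finite number of predecessors of $\Ttilde'$ in the BFS ordering.
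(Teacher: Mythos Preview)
Your opening steps are on target: $\varphi$ is indeed the BFS center of $\Ttilde'$, and \Cref{clm:T-Prop} produces a $\mathcal{T}$-edge $(\varphi,\varphi')$ of weight at most $L_i$, putting $\varphi'$ within augmented $\Ttilde$-distance less than $2L_i$ of $\varphi$. But from here you take a detour that leaves a real gap.

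The simplification you miss is that $\varphi'$ itself must be \emph{non-virtual}. Every node of $\mathcal{T}$ is either non-virtual or virtual-and-connecting; if $\varphi'$ were virtual (hence connecting) it would be a leaf of the ambient tree in $\Ftilde^{(2)}_i$, reachable only through $R$, so the BFS from $\varphi$ would absorb it into $\Ttilde'$, contradicting the hypothesis that $\Ttilde'$ contains no connecting node. Thus $\varphi'$ is non-virtual, and since $\Ttilde'$ has no non-virtual node besides $\varphi$, the node $\varphi'$ must lie in some other tree $\Ttilde''\in\mathbb{A}$ (necessarily formed earlier, else the BFS from $\varphi$ would have claimed it). Because all internal nodes of $R$ are virtual of degree~$2$, the trees $\Ttilde'$ and $\Ttilde''$ meet along $R$ and are adjacent. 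This already supplies one non-virtual node of $\Ttilde''$, namely $\varphi'$ itself. For the second, the paper just notes that the center of $\Ttilde''$ is branching, hence non-virtual; if that center differs from $\varphi'$ you are done, and if it equals $\varphi'$ then the BFS from $\varphi'$ would have absorbed $\varphi$ (distance below $2L_i$), contradicting $\varphi\in\Ttilde'$. No second application of \Cref{clm:T-Prop} is needed.

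Your proposed iteration is where the argument actually breaks. You assert that $\varphi''$, being branching and non-isolated, ``possesses sufficiently many short $\mathcal{T}$-edges by the cycle property,'' but \Cref{clm:T-Prop} guarantees only \emph{one} light incident $\mathcal{T}$-edge, not several; there is no mechanism for switching to a ``different'' light edge when $\varphi^{\sharp}$ turns out to be virtual-connecting or to lie in a yet earlier tree. Nor does passing to that earlier tree help, since it need not be adjacent to $\Ttilde'$. The descent you sketch therefore has no well-founded termination, and the case analysis does not close.
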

\begin{proof}Let $\varphi$ be the non-isolated node in $\Ttilde'$. Observe that the center of $\Ttilde'$ is a branching node, and hence, is non-virtual. It follows that $\varphi$ must be the center of $\Ttilde'$ since otherwise, $\Ttilde'$ contains two non-virtual nodes, contradicting the assumption of the claim. Let $\varphi'$ be the neighbor in $\mathcal{T}$ of $\varphi$ whose edge $(\varphi',\varphi)$ has weight at most $L_i$ by \Cref{clm:T-Prop}. By construction, the radius of the traversal is at least $2L_i > L_i + 2\eps gL_i $ when $\eps \leq 1/4g$. If $\varphi'$ is a virtual node (see Figure~\ref{fig:Step2-Proof}(a)), then it must be connecting, and hence $\varphi'$ belong to $\Ttilde'$, contradicting that $\Ttilde'$ has no connecting node. Otherwise, $\varphi'$ is a non-virtual node and is grouped into another tree, say $\Ttilde''\in \mathbb{A}$ (see Figure~\ref{fig:Step2-Proof}(b)). Observe that $\Ttilde'$ and $\Ttilde''$ are adjacent, i.e., connected by an edge in $\msttilde_{i}$, since all nodes between $\varphi$ and $\varphi'$ have degree 2 as they are virtual nodes. We claim that $\Ttilde''$  must have at least two non-virtual nodes.  If $\varphi'$ is not a center of $\Ttilde''$, then $\Ttilde''$ contains at least two non-virtual nodes since its center is a non-virtual node. Otherwise, $\varphi'$ is the center of $\Ttilde''$, and hence, $\varphi$ would have been merged to $\Ttilde''$ during the construction of $\Ttilde''$, a contradiction. 	\qed
\end{proof}

\noindent Our construction in the next step is as follows.

\begin{itemize}
	\item (Step ii) Pick a tree $\Ttilde'$ in $\mathbb{A}$ that has one non-isolated node and no other non-virtual node. If $\Ttilde'$ contains a connecting node, say $\varphi$. Let $\my\in \mathbb{X}_1$ be a subgraph such that $\varphi$ has an $\msttilde_{i}$ edge $\mbe$ to a node in $\my$. We then add $\Ttilde'$ and $\mbe$ to $\my$, and add the set of nodes of $\Ttilde'$ to $\mathcal{Z}$. Otherwise, $\Ttilde'$ is adjacent to another tree $\Ttilde''\in \mathbb{A}$ that has at least two non-virtual nodes by \Cref{clm:adj}. We then add $\Ttilde'$ and the $\msttilde_{i}$ edge connecting $\Ttilde'$ and $\Ttilde''$ to $\Ttilde''$. We then repeat this step until it no longer applies. The set $\mathbb{X}_2$ is the set of trees in $\mathbb{A}$ after this step completed.
\end{itemize}

We now prove all properties in \Cref{lm:Clustering-Step2}. Step i is the same as Step 2 in LS algorithm and hence can be implemented in $O(\mv_i)$ following~\cite{LS21} (Lemma 5.2). Step ii can be implemented in $O(|\mv(\Ftilde^{(2)})|) = O(\mv_i)$ by following each step of the construction. Thus, the total running time is $O(|\mv_i|)$.

Item (1) of \Cref{lm:Clustering-Step2} and Item (4) follows directly from the construction.  By the construction in Step i, every tree has an augmented diameter at least $2L_i$ and at most $(2L_i + 2\eps L_i )$. The augmentation in Step ii is done via a star-like way, and hence, increases the diameter of each tree in $\mathbb{A}$ by at most $2(2L_i + 2\eps g L_i) + 2\bar{w} \leq 2(2L_i + 2\eps g L_i) + 2g\eps L_i = 4L_i + 6\eps L_i$. (Here we use the fact that $\bar{w}\leq L_{i-1} = \eps L_i \leq g\eps L_i$.) Thus, the final diameter is at most $2L_i + 2\eps g L_i +  4L_i + 6\eps L_i \leq 6L_i + 8\eps L_i \leq 14L_i < 20L_i $ when $\eps \leq 1/g$; this implies Item (2) of \Cref{lm:Clustering-Step2}.

For Item (3), note that each tree $\mx \in \mathbb{X}_2$ has augmented diameter at least $2L_i$, and that  every edge/node has a weight at most $\max\{\bar{w}, g\eps L_i\} = g\eps L_i$. It follows that  $|\mv(\mx)|\geq \frac{2L_i}{g\eps L_i} = \Omega(1/\eps)$, as claimed.

For Item (5), we observe that each subgraph $\my\in \mathbb{X}_1$ is augmented in Step ii via an $\msttilde_i$ edges an in a star-like way. Thus, $\adm(\my^{\aug})\leq \adm(\my) + 2(2L_i + 2\eps g L_i) + 2\bar{w} \leq \adm(\my)+  4L_i + 6\eps g L_i \leq 13 L_i+  4L_i + 6g \eps L_i \leq 23L_i < 24L_i$ when $\eps \leq 1/g$. This complete the proof of \Cref{lm:Clustering-Step2}.\qed
\end{proof}

\paragraph*{Acknowledgments.}  Hung Le is supported by a start up funding of University of Massachusetts at Amherst and the  National Science Foundation under Grant No. CCF-2121952. Shay Solomon is partially supported by the Israel Science Foundation grant No.1991/19 and by Len Blavatnik and the Blavatnik Family foundation.
\bibliographystyle{alpha}
\bibliography{spanner}
\pagebreak
\appendix

\end{document}